\patchcmd{\@maketitle}{\newpage}{}{}{} 
\newcommand{\red}[1]{#1}
\DeclareFontFamily{OT1}{rsfs}{}
\DeclareFontShape{OT1}{rsfs}{m}{n}{ <-7> rsfs5 <7-10> rsfs7 <10-> rsfs10}{}
\DeclareMathAlphabet{\mycal}{OT1}{rsfs}{m}{n}
\newcommand{\mcB}{{\mycal B}}
\newcommand{\mcL}{{\mycal L}}
\newcommand{\mcE}{{\mycal E}}
\newcommand{\mcG}{{\mycal G}}
\newcommand{\mcP}{{\mycal P}}
\global\let\AddToReset=\@addtoreset}
\newcounter{mnotecount}[section]
\newcommand{\R}{\mathbb R}
\newcommand{\N}{\mathbb N}
\newcommand{\supp}{\operatorname{supp}}
\theoremstyle{plain}
\newtheorem{thm}{Theorem}[section]
\newtheorem{lem}[thm]{Lemma}
\newtheorem{prop}[thm]{Proposition}
\newtheorem{cor}[thm]{Corollary}
\theoremstyle{definition}
\newtheorem{defn}[thm]{Definition}
\newtheorem{rem}[thm]{Remark}
\numberwithin{equation}{section}
\newcommand{\nn}{\nonumber}
\newcommand{\tr}{{\rm tr}}
\newcommand{\trg}{\tr_g}
\newcommand{\blue}[1]{{{#1}}}
\newcommand{\vertiii}[1]{{\left\vert\kern-0.25ex\left\vert\kern-0.25ex\left\vert #1 
    \right\vert\kern-0.25ex\right\vert\kern-0.25ex\right\vert}}
\newcommand{\const}{\mbox{\rm const}} 
\newcommand{\mnote}[1]
{\protect{\stepcounter{mnotecount}}$^{\mbox{\footnotesize
$
\bullet$\themnotecount}}$ \marginpar{
\raggedright\tiny
$\!\!\!\!\!\!\,\bullet$\themnotecount: #1} }
\newcommand{\absol}[1]{|#1|}
\newcommand{\absolg}[1]{|#1|_g}
\newcommand{\absolgal}[1]{\left|#1\right|_g}
\newcommand{\absolgdot}[1]{\left|#1\right|_{\dot{g}}}
\newcommand{\less}{\lesssim}
\newcommand{\divg}{{\rm div}_g}
\newcommand{\dvol}{d\mu_g}
\newcommand{ \scalpr}[2]{{\langle #1, #2 \rangle}_g}
\newcommand{\hb}{\mathbf{h}}
\newcommand{\bcov}{\widetilde{\nabla}}
\newcommand{\volg}{d\mu_g}
\newcommand{\phat}{\widehat{p}}
\newcommand{\pbar}{\underline{p}}
\newcommand{\nullhat}{\hat{0}}
\newcommand{\xhat}{\widehat{X}}
\newcommand{\nhat}{\widehat{N}}
\newcommand{\F}{\mathfrak{F}}
\newcommand{\riem}{\text{Riem}[g]}
\newcommand{\pnull}{\p_{e_0}}
\newcommand{\urescov}{\overline{D}}
\newcommand{\fixcov}{ \widehat{D} }
\newcommand{\difften}[3]{\Upsilon^#1_{#2 #3} } 
\newcommand{\p}{\partial}
\newcommand{\jnull}{\mathbf{J}}
\newcommand{\jvec}{\mathcal{J}}
\newcommand{\Lnorm}[2]{\lVert#1\rVert_{L^{#2}}}
\newcommand{\Hnorm}[2]{\lVert#1\rVert_{H^{#2}}}
\newcommand{\Hnormadj}[2]{\left\lVert#1\right\rVert_{H^{#2}}}
\newcommand{\Hgnorm}[2]{\lVert#1\rVert_{H^{#2}(M[g])}}
\newcommand{\Fnorm}[2]{\vertiii{#1}_{H^{#2}}}
\newcommand{\sasag}{\underline{\mathbf{g}}}
\newcommand{\absolsasag}[1]{|#1|_{\sasag}}
\newcommand{\sasacov}{\mathbf{D}}
\newcommand{\A}{\mathbf{A}}
\newcommand{\B}{\mathbf{B}}
\newcommand{\energy}[2]{\mcE_{#1,#2}(f)}
\newcommand{\energysq}[2]{\mcE^2_{#1,#2}(f)}
\newcommand{\maxT}{{}^{\mathrm{M}}{} \mathbf{T}}
\newcommand{\vlasT}{^{\mathrm{V}}{\!}\mathbf{T} }
\newcommand{\bmrho}{\bm{\varrho}}
\newcommand{\EMax}{\mathbb{E}}
\newcommand{\Etot}{\mathbf{E}_{\mathrm{tot}}}
\newcommand{\epsilontot}{\varepsilon_{\mathrm{tot}}}
\begin{document}

\title[Stable cosmologies with collisionless charged matter]{Stable cosmologies with collisionless charged matter}
\author[\textsc{H.~Barzegar}, \textsc{D.~Fajman}]{\textsc{Hamed Barzegar}, \textsc{David Fajman}}
\address{
Gravitational Physics\\
Faculty of Physics\\
University of Vienna\\
Boltzmanngasse 5, 1090 Vienna \\
Austria}
\email{Hamed.Barzegar@univie.ac.at, David.Fajman@univie.ac.at}

\date{\today}

\maketitle
\begin{abstract}
It is shown that Milne models (a subclass of Friedmann--Lemaître--Robertson--Walker (FLRW) spacetimes with negative spatial curvature) are nonlinearly stable
in the set of solutions to the Einstein-Vlasov-Maxwell system, describing universes with ensembles of collisionless self-gravitating, charged particles. The system contains various slowly decaying borderline terms in the mutually coupled equations describing the propagation of particles and Maxwell fields. The effects of those terms are controlled using a suitable hierarchy based on the energy density of the matter fields.   
\end{abstract}


\section{Introduction}
Studying the global dynamics of non-vacuum solutions to the Einstein equations is a major effort in general relativity with the aim to draw conclusions on isolated self-gravitating systems and cosmology.
In recent years several results have contributed substantially to the understanding of non-vacuum dynamics with self-gravitating matter models, which provide realistic features of matter in the actual universe such as relativistic fluids of various types and kinetic matter models. While fluids are known to require expansion of spacetime to avoid shock formation (cf.~\cite{RoSp13}) Vlasov matter shows a more regular behavior. When coupled to the Einstein equations, it is expected to exhibit only those types of singularity formation which are caused by gravity. Indeed, most nonlinear stability results for the vacuum Einstein equations have recently been generalized to the Einstein--Vlasov system, most prominently for the de~Sitter type spacetimes \cite{Ri13}, Minkowski spacetime \cite{BFJST20,FJS17,LT20,Ta17} and the Milne model \cite{AF17} and lower dimensional analogues \cite{F16-2,F17-2,F17,F18}.

The Einstein-Vlasov system describes spacetimes containing ensembles of self-gravitating collisionless particles and provides a realistic description of the large scale structure of spacetime. It admits various steady states modelling isolated self-gravitating matter configurations such as galaxies and galaxy clusters \cite{ReRe93,AFT15,AFT17} and similarly a variety of matter dominated cosmological models \cite{BFH20,Nu10}. In contrast to those models, in the stability analysis of \cite{Ri13,BFJST20,FJS17,LT20,Ta17,AF17} the matter distribution is considered to be small and disperses in the course of the evolution while the spacetime geometry asymptotes to the background vacuum geometry.

While the Einstein--Vlasov system models purely gravitative effects there exist generalizations, which include more detailed physical phenomena such as charged collisionless particles modeled by the Einstein--Vlasov--Maxwell system on which we focus in the following. 

\subsection{The Einstein--Vlasov--Maxwell system}

The Einstein--Vlasov--Maxwell system (EVMS)
\begin{equation}\label{EVM}
\begin{aligned}
\mathrm{Ric}[\mathbf{h}]-\frac12\mathrm{R}[\mathbf h]\cdot \mathbf h
 &=
  8 \pi \mathbf T
  \,,
 \\
  L_{\mathbf{h}, F}   f 
 &=0
 \,,
\\
\mathbf{d} F
&=
0
\,,
\\
\mathbf{d} \star F
&=
\star { J}
\,,
\end{aligned}
\end{equation}

describes spacetimes containing Maxwell fields and ensembles of collisionless charged particles, which interact via gravity and electromagnetism \cite{Dodelson2003,BinneyTremaine2008}.
Here, $\mathbf{h}$ denotes a Lorentzian metric on a given 4-manifold $\mathcal{M}$; $\mathbf{T} = {\vlasT} +{\maxT}$
is the total energy-momentum tensor with
 ${\vlasT}$ and $ {\maxT}$ denoting the energy-momentum tensors of the Vlasov matter and the Maxwell field, respectively; $L_{\mathbf{h}, F}$ is the  Liouville--Vlasov operator which consists of the geodesic spray and the Maxwell term; $\mathbf{d}$ and $\star$  denote the exterior derivative and the Hodge star operator on $\mathcal{M}$, respectively; $F$ is the Faraday tensor, and ${J}$ is the matter current $1$-form.

When setting the Faraday tensor $F$ and the charge to zero, \eqref{EVM} reduces to the Einstein--Vlasov system while setting the distribution function $f$ to zero it reduces to the Einstein--Maxwell system.

There exist only few results on the EVMS concerning stationary solutions \cite{AER09,T19,T20} and evolution in spherical symmetry \cite{N05}.
In particular, stability results have not been established yet in the class of solutions to the EVMS.

\subsection{Background spacetimes}

We consider a class of cosmological vacuum spacetimes which are Lorentz cones over a negative closed Einstein space $(M,\gamma)$ of dimension $3$ with Einstein constant $\kappa = -\tfrac{2}{9}$, i.e., $\text{Ric}[\gamma] = - \frac{2}{9} \gamma\,$,
where the value of $\kappa$ is chosen for convenience. Then, the \textit{Milne model}
\begin{equation}\label{eq: Milne geometry}
\big( (0, \infty) \times M, -dt^2 + \tfrac{t^2}{9} \cdot \gamma \big)
\end{equation}
is a solution to the vacuum Einstein equations. In contrast to the class of exponentially expanding spacetimes or those with power law inflation, the Milne model does not exhibit accelerated expansion. Its linear scale factor constitutes the threshold between accelerated and deccelerated expansion, which makes the model particularly interesting from the perspective of the regularizing effect of its expansion on various matter models \cite{FOW20}. Future stability of \eqref{eq: Milne geometry} under the Einstein flow has been established in the vacuum case in \cite{AM11} and for different types of matter in \cite{AF17,BFK19,FW19}. 

\subsection{Main result}
It is the purpose of the present paper to establish the first nonlinear stability result for the EVMS. In particular, we prove future nonlinear stability for the Milne model in the set of solutions to the EVMS. The main challenge in contrast to earlier stability results on the Milne model consists in the direct mutual coupling of the matter fields, which does induce various slowly decaying terms in the system. We discuss these aspects further below in detail. The main theorem reads as follows.

\begin{thm}\label{Theorem}
Let $(M, \gamma )$ be a compact, 3-dimensional negative Einstein manifold without boundary and Einstein constant $\kappa = - \tfrac{2}{9}$, and let $\bm{\epsilon} > 0$. Then, there exists a $\delta > 0$ such that for the rescaled initial data $ ({g_0}, {\Sigma_0},{f}_0, {\omega_0},{\dot{\omega}_0}) \in H^6 (M) \times H^5(M) \times H_{\rm{Vl}, 5,4,c}(TM) \times H^6 (M) \times H^5(M)$ at $t = t_0$ with compact momentum support of the initial particle distribution and
\begin{equation}
 ({g_0}, {\Sigma_0},{f}_0, {\omega_0},{\dot{\omega}_0}) \in   \mcB^{6,5,5,6,5}_{\delta} (\gamma, 0, 0, 0, 0)
 \,,
\end{equation}
the corresponding solution to the rescaled  Einstein--Vlasov--Maxwell system is future-global in time and future complete. Moreover, the rescaled metric and trace-free part of the second fundamental form  converge as 
\begin{equation}
 (g, \Sigma) \rightarrow (\gamma, 0) \quad \text{for} \quad  \tau \nearrow 0
 \,,
\end{equation}
with decay rates determined by $\bm{\epsilon}$ as in \eqref{eq: decay rates} below. In particular, the Milne model is future asymptotically stable for Einstein--Vlasov--Maxwell system in the class of initial data given above. 
\end{thm}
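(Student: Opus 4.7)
The plan is to follow the CMC spatial-harmonic (CMCSH) framework of Andersson--Moncrief, adapted to the case where the matter sector consists of two mutually coupled pieces. First I would rescale the Lorentzian metric $\mathbf{h}$ by a factor proportional to $t^{-2}$ so that the background Milne geometry \eqref{eq: Milne geometry} corresponds to the trivial configuration $(g,\Sigma)=(\gamma,0)$, introduce the mean curvature $\tau$ as time coordinate (with $\tau\nearrow 0$ in the far future), and parametrize the Faraday tensor $F$ through a potential $(\omega,\dot\omega)$ that reduces the Maxwell sector to a wave-type system on $M$. With these rescalings, the EVMS \eqref{EVM} becomes a coupled system: an elliptic-hyperbolic Einstein part for $(g,\Sigma)$, a transport equation for the rescaled distribution $f$ on $TM$ in which the Lorentz term contributes the borderline vertical operator $\widehat F\cdot\partial_p$, and wave equations for $(\omega,\dot\omega)$ sourced by the Vlasov current $\mathcal J$.

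\textbf{Bootstrap and energy hierarchy.} The heart of the argument is a continuity argument for a total energy of the schematic form
\begin{equation*}
\mathbf{E}_{\mathrm{tot}}(\tau)\;=\;\|g-\gamma\|_{H^6}^2+\|\Sigma\|_{H^5}^2+\mathcal E_{5,4}(f)+\|\omega\|_{H^6}^2+\|\dot\omega\|_{H^5}^2,
\end{equation*}
with the Vlasov term $\mathcal E_{5,4}(f)$ built from a Sasaki metric on $TM$ with polynomial momentum weights so that moments of $f$ control the energy density $\bm\varrho$ and current $\mathcal J$ entering the Einstein and Maxwell equations. I would bootstrap the bound $\mathbf{E}_{\mathrm{tot}}(\tau)\le C\delta\,|\tau|^{-\bm\epsilon}$ for a small loss $\bm\epsilon>0$ as in the anticipated decay rates, together with an \emph{improved} bound on $\bm\varrho$ coming from the conservation structure of the stress-energy tensor. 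The energy hierarchy is what separates this problem from the pure Vlasov case: the highest-order Maxwell energy is not small, but its couplings into the Einstein and Vlasov sectors appear only through lower-order norms of $F$ that can be controlled via $\bm\varrho$ and lower-order Maxwell norms, and dually for $f$. Deriving the differential inequality for each piece and arranging these couplings in triangular form is the technical core.

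\textbf{Step-by-step execution.} (i) Derive the rescaled CMCSH formulation of EVMS and verify local well-posedness in the asserted functional spaces, ensuring compact momentum support is propagated; (ii) obtain Andersson--Moncrief-type energy estimates for the linearized geometric variables around the negative Einstein manifold $(M,\gamma)$, using the non-degeneracy of the Lichnerowicz Laplacian to secure a spectral gap and hence exponential-in-$\log|\tau|$ decay modulo matter sources; (iii) propagate the Vlasov Sobolev norm $\mathcal E_{5,4}(f)$ along the characteristics of the charged Liouville operator, controlling the vertical term $\widehat F\cdot\partial_p$ by the Maxwell energy together with the polynomial momentum weights; (iv) close Maxwell energy estimates in $H^6\times H^5$ with source term $\mathcal J$ controlled by moments of $f$; (v) combine these by running the hierarchy of borderline terms back through the $\bm\varrho$-improvement, close the bootstrap strictly, and invoke the continuation criterion to conclude future-global existence and convergence $(g,\Sigma)\to(\gamma,0)$; (vi) deduce future completeness from the resulting decay via integration of the affinely-parametrized geodesic equation.

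\textbf{Main obstacle.} The hardest step is (iii)--(iv) taken jointly: the Lorentz-force term $\widehat F\cdot\partial_p$ acts as an unbounded vertical derivative whose size is dictated by the highest-order Maxwell norm, while the Maxwell equations are in turn driven by $\mathcal J$, which depends on the Vlasov norms. This creates a potential feedback loop of borderline terms whose naive estimate would lose $|\tau|^{-\bm\epsilon}$ at each iteration. Breaking the loop requires using that $\mathcal J$ can be controlled by moments of $f$ that decay strictly faster than $\mathcal E_{5,4}(f)^{1/2}$, essentially because the energy density of Vlasov matter on an expanding background behaves better than its Sobolev norms, and symmetrically that only lower-order norms of $F$ enter the vertical term after integration by parts in momentum. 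Implementing this bookkeeping at top order, without losing regularity or derivatives, is where the bulk of the technical work sits.
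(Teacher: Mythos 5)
Your proposal is correct and follows essentially the same route as the paper: CMCSH gauge with rescaled variables, corrected Andersson--Moncrief energies for $(g,\Sigma)$, weighted Sobolev energies on $TM$ for $f$, a gauged potential formulation for the Maxwell sector, and, crucially, breaking the Vlasov--Maxwell feedback loop by observing that the current $\mathcal J$ is controlled by the energy density, which obeys the continuity equation and therefore does not inherit the $\bm\epsilon$-loss of $\mathcal E_{5,4}(f)$. The only small divergence is in your description of the Vlasov side of the loop: in the paper the full top-order Maxwell norm $\vertiii{F}_{H^5}$ does enter the estimate for $\partial_T\mathcal E_{5,4}(f)$ (there is no gain in derivative order from the momentum integration by parts); the term is harmless because the Maxwell energy is shown to be \emph{uniformly bounded}, so the Vlasov energy merely acquires the admissible $e^{C\varepsilon T}$ growth encoded in the weights $\delta_E<\delta_{\mcE}$, $\delta_E=\delta_{\EMax}-2\delta_\alpha$ of the total energy.
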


We will provide precise definitions of the objects in Theorem~\ref{Theorem} further below. However, to give an overview on the theorem, we briefly summarize them. We use $(g, \Sigma, f, \omega, \dot{\omega})$ to denote the rescaled variables: the Riemannian metric,  the trace-free part of the second fundamental, the distribution function, the spatial vector potential, and its time-derivative, respectively. $\tau < 0 $ denotes the mean curvature and is related to the time variable in \eqref{eq: Milne geometry} by $t = -3 \tau^{-1}$. Therefore, $\tau \nearrow 0$ corresponds to $t \rightarrow \infty$. $H^k (M)$ with $k \geq 0$ denotes the $L^2$-based Sobolev norm and $H_{\rm{Vl}, 5,4,c}(TM)$ denotes the space of distribution functions of compact momentum support on $TM$ corresponding to the standard $L^2$-Sobolev norms (cf.~\cite{F17}). We denote by $\mcB^{6,5,5,6,5}_{\delta} (\cdot, \cdot, \cdot, \cdot, \cdot)$ a ball of radius $\delta$ centered at its arguments in the set of $H^6 (M) \times H^5(M) \times H_{\rm{Vl}, 5,4,c}(TM) \times H^6 (M) \times H^5(M)$. 

\begin{rem}
The main theorem is related to foregoing works and more general settings in the following sense.
\begin{enumerate}
\item 
Theorem~\ref{Theorem} in particular implies Theorem~1 of \cite{AF17} (the Einstein-Vlasov case) and Theorem~7.1 of \cite{BFK19} (the Einstein-Maxwell case).
\item We consider the EVMS with particles of identical rest mass $m>0$ and identical charge $q$. The present result however directly generalizes to a collection of ensembles with different masses and charges (also of opposite signs). For simplicity
of the presentation we restrict ourselves to this specific case.
\end{enumerate}
\end{rem}
While there exists a number of stability results  for relativistic Vlasov-Maxwell systems \cite{B18,B19,B20} the present result is to our knowledge the first stability result for the EVMS.
\subsection{Aspects of the proof}
The proof of Theorem~\ref{Theorem} is partially based on the techniques developed to control the distribution function in \cite{F16,F17} and \cite{AF17} and the techniques to control the Maxwell fields derived in \cite{BFK19}. However, the present problem poses various new difficulties as Maxwell fields and distribution function are directly coupled via the Maxwell term in the transport equation and the particle-current term in the Maxwell equation. In fact, both terms are principal terms in the sense that they decay a priori at the slowest rate when compared with other perturbative terms in the respective equation. Moreover, the decay rates of these terms induce a loss in decay for the corresponding energies of the Maxwell fields and distribution function, respectively. We refer to those terms in the following as \emph{borderline terms}. 

In the Einstein equations a borderline term enters in the lapse equation \eqref{lapse} via the rescaled pressure $\eta$. This has already been observed and resolved in \cite{AF17} in conjunction with the corresponding borderline term $(*)$ in the transport equation \eqref{eq: Vlasov equation}. Their mutual coupling was controlled in \cite{AF17} by using the continuity equation to obtain a sharp estimate for the energy density of the distribution function.

In the presence of charges an additional borderline term enters the transport equation. It is caused by the Maxwell field and denoted by $(**)$ in \eqref{eq: Vlasov equation}. It eventually yields a small $\varepsilon$-growth for the energy of the distribution function similar to the borderline term $(*)$ in the same equation. To establish this behavior it is necessary that the energy controlling the Maxwell field, which appears in this term, is uniformly bounded in time. A small loss for that energy would prevent the estimates from closing.\\
However, the Maxwell equations themselves have a borderline term marked by $(*)$ in \eqref{eq: elliptic Maxwell system}. This term is caused by the presence of charged particles. It can be controlled by the energy of the distribution function but would in this case pick up its small growth and prevent the bootstrap argument from closing. At this point it is important to use the fact that the respective term in the Maxwell equations is determined by the matter current, which, in turn is determined to leading order by the energy density, which, due to the continuity equation does not have a loss in decay in comparison with the energy of the distribution function. This observation is the key that enables us to close the hierarchy of estimates and thereby the bootstrap argument.

\subsection{Organization of the paper}
In Section~\ref{Sec: preliminaries} notations, geometric setup and energies are introduced.  In Section~\ref{Sec: estimating energy-momentum}, bounds on the energy-momentum tensor in terms of the energies are given. In Section~\ref{Sec: control of the momentum support} we provide the estimate for the momentum support. Section~\ref{Sec: energy estimate} provides the energy estimates for the distribution function and the Maxwell fields. The energy density is estimated in Section \ref{Sec: energy estimates divergence identity}. Lapse and shift vector are estimated in Section~\ref{Sec: elliptic estimates for X and N}. The energy estimates for the spatial geometry are given in Section~\ref{Sec: energy estimate for geometric objects}. In Section~\ref{Sec: total energy estimate} we provide the energy estimate for the total energy incorporating all previously derived estimates. Section~\ref{Sec: global existence} provides the proof of the main theorem based on the foregoing sections.

\subsection*{Acknowledgements} 
This work was supported in part by the Austrian Science Fund (FWF) via the project \emph{Geometric transport equations and the non-vacuum Einstein flow} (P 29900-N27).

\section{Preliminaries}\label{Sec: preliminaries}
In this section we recall various notations and facts on the geometric setup from \cite{AF17,BFK19}. 
\subsection{Notation} 
In this paper, $ \mathcal{M} = \R \times M$, with $M$ being a three-dimensional compact Riemannian manifold, denotes a four-dimensional Lorentzian manifold equipped with Lorentzian metrics $\mathbf{h}$ and $h$, and the associated covariant derivatives $\bcov$ and $\nabla$, respectively. Further, Riemannian metrics on $M$ will be denoted by $\gamma$, $\tilde{g}$, $g$,  the associated Christoffel symbols by $\widehat{\Gamma}[\gamma]$, $\widetilde{\Gamma}[\tilde{g}]$, $\Gamma[g]$, and the associated covariant derivatives by $\fixcov$, $\urescov$, $D$, respectively. We will also denote the determinants of a generic metric by $|\cdot|$.
The Laplacian of $g$ is then defined as $\Delta = \trg D^2$. Moreover,   $\dvol$ will stand for the Riemannian measure induced on $M$ by $g$. The Riemannian inner products on a tangent space $T_xM$ at a point $x$ is given by $\langle \cdot , \cdot  \rangle_{g}$ and $\langle \cdot , \cdot  \rangle_{\tilde{g}}$, respectively. The Hodge-Laplacian acting on differential forms on $M$ will be denoted by $\Delta_H = d^\ast d + d d^\ast$ where $d$ and $d^\ast$ denote the exterior derivative and the codiffernetial with respect to the metric $g$ on $M$, respectively. $\mcL_Y$ will represent the Lie-derivative in the direction of a vector field $Y$. We occasionally will use the notations $\nhat := N/3 - 1$ and $\xhat := X/N$. The Greek indices will stand for the spacetime coordinates on $\mathcal{M}$ and the Latin indices will denote the coordinates on $M$, whereas the coordinates on the tangent bundle $TM$ of $M$ will be denoted by the bold Latin letters $\mathbf{a}, \mathbf{b}, \mathbf{c}, \ldots \in \{ 1, \ldots , 6 \}$. Furthermore, we denote the standard ($L^2$-based) Sobolev norm with respect to the fixed metric $\gamma$ of order $\ell \geq 0$ by $\lVert \cdot \rVert_{H^\ell (M)}$ for all functions and symmetric tensor fields on $M$. For brevity we write $H^\ell \equiv H^\ell (M)$. Throughout this paper, $C$ denotes any positive constant which is uniform in the sense that it does not depend on the solution of the system once a smallness parameter $\delta$ for the initial data and the initial time $T_0$ are chosen. Moreover, if $\delta$ is decreased or $T_0$ is increased, $C$ will keep its value. Nevertheless, the actual value of $C$ may change from line to line.

\subsection{Background geometry}

Throughout the rest of the paper, we consider the Einstein space $(M, \gamma)$ with 
$\rm{Ric}[\gamma] =  - \frac{2}{9} \gamma$. Then, the Einstein operator
$\Delta_E$ (cf.~\cite{Besse-Book} and \cite{Kr2015}) associated with $\gamma$ acting on symmetric $2$-tensors is defined by
$\Delta_E \equiv - \widehat{ \Delta}_\gamma - 2 \mathring{R}_\gamma$, 
where $\widehat{ \Delta}_\gamma := \gamma^{i j} \fixcov_i \fixcov_j $ and $( \mathring{R}_\gamma u )_{i j} := \text{Riem}[\gamma]_{i k j \ell} u^{k \ell} $ for an arbitrary symmetric $2$-tensor $u$. 
The lowest positive eigenvalue of the Einstein operator $\lambda_0$ obeys $9 \lambda_0\geq 1$ in the present setting (cf.~ \cite{AF17}), which is relevant for the energy estimate for metric and second fundamental form. In addition, $\ker \Delta_E = \{ 0 \}$ holds in the present setting (cf.~\cite{AM11} and \cite{AF17}), which assures the coercivity of the same energy. 

\subsection{Spacetime, gauges and rescaled variables}

We consider the spacetime $( \mathcal{M}, \mathbf{h})$  and write the unrescaled Lorenztian metric $\mathbf{h}$ in the ADM formalism as
$$
\mathbf{h}
=
- \widetilde{N}^2 dt \otimes dt
+
\tilde{g}_{i j}
  \big(
  dx^i + \widetilde{X}^i dt 
 \big)
 \otimes
  \big(
  dx^j + \widetilde{X}^j dt 
 \big)
\,,
$$
where $\widetilde{N}$ and $\widetilde{X}$ are the lapse function and the shift vector field. Let  $\widetilde{\Sigma}$ and $\tau$ be the trace-free part  and the trace of the second fundamental form of the hypersurfaces $\{ t = \const. \}$ which we assume that all have constant mean curvature $\tau$. 
\subsubsection{CMCSH gauge}
In the spacetime setting introduced above, the constant mean curvature spatial harmonic (CMCSH) gauge (cf. \cite{AM2003}) is achieved by
\begin{equation}
\begin{aligned}
 t & = \tau 
 \,,
 \\
 \tilde{g}^{i j}
  \left(
      \widetilde{\Gamma}[\tilde{g}]^\ell_{i j} - \widehat{\Gamma}[\gamma]^\ell_{i j}
  \right)
  &=
  0
  \,.
\end{aligned}
\end{equation}

\subsubsection{Rescaled variables}

We further define rescaled quantities $g$, $N$, $X$, and $\Sigma$ by
\begin{equation}\label{eq: rescaled geometric variables}
 g_{ij}:=\tau^2\tilde{g}_{ij}
 \,,
 \quad
g^{ij}:=\tau^{-2}\tilde{g}^{ij}
\,,
 \quad
 N:=\tau^2\widetilde{N}
 \,,
\quad
 X^i:=\tau \widetilde{X}^i
 \,, 
\quad
 \Sigma_{ij}:=\tau \widetilde \Sigma_{ij}
\,,
\end{equation}
and we introduce a rescaled time $T$ by $ \tau := \tau_0 \cdot e^{- T}$ for $ T \in (-\infty, \infty)$  and some fixed $\tau_0 < 0$.
Note that $ \p_T = - \tau \p_\tau $ and $ d\tau  =  - \tau dT$.
We occasionally will use the dot  notation to denote the $T$-derivative.

\subsubsection{Rescaled geometry}
In the new time coordinate and with the rescaled variables the Lorentzian metric reads 
\begin{equation}\label{eq: conform metric}
 \mathbf{h} 
 =
 (\tau_0)^{-2} e^{2 T}
 \left[
  - N^2 dT^2
  +
  g_{i j}
  \big(
  dx^i - X^i dT
 \big)
 \otimes
  \big(
  dx^j - X^j dT
 \big)
 \right]	
 =:
 \tau^{-2} h
 \,.
\end{equation}

\vspace{0.5em}
\begin{rem}
The Milne solution with the choice of the time coordinate $\tau$ and $\tau_0 = -3$ reads
$$
 h_{\text{{\scriptsize Milne}}}
 =
 e^{2 T}
 \left(
  - dT^2 + \tfrac{1}{9} \gamma
 \right)
 \,.
$$
\end{rem}

\vspace{0.5em}
Now, let $\Pi$ be the second fundamental form of the hypersurfaces $\{ T = \const. \}$ with respect to the metric $h$. Then, it is readily seen that
\begin{equation}
 \Pi = 
 - \Sigma 
 + 
 N^{-1}
 \big(
  1 - \tfrac{N}{3}
 \big) g
 \,.
\end{equation}
One therefore finds that the future-directed timelike unit normal $e_0$ of the hypersurfaces $\{ T = \const. \}$ with respect to $h$ is
\begin{equation}
 e_0 = N^{-1} (\p_T + X)
 \,.
\end{equation}
With this definition the following lemma holds.
\begin{lem}[\cite{BFK19}, Lemma~2.1]\label{lem: lemma 2.1 BFK}
For any $f \in C^\infty (\mathcal{M})$ and $\xi \in C^\infty (\R, \Omega^1 (M) ) $
\begin{align*}
 \mcL_{e_0} g
 &=
 - 2 \Pi
 \,,
 \\
 \left[ \mcL_{e_0}, \Delta_g   \right] f
 &=
  2 \scalpr{\Pi}{D^2 f}
  +
 \scalpr{D \log N}{D \p_{e_0} f}
 +
 \scalpr{\mathcal{S}}{D f}
 \\
 & \quad 
 +
 2 \scalpr{\Pi}{D \log N \otimes Df}
 -
 \trg \Pi \scalpr{D \log N}{D f}
 \,,
 \\
 \left[ \mcL_{e_0}, \divg   \right] \xi
 &=
 2 \scalpr{\Pi}{D \xi}
 +
 \scalpr{D \log N}{\mcL_{e_0} \xi}
 +
 \scalpr{\mathcal{S}}{D f}
\\
 & \quad 
 +
 2 \scalpr{\Pi}{D \log N \otimes \xi}
 -
 \trg \Pi \scalpr{D \log N}{\xi}
 \,,
\end{align*}
where $ \mathcal{S} := 2 \divg \Pi - D \, \trg \Pi$.
\end{lem}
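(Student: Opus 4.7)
The plan is to prove all three identities by direct tensor calculus from the ADM decomposition and the definition of $\Pi$ as the second fundamental form of the foliation $\{T = \text{const}\}$ with respect to $h$. The first identity $\mathcal{L}_{e_0} g = -2\Pi$ is essentially the definition of $\Pi$: since $e_0 = N^{-1}(\partial_T + X)$ is the future-directed timelike unit normal to the slices and $g$ is the induced spatial metric, one has $\Pi_{ij} = -\tfrac12 (\mathcal{L}_{e_0} g)_{ij}$ by definition. Equivalently, one verifies it via the standard ADM evolution equation $\partial_T g_{ij} = -2N\Pi_{ij} + (\mathcal{L}_X g)_{ij}$ and division by $N$.

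For the two commutator identities, the central computation is the variation of the spatial Levi-Civita connection along $e_0$. Applying the standard formula for the variation of the Christoffel symbols with $\delta g = \mathcal{L}_{e_0} g = -2\Pi$,
\begin{equation*}
\mathcal{L}_{e_0}\Gamma^k_{ij} = \tfrac12 g^{kl}\bigl(D_i(\mathcal{L}_{e_0} g)_{jl} + D_j(\mathcal{L}_{e_0} g)_{il} - D_l(\mathcal{L}_{e_0} g)_{ij}\bigr) = -\bigl(D_i \Pi^k{}_j + D_j \Pi^k{}_i - D^k \Pi_{ij}\bigr),
\end{equation*}
and tracing with $g^{ij}$ yields $g^{ij}\mathcal{L}_{e_0}\Gamma^k_{ij} = -(2\divg\Pi - D\,\trg\Pi)^k = -\mathcal{S}^k$. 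This is precisely the origin of the tensor $\mathcal{S}$ in the stated identities.

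The commutators with $\Delta_g$ and $\divg$ are then obtained by expanding in local coordinates. For $[\mathcal{L}_{e_0}, \Delta_g] f$ one writes $\Delta_g f = g^{ij}(\partial_i \partial_j f - \Gamma^k_{ij}\partial_k f)$, applies $\mathcal{L}_{e_0}$, and collects terms: the variation of the inverse metric gives $\mathcal{L}_{e_0} g^{ij} = 2\Pi^{ij}$, contributing $2\scalpr{\Pi}{D^2 f}$; the variation of the Christoffel symbols contributes $\scalpr{\mathcal{S}}{Df}$ via the trace just computed; and the remaining terms arise because $\mathcal{L}_{e_0}$ does not commute with the spatial coordinate derivatives, the discrepancy being controlled by the differential of the lapse factor $N^{-1}$ appearing in $e_0 = N^{-1}(\partial_T + X)$. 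This last step produces the three terms involving $D\log N$, namely $\scalpr{D\log N}{D\partial_{e_0} f}$, $2\scalpr{\Pi}{D\log N \otimes Df}$, and $-\trg\Pi\,\scalpr{D\log N}{Df}$. The third identity is then obtained by the same computation applied to $\divg\xi = g^{ij} D_i \xi_j$, with $Df$ replaced by $\xi$ and $D\partial_{e_0} f$ replaced by $\mathcal{L}_{e_0}\xi$. The main technical subtlety of the proof is the careful bookkeeping of the noncommutativity of $\mathcal{L}_{e_0}$ with the coordinate derivatives and the verification that the $D\log N$ corrections combine into the precisely stated tensorial form; since the identities are purely local and tensorial, no global analytic input is needed.
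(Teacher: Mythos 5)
The paper does not actually prove this lemma; it imports it verbatim from \cite{BFK19} (Lemma~2.1), so your direct tensor computation is a legitimate self-contained substitute, and its overall architecture — first identity definitional, $\mathcal{S}$ from the variation of the spatial Christoffel symbols, the remaining terms from the failure of $\mcL_{e_0}$ to commute with spatial derivatives because of the $N^{-1}$ factor in $e_0$ — is the standard and correct one. One intermediate formula as you have written it is not quite right, though, and it sits exactly at the point you yourself identify as the technical crux. The variation formula $\mcL_V\Gamma^k_{ij}=\tfrac12 g^{kl}(D_i(\mcL_Vg)_{jl}+D_j(\mcL_Vg)_{il}-D_l(\mcL_Vg)_{ij})$ holds for spatial $V$ (or for $\p_T$), but not literally for $e_0=N^{-1}(\p_T+X)$: the correct route is to apply it to $(\p_T+\mcL_X)g=-2N\Pi$ and then divide by $N$, so that the trace produces $-N^{-1}\big(2D_i(N\Pi^{ik})-D^k(N\trg\Pi)\big)=-\mathcal{S}^k-2\Pi^{ik}\p_i\log N+\trg\Pi\, D^k\log N$. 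In other words, the Christoffel variation is responsible not only for $\scalpr{\mathcal{S}}{Df}$ but also for the pair $2\scalpr{\Pi}{D\log N\otimes Df}-\trg\Pi\scalpr{D\log N}{Df}$, while the noncommutation of $N^{-1}$ with $D$ acting on $df$ yields the remaining $\scalpr{D\log N}{D\p_{e_0}f}$; your attribution of the $D\log N$ terms entirely to the commutator with coordinate derivatives would leave the naive $-\mathcal{S}^k$ unmatched with the stated right-hand side. This is a bookkeeping correction rather than a gap in strategy — once $\mcL_{e_0}$ on time-dependent spatial tensors is consistently interpreted as $N^{-1}(\p_T+\mcL_X)$, the computation closes as claimed, and your replacement of $Df$ by $\xi$ in the third identity also (correctly) repairs the typo $\scalpr{\mathcal{S}}{Df}$, which should read $\scalpr{\mathcal{S}}{\xi}$ there.
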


Throughout this paper, the index $\nullhat$ refers to the vector field $e_0$ whereas $0$ refers to the time-function $\tau$. For the later use we summarize the connection coefficients with respect to the Lorentzian metric $h$, i.e. $ \Gamma[h]^\alpha_{\beta \gamma}$,  using the Koszul formula
\begin{equation}\label{eq: Christoffel symbols of h}
\begin{aligned}
  \Gamma[h]^{\nullhat}_{\nullhat \nullhat}   
  &= 0
  \,, 
  \\
  \Gamma[h]^{i}_{\nullhat \nullhat}  
  &=   
  g^{i j} N^{-1} \p_j N
  \,,
\end{aligned}
\quad
\begin{aligned}
  \Gamma[h]^{\nullhat}_{i \nullhat} 
   &= 0 
   \,,
  \\
   \Gamma[h]^{i}_{j \nullhat}
 &=
 -  g^{i k} \Pi_{ k j}
 \,,
\end{aligned}
\quad
\begin{aligned}
  \Gamma[h]^{\nullhat}_{i j}
 &=
 - \Pi_{i j}
 \,,
 \\
 \Gamma[h]^{i}_{ j k}
  &=
  \Gamma[g]^{i}_{ j k}
  \,.
\end{aligned}
\end{equation}

\subsubsection{Slice-adapted gauge}

On a vector potential $A \in \Omega^1 (\mathcal{M})$ the Lorenz gauge $\mathbf{d}  \star A = 0$ is usually imposed.
As discussed in \cite{BFK19}, it turns out to be difficult to work with the Lorenz gauge in the present context. Instead, we impose the \emph{slice-adapted} gauge, as introduced in \cite{BFK19}, which is adapted to the hypersurfaces of the foliation. This gauge determines $A$ uniquely by requiring that  the spatial components of $A$ associated to the foliation, denoted by the $1$-forms $\omega \in C^\infty (\R, \Omega^1 (M) ) $ with $\omega(\p_i) = A (\p_i)$, are divergence-free and orthogonal to the kernel of the Hodge Laplacian, and the component of $A$ in direction of the vector field $e_0$, denoted by $\Psi := A (e_0) \in C^\infty (\mathcal{M})$ has vanishing integral on each spatial slice. This is indeed achievable:
\begin{lem}[\cite{BFK19}, Lemma 5.4]\label{lem: slice-adapted gauge}
 Let $F \in \Omega^2 (\mathcal{M}) $ be exact. Then, there exists a unique form $A \in \Omega^1 (\mathcal{M})$ with $ \mathbf{d} A = F$  such that
\begin{equation}
 \divg \omega
 =
 0
 \,,
 \qquad
 \omega \perp \ker (\Delta_H)
 \,,
 \qquad 
 \int_M \Psi \dvol
 =
 0
 \,.
\end{equation}
\end{lem}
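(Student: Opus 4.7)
The approach is a gauge-fixing argument that combines Hodge decomposition on each spatial slice with a normalization in the time direction, exploiting the residual freedom in the choice of primitive of $F$.

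\textbf{Existence.} Starting from any primitive $A_0$ of $F$ (which exists by exactness), I would decompose $A_0 = \alpha_0\, dT + \omega_0$ relative to the foliation, where $\omega_0 \in C^\infty(\R, \Omega^1(M))$ and $\alpha_0 = A_0(\partial_T)$. The residual gauge freedom $A \to A + \mathbf{d}\chi + \eta$, with $\chi$ a function on $\mathcal{M}$ and $\eta$ a closed 1-form, preserves $\mathbf{d}A = F$ and transforms the spatial part as $\omega = \omega_0 + d\chi + \eta|_{\mathrm{spatial}}$. I would first enforce $\divg\omega = 0$ by solving the Poisson equation $\Delta_g \chi = -\divg \omega_0$ on each slice $(M, g(T))$, which is solvable (uniquely up to an additive function of $T$) because $\int_M \divg\omega_0\,\dvol = 0$ on the closed manifold $M$. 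The condition $\omega \perp \ker \Delta_H$ is then achieved using the closed-form gauge freedom to cancel the harmonic part of $\omega_0$, with the exactness of $F$ on $\mathcal{M}$ providing (via $H^1(\mathcal{M}) \cong H^1(M)$) the compatibility needed for a global cancellation across all slices.

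Once $\omega$ is fixed, the temporal component $\alpha = A(\partial_T)$ is determined by the remaining mixed components of $\mathbf{d}A = F$, which read $d\alpha = \partial_T \omega - F|_{Ti}$ with $F|_{Ti} := F_{Ti}\,dx^i$. The right-hand side is closed on each slice by the Bianchi identity $\mathbf{d}F = 0$: indeed one computes $dF|_{Ti} = \partial_T F|_{\mathrm{spatial}} = \partial_T d\omega = d\partial_T\omega$, so the differential vanishes. After the co-exact normalization of $\omega$, the right-hand side is also orthogonal to $\ker \Delta_H$, so it is exact and $\alpha$ is determined up to a spatially constant $c(T)$. The normalization $\int_M \Psi\,\dvol = 0$ with $\Psi = N^{-1}(\alpha + X^i\omega_i)$ then becomes an algebraic relation uniquely pinning down $c(T)$ at each time, namely
\[
c(T) = -\left(\int_M N^{-1}\,\dvol\right)^{-1} \int_M N^{-1}\big(\alpha_{\text{part}} + X^i\omega_i\big)\,\dvol ,
\]
where $\alpha_{\text{part}}$ is any particular solution of the preceding equation for $\alpha$.

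\textbf{Uniqueness.} Let $A_1, A_2$ both satisfy the three conditions, and set $\eta := A_1 - A_2$, so that $\mathbf{d}\eta = 0$. The spatial-spatial part of $\mathbf{d}\eta = 0$ implies that $\omega(\eta) := \omega_1 - \omega_2$ is closed on each slice; by hypothesis it is also divergence-free and orthogonal to $\ker \Delta_H$. Hodge theory on the closed manifold $M$ forces any 1-form that is simultaneously closed, co-closed, and orthogonal to harmonic forms to vanish, so $\omega_1 = \omega_2$. Then $\eta = \eta_T\,dT$, and $\mathbf{d}\eta = 0$ gives $d\eta_T = 0$, hence $\eta_T = \eta_T(T)$ is spatially constant. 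The third condition yields $\int_M \Psi(\eta)\,\dvol = \eta_T(T)\int_M N^{-1}\,\dvol = 0$, and since $N > 0$ the integral $\int_M N^{-1}\,\dvol$ is strictly positive, so $\eta_T \equiv 0$ and $A_1 = A_2$.

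\textbf{Main obstacle.} The delicate step is the simultaneous enforcement of $\divg \omega = 0$ and $\omega \perp \ker \Delta_H$: the harmonic part of $\omega_0$ is gauge-invariant under $A \mapsto A + \mathbf{d}\chi$ alone, and eliminating it requires the full closed-form gauge freedom together with the global exactness of $F$ on $\mathcal{M}$ (used through $H^1(\mathcal{M}) \cong H^1(M)$ and the Bianchi identity) to ensure that a single closed-form shift cancels the harmonic obstruction consistently at every time $T$. The remaining parts of the argument are routine Hodge theory on each slice and an ODE for $c(T)$.
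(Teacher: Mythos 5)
The paper offers no proof of this lemma (it is quoted from \cite{BFK19}), so I can only assess your argument on its own terms. Your uniqueness half is correct and complete: the difference $\eta=A_1-A_2$ is closed, its spatial part is closed, co-closed and orthogonal to $\ker\Delta_H$, hence zero by Hodge theory, and the remaining temporal part is a function of $T$ alone which the condition $\int_M\Psi\,\dvol=0$ kills because $N>0$.

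The existence half has a genuine gap precisely at the step you yourself flag as delicate, and the appeal to ``exactness of $F$ via $H^1(\mathcal{M})\cong H^1(M)$'' does not close it. The residual freedom $A\mapsto A+\eta$ with $\mathbf{d}\eta=0$ changes the spatial part by $\eta|_{\mathrm{sp}}(T)$, and closedness of $\eta$ on $\mathcal{M}$ forces $\partial_T\eta|_{\mathrm{sp}}=d\,\eta(\partial_T)$ to be exact on each slice, so the de Rham class $[\eta|_{\mathrm{sp}}(T)]$ is necessarily \emph{constant in $T$}. You can therefore cancel the harmonic part of $\omega_0(T)$ at every time only if the cohomology class of that harmonic part is itself $T$-independent, and this does not follow from exactness of $F$ (which only controls $[F]\in H^2(\mathcal{M})$). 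Equivalently, in your second step the form $\partial_T\omega-F|_{Ti}$ is indeed closed by the Bianchi identity, but your claim that it is orthogonal to $\ker\Delta_H$ ``after the co-exact normalization of $\omega$'' is unjustified: $\partial_T$ does not preserve co-exactness when $g=g(T)$, and $F|_{Ti}$ can carry a nontrivial harmonic part. Concretely, if the first Betti number of $M$ is positive, pick a nonzero harmonic $1$-form $\theta$ and a non-constant function $f(T)$ and set $F:=\mathbf{d}\big(f(T)\theta\big)=f'(T)\,dT\wedge\theta$. This $F$ is exact and has vanishing spatial part, so the three gauge conditions force $\omega(T)$ to be closed, co-closed and orthogonal to harmonics, i.e.\ $\omega\equiv0$; but any primitive of $F$ has spatial part with class $(f(T)-f(0))[\theta]+\mathrm{const}$, a contradiction. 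Hence the construction (and the lemma as you are proving it) requires either $\ker\Delta_H=\{0\}$ on $1$-forms or an extra hypothesis on $F$ ensuring the periods of $\iota_{\partial_T}F$ do not shift the class of the spatial potential; you must either invoke such a hypothesis or supply the missing argument. The remaining ingredients of your existence proof (solving $\Delta_g\chi=-\divg\omega_0$ slice-wise, and fixing the spatial constant $c(T)$ through the mean-value condition on $\Psi$) are fine.
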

\begin{rem}
We consider $\omega \in C^\infty (\R, \Omega^1 (M) )$ as an element in $\Omega^1(\mathcal{M})$ by requiring $\omega(e_0) = 0$. With this choice one could write $A = \omega + \Psi e^\ast_0$ where $e^\ast_0 \in \Omega^1(\mathcal{M})$ is dual to $e_0$.  
\end{rem}

\subsection{Rescaled Einstein equations} 

In this subsection we use the $(3+1)$-dimensional ADM formalism to establish the rescaled Einstein equations.

\subsubsection{Matter quantities}

The matter quantities which appear in the ADM formulation of Einstein equations read (cf., e.g.~\cite{Rendall-Book})
\begin{equation}\label{matter-quant}
 \tilde{\rho}
 :=
 \widetilde{N}^{2} \mathbf{T}^{00}
\,,
\quad
\tilde{j}_i
:=
\widetilde{N} \mathbf{T}^0_i
\,,
\quad
\widetilde{S}_{i j}
 :=
 8 \pi  
  \left[
   \mathbf{T}_{i j} - \frac{1}{2}  \left( \tr_{\mathbf{h}} \mathbf{T} \right) \tilde{g}_{i j}
  \right]
 \,,
\end{equation}
where $ \tilde{\rho}$ and $\tilde{j}_i$ are the unrescaled energy density and the matter current, respectively.

\subsubsection{Einstein equations}

The Einstein flow in CMCSH gauge reads
\begin{subequations}\label{eq: rescaled Einstein}
\begin{align}  
 R - \absolg{\Sigma}^2 + \frac{2}{3}
 &=
 4 \tau \cdot \rho
 \,,
 \\
 D^i \Sigma_{i j}
 &=
 \tau^2 j_j
 \,,
 \\ \label{lapse}
  \left(
  \Delta_g - \tfrac{1}{3}
 \right) N
 &=
 N
 \big(
  \absolg{\Sigma}^2 + \underbrace{\tau \cdot \eta}_{(*)}
 \big)
 -
 1
 \,,
 \\
 \Delta_g X^i 
 +
 \tensor{R}{^i _j} X^j
 &=
 2 D_j N \Sigma^{i j}
 -
 D^i  \left( \tfrac{N}{3} - 1 \right)
 +
 2 N \tau^2 j^i
 \nn
 \\
 &
 \quad
 -
 \red{ 2 }
 \left(
  N \Sigma^{j k} - D^j X^k
 \right)
 \left(
   \Gamma^i_{j k} - \widehat{\Gamma}^i_{j k}
 \right)
 \,,
 \\
 \p_T g_{i j}
 &=
 2 N \Sigma_{ i j}
 +
 2 \big( \tfrac{N}{3} - 1 \big) g_{i j}
 -
 \mcL_X g_{i j}
 \,,
 \\
 \p_T \Sigma_{i j}
 &=
 -2 \Sigma_{i j}
 -
 N
 \left(
  R_{i j} + \tfrac{2}{9} g_{i j} 
 \right) 
 +
 D^2_{i j} N
 +
 2 N \Sigma_{i k} \Sigma^k_j
 \nn
 \\
 &
 \quad
 -
 \tfrac{1}{3}  \left( \tfrac{N}{3} - 1 \right) g_{i j}
 -
 \left(
  \tfrac{N}{3} - 1
 \right)
 \Sigma_{i j}
 -
 \mcL_X \Sigma_{i j}
 +
 N \tau S_{i j}
 \,.
\end{align}  
\end{subequations}
Here, we use the notations $ R_{i j}  :=  \text{Ric}[g]_{i j}$,  $R := \trg  \text{Ric}[g] $ and the rescaled matter quantities
\begin{equation}
 \rho
 :=
 4 \pi \tilde{\rho} \cdot  \tau^{-3}
 \,,
 \quad
 \eta
  :=
 4 \pi
 \left(
  \tilde{\rho} + \tilde{g}^{i j} \mathbf{T}_{i j}
 \right)
 \cdot \tau^{-3}
 \,,
 \quad
 j^i
 :=
  8 \pi \tilde{j}^i \cdot \tau^{-5}
 \,,
 \quad
 S_{i j}
 :=
 8 \pi  \widetilde{S}_{i j} \cdot  \tau^{-1}
 \,.
\end{equation}
\begin{rem}
The term $(*)$ is a borderline term, which only yields no loss on the decay for the gradient of the lapse, when the decomposition for the pressure $\eta$ below is used, which identifies the energy density as the leading order term in $\eta$.
This has already been observed in \cite{AF17} and is in particular not due to the coupling between Maxwell and Vlasov part.
\end{rem}
We decompose the rescaled energy density as
\begin{equation}
 \eta
 =
 \rho
 +
 \tau^2 \underline{\eta}
 \,,
\end{equation}
where
\begin{equation}\label{eq: etabar}
 \underline{\eta}
 =
 4 \pi \tilde{g}^{i j} \mathbf{T}_{i j} \cdot \tau^{- 5}
 \,.
\end{equation}

For the later use we also denote $T^{a b} : = \tau^{-7} \mathbf{T}^{a b}$  and we consider the following lemma which is proved in \cite{AM2003} (cf.~ also \cite{AM11}).
\begin{lem}[\cite{AM11}, Lemma~6.2]
In the CMCSH gauge the following identity holds
 \begin{equation}
  R_{i j} + \frac{2}{9} g_{i j}
  =
  \frac{1}{2} \mathcal{L}_{g , \gamma} (g - \gamma)_{i j}
  +
  \mathbb{J}_{i j}
  \,,
 \end{equation}
 where
$$
 \mathcal{L}_{g , \gamma} (g - \gamma)
 :=
 - \Delta_{g, \gamma}  (g - \gamma)
 -
 2 \mathring{R}_\gamma (g - \gamma)
 \,,
 \quad \text{with} \quad
 \Delta_{g, \gamma}  (g - \gamma)_{i j}
 :=
 \frac{1}{\sqrt{|g|}} 
 \fixcov_k
  \left[
    g^{k \ell} \sqrt{|g|} \fixcov_\ell \, ( g - \gamma )_{ i j}
  \right]
 \,,
$$ 
and
$$
  \Hnorm{\mathbb{J}}{k-1}
  \leq
  C \red{ \Hnorm{g - \gamma}{k}^2 }
 \,. 
$$
\end{lem}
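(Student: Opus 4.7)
The plan is to compute $\mathrm{Ric}[g]$ by expressing the Levi-Civita connection of $g$ in terms of the background connection of $\gamma$ via the difference tensor
\[
\Gamma[g]^k_{ij} = \widehat{\Gamma}[\gamma]^k_{ij} + \Upsilon^k_{ij},\qquad
\Upsilon^k_{ij} = \tfrac{1}{2} g^{kl}\bigl(\widehat{D}_i g_{jl} + \widehat{D}_j g_{il} - \widehat{D}_l g_{ij}\bigr),
\]
and inserting this into the coordinate formula for the Ricci tensor. Because $\widehat{D}\gamma = 0$, we have $\widehat{D}g = \widehat{D}(g-\gamma)$, so every factor of $\widehat{D}g$ is already of size $(g-\gamma)$. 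The resulting Palatini-type identity reads
\[
R_{ij}[g] - R_{ij}[\gamma] = \widehat{D}_k \Upsilon^k_{ij} - \widehat{D}_{(i}\Upsilon^k_{j)k} + Q(\Upsilon,\Upsilon)_{ij},
\]
where $Q$ is algebraic and quadratic in the components of $\Upsilon$.

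Substituting the explicit form of $\Upsilon$ into $\widehat{D}_k \Upsilon^k_{ij}$ produces a principal part $-\tfrac{1}{2} g^{kl}\widehat{D}_k\widehat{D}_l (g-\gamma)_{ij}$ together with two "mixed" second-derivative contributions of the form $\tfrac{1}{2} g^{kl}\widehat{D}_k\widehat{D}_{(i}(g-\gamma)_{j)l}$, plus quadratic errors coming from differentiating $g^{kl}$. I eliminate the mixed terms using the CMCSH gauge condition $V^k := g^{ij}\Upsilon^k_{ij} = 0$: commuting $\widehat{D}_k$ and $\widehat{D}_i$ (which generates a $\mathrm{Riem}[\gamma]$-contraction as the commutator) rewrites the mixed pieces as $\tfrac{1}{2}(\widehat{D}_i V_j + \widehat{D}_j V_i)$ modulo curvature and genuine quadratic terms, and the gauge makes this vanish. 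A parallel manipulation handles $\widehat{D}_{(i}\Upsilon^k_{j)k}$: its linear part is expressed through traces of $\widehat{D}g$ which the same gauge cancellation absorbs, leaving only $\mathrm{Riem}[\gamma]$-contractions against $(g-\gamma)$ and further quadratics.

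Next I invoke the Einstein condition $\mathrm{Ric}[\gamma] = -\tfrac{2}{9}\gamma$ twice. First, $R_{ij}[\gamma] + \tfrac{2}{9} g_{ij} = \tfrac{2}{9}(g-\gamma)_{ij}$ contributes a clean linear term on the left-hand side after rewriting $R_{ij}[g] + \tfrac{2}{9} g_{ij} = \bigl(R_{ij}[g] - R_{ij}[\gamma]\bigr) + \bigl(R_{ij}[\gamma] + \tfrac{2}{9} g_{ij}\bigr)$. Second, the surviving curvature contractions from the previous step — schematically $\mathrm{Riem}[\gamma]_{ikjl}(g-\gamma)^{kl}$ together with $\mathrm{Ric}[\gamma]$-traces against $(g-\gamma)$ — collapse under $\mathrm{Ric}[\gamma] = -\tfrac{2}{9}\gamma$ into precisely $-\mathring{R}_\gamma(g-\gamma)_{ij} - \tfrac{2}{9}(g-\gamma)_{ij}$; the $\tfrac{2}{9}$-terms cancel, leaving $-\mathring{R}_\gamma(g-\gamma)_{ij}$ on the linear side. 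Combining with the principal part and noting that the divergence-form $\Delta_{g,\gamma}$ differs from $g^{kl}\widehat{D}_k\widehat{D}_l$ by first-order terms with $\widehat{D}g$-coefficients — hence by quadratic remainders when applied to $(g-\gamma)$ — yields exactly $\tfrac{1}{2}\mathcal{L}_{g,\gamma}(g-\gamma)_{ij}$, as required.

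All remainder terms constitute $\mathbb{J}_{ij}$ and are quadratic in $(g-\gamma)$ and its derivatives, of total derivative order at most two, with at most one factor carrying two derivatives. For $k \geq 3$, so that $H^{k-1}(M) \hookrightarrow L^\infty(M)$ and $H^{k-1}$ is a Banach algebra, standard Moser product estimates give $\|\mathbb{J}\|_{H^{k-1}} \leq C \|g-\gamma\|_{H^k}^2$, with $C$ depending on a uniform bound for $g^{-1}$ that follows from the smallness of $\|g-\gamma\|_{H^k}$. The main obstacle lies in the middle step: one must track precisely which $\mathrm{Riem}[\gamma]$-contractions survive the commutator manipulations and gauge cancellations, and verify that the specific Einstein constant $-\tfrac{2}{9}$ is what converts them into $-2\mathring{R}_\gamma$ exactly, with no leftover linear terms contaminating $\mathbb{J}$.
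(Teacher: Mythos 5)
Your argument is the standard spatial-harmonic-gauge decomposition of the Ricci tensor and is essentially the proof of the source the paper cites for this lemma (\cite{AM2003}, cf.\ \cite{AM11}); the paper itself offers no proof. The Palatini identity for the difference tensor, the elimination of the mixed second derivatives via the tension field $V^k=g^{ij}\Upsilon^k_{ij}=0$ after commuting $\widehat{D}$'s, and the two uses of $\mathrm{Ric}[\gamma]=-\tfrac29\gamma$ are all as in that reference, and your bookkeeping of where $\mathring{R}_\gamma$ and the $\tfrac29$-terms come from is correct (note only that $-2\mathring{R}_\gamma$ arises from the $\widehat{D}$-commutators irrespective of the Einstein condition; the constant $-\tfrac29$ is needed only to absorb the $\mathrm{Ric}[\gamma]$-traces and the term $R_{ij}[\gamma]+\tfrac29 g_{ij}$, as your middle paragraph in fact says).

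One step as written would not close, though it is easily repaired. You allow $\mathbb{J}$ to contain quadratic terms ``with at most one factor carrying two derivatives.'' A term of the schematic form $(g-\gamma)\ast\widehat{D}^2(g-\gamma)$ can only be bounded in $H^{k-1}$ by $\lVert g-\gamma\rVert_{L^\infty}\lVert g-\gamma\rVert_{H^{k+1}}+\lVert g-\gamma\rVert_{H^{k-1}}\lVert g-\gamma\rVert_{H^{k+1}}$, which is not controlled by $\lVert g-\gamma\rVert_{H^k}^2$; the stated Moser estimate would fail for such a term. The reason the lemma nevertheless holds is that the construction leaves no such terms: every second derivative of $(g-\gamma)$ is accounted for either in the divergence-form operator $\Delta_{g,\gamma}$, which is kept with its full $g^{k\ell}$ coefficient, or in the symmetrized gradient of $V$, which vanishes identically by the gauge. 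What survives in $\mathbb{J}$ is a sum of contractions of $\widehat{D}(g-\gamma)\otimes\widehat{D}(g-\gamma)$ and of $\mathrm{Riem}[\gamma]\ast(g-\gamma)\ast(g-\gamma)$ with coefficients depending smoothly on $g$, $g^{-1}$, $\gamma$, and for these the algebra property of $H^{k-1}$ (valid for $k-1>3/2$) indeed yields $\lVert\mathbb{J}\rVert_{H^{k-1}}\leq C\lVert g-\gamma\rVert_{H^k}^2$. You should replace the ``one factor with two derivatives'' clause by the verification that all second-derivative remainders cancel exactly.
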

An important fact is the uniform positivity of the lapse function, which is used throughout the remainder of the paper.
\begin{lem}\label{lem: uniformly positiveness}
The lapse function $N$ is uniformly positive. In particular, one has
\begin{equation}
 0 < N \leq 3
 \,.
\end{equation}
\begin{proof}
 The proof follows from applying the maximum principle to the elliptic equation for the lapse function.
\end{proof}
\end{lem}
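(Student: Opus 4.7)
The plan is to apply the standard maximum principle on the closed manifold $M$ to the elliptic equation \eqref{lapse}. First I would rewrite the lapse equation in the form
\begin{equation*}
 \Delta_g N = V N - 1, \qquad V := \tfrac{1}{3} + \absolg{\Sigma}^2 + \tau\cdot\eta,
\end{equation*}
and establish that $V \geq \tfrac{1}{3} > 0$ pointwise on $M$. For this one uses that $\tau < 0$ in the Milne CMC setting (we are foliating with negative mean curvature approaching $0$ from below), so $\tau^{-3} < 0$; together with the strong energy condition $\tilde{\rho} + \tilde g^{ij}\mathbf{T}_{ij} \geq 0$, which is satisfied by both the Vlasov and Maxwell energy-momentum tensors and hence by their sum $\mathbf{T}$, the definition $\eta = 4\pi(\tilde{\rho} + \tilde g^{ij}\mathbf{T}_{ij})\cdot \tau^{-3}$ yields $\eta \leq 0$ and therefore $\tau\cdot\eta \geq 0$. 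Since $\absolg{\Sigma}^2 \geq 0$ as well, this gives the claim $V \geq \tfrac{1}{3}$.

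With $V > 0$ in hand, the two bounds follow from evaluating the equation at extrema of $N$. For the upper bound, at a point $x_M \in M$ where $N$ attains its maximum one has $\Delta_g N(x_M) \leq 0$, so $V(x_M)\, N(x_M) \leq 1$, and combining with $V \geq \tfrac{1}{3}$ gives $N(x_M) \leq 1/V(x_M) \leq 3$. For positivity, suppose for contradiction that $N$ attains a value $\leq 0$ somewhere; then at a minimum point $x_m$ one has $N(x_m) \leq 0$ and $\Delta_g N(x_m) \geq 0$, whereas the equation forces $\Delta_g N(x_m) = V(x_m) N(x_m) - 1 \leq -1 < 0$, a contradiction. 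Hence $N > 0$ on all of $M$.

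There is no real obstacle here beyond bookkeeping of signs: the only nontrivial input is that the combination $\tau\cdot\eta$ is nonnegative, which is the content of the energy condition combined with the sign of $\tau$. Once that is noted, the remainder is a routine application of the maximum principle on the compact manifold $M$, and is uniform in the solution as long as the Milne CMC gauge is in force.
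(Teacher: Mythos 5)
Your argument is correct and is exactly the argument the paper compresses into its one-line proof: rewrite \eqref{lapse} as $\Delta_g N = VN-1$ with $V=\tfrac13+\absolg{\Sigma}^2+\tau\eta\geq\tfrac13$ (using $\tau<0$ and the nonnegativity of $\tilde{\rho}+\tilde g^{ij}\mathbf{T}_{ij}$ for the Vlasov and Maxwell contributions), then evaluate at the extrema of $N$ on the closed manifold $M$. The sign bookkeeping for $\tau\cdot\eta$ is the only nontrivial input, and you have it right.
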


\subsection{Vlasov matter}

In this subsection we give a quick introduction to the Vlasov matter and then rescale the momentum and finally derive the rescaled transport equation.

\subsubsection{The mass-shell relation}\label{subsub: mass-shell relations}

Throughout this paper we assume that all particles have the same positive mass $m=1$ modeled by a distribution function with the domain
\begin{equation}\label{eq: mass-shell}
 \mcP
 =
 \{
  (x, \widetilde{\mathbf{p}}) :
  \absol{ \widetilde{ \mathbf{p} } }_{\mathbf{h}}^2 = -1 \,, \tilde{p}^0 < 0 
 \}
 \subset
 T \mathcal{M}
 \,,
\end{equation}
where $\widetilde{\mathbf{p}} := \tilde{p}^\mu \p_\mu$ with $\p_0 = \p_\tau$ and $\tilde{p}^\mu$ being the canonical coordinates on the tangent bundle of $\mathcal{M}$. One can associate an energy-momentum tensor to a distribution function $\tilde{f} : \mcP \rightarrow [0, \infty ) $ by
\begin{equation}
{\vlasT}^{\alpha \beta} [ \tilde{f} ] (x)
 :=
 \int_{ \mcP_x} \tilde{f} \tilde{p}^\alpha \tilde{p}^\beta  d\mu_{\mcP_x} 
 \,,
\end{equation}
where $d\mu_{\mcP_x}$ is the Riemannian measure induced on $\mcP_x$ by the Lorentzian metric $\mathbf{h}$ at a given point $x$, and is given by
$$
 d\mu_{\mcP_x} 
 :=
 \frac{ \sqrt{ |\mathbf{h}|} }{- \tilde{p}_0} d\tilde{p}^1 \wedge d\tilde{p}^2 \wedge d\tilde{p}^3
 =
 \frac{ \widetilde{N}   }{- \tilde{p}_0} \, d\mu_{\tilde{p}}
 \,,
$$
where
$$
 d\mu_{\tilde{p}} 
 := 
  |\tilde{g}|^{\tfrac{1}{2}}  d\tilde{p}^1 \wedge d\tilde{p}^2 \wedge d\tilde{p}^3
  \,.
$$
We consider the projection map $ \text{pr}  : \mcP \rightarrow T\mathcal{M}$ which does $(t, x^i, \tilde{p}^0, \tilde{p}^i) \mapsto (t, x^i, \tilde{p}^i)$. Then, instead of using $\tilde{f}$ we deal with the function $f := \tilde{f} \circ \text{pr}^{-1} $ which we refer to as \emph{distribution function} for the remainder of the paper.

\subsubsection{The rescaled momentum}

We rescale the momentum vector field as
\begin{equation}\label{eq: scaling}
 \tilde{p}^a = \tau^2 p^a
 \,.
\end{equation}
As a result we have $\p_{\tilde{p}^i} =  \tau^{-2} \p_{p^i} $. The unrescaled mass-shell relation in \eqref{eq: mass-shell} gives (cf.,~e.g., \cite{Sarbach2014})
\begin{equation}
 \tilde{p}^0
 =
  \big(
  \widetilde{N}^2 - \absol{\widetilde{X}}_{\tilde{g}}^2
  \big)^{-1}
  \left[
   \langle \widetilde{X} , \tilde{p}  \rangle_{\tilde{g}}
   +
   \sqrt{ \langle \widetilde{X} , \tilde{p}  \rangle_{\tilde{g}}^2 + \big(  \widetilde{N}^2 - \absol{\widetilde{X}}_{\tilde{g}}^2  \big) (1 + \absol{\tilde{p}}_{\tilde{g}}^2 ) }
  \right]
  \,.
\end{equation}
Rescaling the variables in the previous equation and denoting $p^0 := \tau^{-2} \tilde{p}^0$, we find
\begin{equation}\label{eq: p0}
 p^0
 =
 N^{-1}
 \big(
  1 - \absolg{\xhat}^2
 \big)^{-1}
 \big[
 \tau \scalpr{\xhat}{p} + \phat
 \big]
 \,,
\end{equation}
or equivalently,
\begin{equation*}
 p^0
 =
 \frac{1 + \tau^2 \absolg{p}^2}{N \big[ \phat - \tau \scalpr{\xhat}{p} \big]}
 \,,
\end{equation*}
where
\begin{equation}
 \phat
 :=
 \sqrt{ \tau^2 \scalpr{\xhat}{p}^2 
   +
 \big(  1 - \absolg{\xhat}^2  \big) (1 + \tau^2 \absolg{p}^2 )  
 }
 \,.
\end{equation}
Furthermore, we find 
\begin{equation}\label{eq: p0 phat}
 \tilde{p}_0
 =
 \mathbf{h}_{0 \alpha} \tilde{p}^\alpha
 =
 - \widetilde{N} \phat
 \,,
\end{equation}
which, in particular, implies that $ d\mu_{\mcP_x}  =  \phat^{\phantom{.} -1} \, d\mu_{\tilde{p}} $. We further use the notation
$
 \underline{p} := N p^0.
 $
\begin{rem}
 Note that \eqref{eq: p0} reduces to $p^0 = \sqrt{1 + \absolg{p}^2}$ when $X = 0$ and $N = 1$. These are the values that correspond to the background geometry.
\end{rem}

\subsubsection{The transport equation}

The transport equation in the presence of the electromagnetic field reads
\begin{equation}\label{eq: Vlasov unrescaled}
 \tilde{p}^\mu \p_\mu  f
 -
 \left(
  \widetilde{\Gamma}^i_{\mu \nu} \tilde{p}^\mu \tilde{p}^\nu
  +
  q \tilde{p}^\alpha \tensor{F}{_\alpha ^ i}
 \right)
 \p_{\tilde{p}^i} f
 =
 0
 \,,
\end{equation}
where $ \widetilde{\Gamma}^\alpha_{\beta \gamma} \equiv  \widetilde{\Gamma}[\mathbf{h}]^\alpha_{\beta \gamma}$.
We wish to rewrite \eqref{eq: Vlasov unrescaled} in terms of the rescaled variables. We start with the unrescaled Christoffel symbols of $\mathbf h$ (cf.~\cite{AF17})
\begin{equation}
\begin{aligned}
  \widetilde{\Gamma}[\mathbf{h}]^a_{b c}
  &=
  \Gamma[g]^a_{b c}
  +
  N^{-1}
  \left(
   \Sigma_{b c} + \tfrac{1}{3} g_{b c}
  \right)
  X^a
 \,,
 \\ 
 \widetilde{\Gamma}[\mathbf{h}]^a_{0 0}
  &=
  \tau^{-2} \Gamma^a
  \,,
  \\
  \widetilde{\Gamma}[\mathbf{h}]^a_{0 b}
  &=
  \tau^{-1}
  \left(
   - \delta^a_b + \Gamma^a_b
  \right)
  \,,
\end{aligned} 
\end{equation}
where 
\begin{subequations}\label{eq: Gammas}
\begin{align}
 \Gamma^a
 &:=
 - \p_T X^a
 -
 X^a
 -
 \frac{2}{3} (N - 3) X^a
 +
 X^b D_b X^a
 -
 2 N \Sigma^a_b X^b
 +
 N D^a N
 \nn
 \\
 &
 \qquad
 +
 \left[
  N^{-1} \p_T N
  -
  N^{-1} X^b D_b N
  +
  N^{-1} 
   \left(
    \Sigma_{b c} + \tfrac{1}{3} g_{b c}
   \right)
   X^b X^c
 \right] X^a
 \,, \label{eq: Gamma ast}
 \\
 \Gamma^a_b
 &:=
 - N \Sigma^a_b
 +
 \frac{1}{3} \delta^a_b (3 - N) 
 +
 D_b X^a
 -
 N^{-1} X^a D_b N
 +
 N^{-1} 
   \left(
    \Sigma_{b c} + \tfrac{1}{3} g_{b c}
   \right)
   X^c X^a
  \,. 
\end{align}  
\end{subequations}
We use notations $\Gamma^\ast$ and $\Gamma^\ast_\ast$ when we want to suppress the indices of the above two objects. The rescaled transport equation finally takes the following form, using the natural horizontal and vertical derivatives on $TM$, $\A_a :=\p_a - p^i \Gamma^k_{a i} \B_k$ and $\B_a := \p_{p^a}$,

\begin{equation}\label{eq: Vlasov equation}
\begin{split}
 \p_T f
 =&
 \tau N   \frac{p^a}{\pbar} \A_a f
 -
 \underbrace{\tau^{-1} \frac{\pbar}{N} \Gamma^a \B_a f}_{(*)}
 +
 2 p^a \B_a f
 -
 2 p^c \Gamma^a_c  \B_a f
\\
&
 -
 \tau
  \left(
   \Sigma_{b c}
   +
   \tfrac{1}{3} g_{b c}
  \right)
  X^a \frac{p^b p^c}{\pbar} \B_a f
  +
  \underbrace{\tau q \F^a \B_a f}_{(**)}
 \,,
\end{split}
\end{equation}
where
\begin{equation}\label{eq: eq for frakF}
 \F^i
 :=
 h^{i j}
  F_{0 j}
  +
  \frac{p^a}{p^0}  
  \left(
     h^{i j}   F_{a j}
   +
   \frac{\tau}{N} F_{a 0} \xhat^i
 \right)
 \,,
\end{equation}
with $h^{i j} = g^{i j} - N^{-2} X^i X^j$, which can be read off from the metric $h$.

\begin{rem}
The terms marked by $(*)$ and $(**)$ are borderline terms. In particular, the term marked by $(**)$ originates from the Maxwell field and the slow decay is caused by the first term in \eqref{eq: eq for frakF}. In combination with the factor $\tau$ the term $F_{0j}$ appears in the energy for the Maxwell field \eqref{eq: L2 norm of F}, which itself is $\varepsilon$-small but does not decay. As a consequence the term $(**)$ in the Vlasov equation yields a growth for the energy of the distribution function of the type $\exp(C\varepsilon T)$, however, only if we obtain sharp estimates for the energy of $F$. Therefore, we need to avoid that the loss for the energy of the distribution function couples back into the equation for the Maxwell field.
\end{rem}

\subsection{Maxwell equation}

We start with the the Maxwell equation which in Heaviside--Lorentz units reads
\begin{equation}\label{eq: maxwell}
 \hb^{\lambda \mu} \bcov_\lambda F_{\mu \alpha}
 =  -  q \int f \tilde{p}_\alpha \,  d\mu_{\mcP_x} =:{J}_\alpha.
\end{equation}

Recall that the index $0$ refers to the vector field $\p_\tau$.
Rescaling the result according to \eqref{eq: rescaled geometric variables} and \eqref{eq: scaling} yields
\red{
\begin{equation*}
{J}_\alpha
 =
 - q  \int f \frac{ \tilde{p}_\alpha }{\widehat{p}}  \, d\mu_{\tilde{p}}
 =
 - q \tau^3  
 \int f  \frac{\tilde{p}_\alpha}{\widehat{p}}   \,  d\mu_p
 \,,
\end{equation*}
}
where we used 
\begin{equation*}
 d\mu_{\tilde{p}}
 =
 \tau^{3} |g|^{\frac{1}{2}}  dp^1 \wedge dp^2 \wedge dp^3 
 =:
 \tau^{3} \, d\mu_p
 \,.
\end{equation*}
Thus, by \eqref{eq: p0 phat} we have 
\begin{equation}\label{eq: J null}
{J}_0
 =
  \tau q N
 \int  f  \,     d\mu_p
 =:
 \tau \jnull
 \,,
\end{equation}
and using
\begin{equation}
 \frac{\tilde{p}_a}{\widehat{p}}
 =
  \frac{\tau^{-1} X_a p^0 + g_{ab} p^b}{ \widehat{p}}
 =:
 - N \mathcal{P}_a
\end{equation}
we find
\begin{equation}\label{eq: J vec}
{J}_i
 =
 q N \tau^3 
 \int  f \, \mathcal{P}_i  \,   \,  d\mu_p
 =:
  \tau^3 \jvec_i
 \,.
\end{equation}
\red{
The definition of $\mathcal{P}_a$ is motivated by the relation $ \B_a p^0 = -  \tau^2 \mathcal{P}_a$ (cf.~Appendix~\ref{App: momentum derivatives}).
}
Keeping \eqref{eq: conform metric} in mind, for the left-hand side of \eqref{eq: maxwell} we get 
\begin{equation*}
 \hb^{\lambda \mu} \bcov_\lambda F_{\mu \alpha}
 =
 \hb^{\lambda \mu}
\left[
 \nabla_\lambda F_{\mu \alpha}
 -
 \left(
 \tensor{\widetilde{\Gamma}}{^\beta_{\mu\lambda}}
 -
 \tensor{\Gamma}{^\beta_{\mu\lambda}}
 \right) 
  F_{\beta \alpha}
 -
  \left(
 \tensor{\widetilde{\Gamma}}{^\beta_{\alpha \lambda}}
 -
 \tensor{\Gamma}{^\beta_{\alpha \lambda}}
 \right)  
 F_{\mu \beta}
\right]
=
 \hb^{\lambda \mu}
 \nabla_\lambda F_{\mu \alpha}
=
\tau^2 h^{\lambda \mu}
 \nabla_\lambda F_{\mu \alpha}
\,,
\end{equation*}
\red{
where we used the facts from Appendix~\ref{App: Diff Tensor}.
}
Hence,
\begin{equation}\label{eq: rescaled maxwell}
 h^{\lambda \mu}
 \nabla_\lambda F_{\mu \alpha}
 =
\tau^{-2} \tilde{J}_\alpha
 \,.
\end{equation}
We compute the following for  Proposition below, which gives the rescaled Maxwell equations
\begin{equation}
\begin{aligned}
 F_{0 i}
 &=
 - \tau^{-1}
 \left[
  \p_T \omega_i 
  + 
  \p_i \left( N \Psi - X^j \omega_j \right) 
 \right]
 \,,
 \\
 F_{\nullhat i}
 &=
 N^{-1}
 \left(
  \p_T \omega_i + X^j \p_j \omega_i
 \right)
 -
 \p_i \Psi
 \,,
 \\
 F_{i j}
 &=
 \p_i \omega_j - \p_j \omega_i
 \,,
\end{aligned}
\end{equation}
and
$$
 \tau^{-2} \tilde{J}_{\nullhat}
 =
 \tau^{-2} \tilde{J} (e_0)=
\tau^{-2}
 \left(
 N^{-1}
  \tilde{J}_T + \widehat{X}^j \tilde{J}_j
 \right) =
\tau^{-2}
 \left(
  - \tau N^{-1} \tilde{J}_0 
  +
  \widehat{X}^j \tilde{J}_j
 \right) =
 - N^{-1} \jnull
 +
 \tau \widehat{X}^j \jvec_j
 \,,
$$
where we used $\tilde{J}_T = \tilde{J} (\p_T) = \tilde{J}( - \tau \p_\tau) = -\tau \tilde{J}_0 $.

\begin{prop}\label{prop-EMS}
 Let $F \in \Omega^2(\mathcal{M})$ be exact which solves \eqref{eq: rescaled maxwell} and let $A \in \Omega^1(\mathcal{M})$ be a vector potential for $F$ which satisfies the slice-adapted gauge conditions in Lemma~\ref{lem: slice-adapted gauge} with the same $\Psi$ and $\omega$ given there. Then, we have
\begin{subequations}\label{eq: elliptic Maxwell system}
\begin{align}
 \Delta_g \Psi
 &=
 - \divg (\Psi \cdot D\log(N))
 -
 [\mcL_{e_0}, \divg]\omega
  - 
   \underbrace{N^{-1} \jnull}_{(*)}
   +
   \tau \widehat{X}^j \jvec_j  
 \,,\label{eq: elliptic of psi}
 \\
 (\mcL_{e_0} (\mcL_{e_0} \omega))_k
 +
  \Delta_H \omega_k 
 &=
 \p_k( \p_{e_0} \Psi )
 +
 \p_{e_0} \Psi \cdot \frac{\p_k N}{N}
 +
 \Psi \cdot  \p_k( \p_{e_0} \log(N) )
 \nn
 \\
 &
 \quad
 +
 \red{
 g^{i j} \frac{\p_i N}{N} F_{j k}
 }
 +
 \trg \Pi \cdot F_{\nullhat k}
 +
\red{2} g^{ij} \Pi_{i k} F_{j \nullhat}
 -
 \tau \jvec_k 
 \,. \label{eq: wave eq. of omega}
\end{align}
\end{subequations}
\begin{rem}
The term $(*)$ is borderline in the following sense. As defined in \eqref{eq: J null} it is given up to a factor by the integral of the distribution function. A straightforward estimate by the energy of the distribution function would induce a small growth for the field $\Psi$, which would prevent the estimates from closing. To obtain a uniform bound on $\Psi$ it is crucial to observe that the leading order term in $(*)$ is in fact the energy density, which is established in Lemma \ref{lem: J estimates}.

\end{rem}

\begin{proof}[Proof of Proposition \ref{prop-EMS}]
From \eqref{eq: maxwell} and using the slice-adapted gauge, we get (cf.~Eqs.~(5.16)--(5.18) in \cite{BFK19})
\begin{align}
g^{ij} D_i F_{j \nullhat}
 &=
 \Delta_g \Psi
 +
 \divg (\Psi \cdot D\log(N))
 +
 [\mcL_{e_0}, \divg] \omega
 \,,
 \\
 g^{ij} D_i F_{j k}
 &=
 - \Delta_H \omega_k 
 +
 \trg \Pi \cdot F_{\nullhat k}
 +
 g^{ij} \Pi_{i k} F_{j \nullhat}
 \,,
 \\
 D_{\nullhat} F_{\nullhat k}
 &=
 (\mcL_{e_0} (\mcL_{e_0} \omega))_k
 -
 \p_k( \p_{e_0} \Psi )
 -
 \p_{e_0} \Psi \cdot \frac{\p_k N}{N}
 -
 \Psi \cdot  \p_k( \p_{e_0} \log(N) ) 
 \nn
 \\
 &
\quad
  -
 g^{i j} \frac{\p_i N}{N} F_{j k}
 +
 g^{ij} \Pi_{ i k} F_{\nullhat j} 
 \,.
\end{align}
Inserting the results into \eqref{eq: rescaled maxwell} and using \eqref{eq: J null} and \eqref{eq: J vec} finishes the proof.
\end{proof}
\end{prop}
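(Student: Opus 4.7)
The starting point is the rescaled Maxwell equation \eqref{eq: rescaled maxwell}, namely $h^{\lambda\mu}\nabla_\lambda F_{\mu\alpha}=\tau^{-2}\tilde J_\alpha$, which I would project component-wise in the frame $(e_0,\partial_i)$ adapted to the CMC foliation. In this frame one has $h^{\nullhat\nullhat}=-1$, $h^{\nullhat i}=0$, $h^{ij}=g^{ij}$, so the left-hand side decomposes as
\begin{equation*}
h^{\lambda\mu}\nabla_\lambda F_{\mu\alpha}=g^{ij}\nabla_i F_{j\alpha}-\nabla_{\nullhat} F_{\nullhat\alpha}.
\end{equation*}
Antisymmetry of $F$ eliminates $F_{\nullhat\nullhat}$, so the $\alpha=\nullhat$ slot retains only the spatial divergence $g^{ij}\nabla_i F_{j\nullhat}$, while the $\alpha=k$ slot retains both terms. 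These two projections will give rise to the elliptic equation \eqref{eq: elliptic of psi} for $\Psi$ and the wave-type equation \eqref{eq: wave eq. of omega} for $\omega$, respectively.

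Next, I would substitute $A=\omega+\Psi\,e^\ast_0$ into $F=\mathbf{d}A$ and use the Christoffel symbols \eqref{eq: Christoffel symbols of h} to expand $g^{ij}D_iF_{j\nullhat}$, $g^{ij}D_iF_{jk}$ and $D_{\nullhat}F_{\nullhat k}$ in terms of $\omega,\ \Psi,\ \Pi$ and $D\log N$. The key structural inputs are the component formulas for $F_{0i}$, $F_{\nullhat i}$, $F_{ij}$ listed just above the proposition; the slice-adapted gauge conditions from Lemma~\ref{lem: slice-adapted gauge} (in particular $\divg\omega=0$, which kills the bare divergence of $\omega$); and Lemma~\ref{lem: lemma 2.1 BFK}, which exchanges $\mcL_{e_0}$ with $\divg$ and $\Delta_g$ at the price of commutator terms. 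These commutators are precisely what surface as $[\mcL_{e_0},\divg]\omega$ in \eqref{eq: elliptic of psi} and as the $\Pi$- and $D\log N$-dependent corrections in \eqref{eq: wave eq. of omega}.

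For the right-hand side, I would rewrite $\tilde J_{\nullhat}=N^{-1}\tilde J_T+\widehat X^j\tilde J_j$ using $\tilde J_T=-\tau\tilde J_0$ and convert to the rescaled currents via \eqref{eq: J null} and \eqref{eq: J vec}, giving $\tau^{-2}\tilde J_{\nullhat}=-N^{-1}\jnull+\tau\widehat X^j\jvec_j$ and $\tau^{-2}\tilde J_k=\tau\jvec_k$. Assembling the two sides then yields the stated identities.

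The main obstacle I expect is bookkeeping of the connection. The equation begins life with the spacetime derivative $\bcov$ of $\mathbf{h}$, and one first has to replace it by $\nabla$ of the rescaled metric $h$; because $F$ is a 2-form, the contraction with the difference tensor $\Upsilon$ between $\bcov$ and $\nabla$ that arises from the conformal relation \eqref{eq: conform metric} drops out by antisymmetry, so this is in the end a one-off verification. After that step, the difficulty is organisational rather than analytical: carefully tracking how the lapse-, shift- and $\Pi$-dependent pieces from the three BFK-type component identities reassemble into the commutator $[\mcL_{e_0},\divg]\omega$ and the correction terms of \eqref{eq: wave eq. of omega} in exactly the form stated.
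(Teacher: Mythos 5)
Your proposal follows essentially the same route as the paper: project the rescaled Maxwell equation onto $e_0$ and the spatial directions, expand the three components $g^{ij}D_iF_{j\nullhat}$, $g^{ij}D_iF_{jk}$, $D_{\nullhat}F_{\nullhat k}$ via the slice-adapted gauge and the commutator identities of Lemma~\ref{lem: lemma 2.1 BFK} (the paper simply cites these expansions from \cite{BFK19}), and convert $\tilde J_{\nullhat}$, $\tilde J_k$ to $\jnull$, $\jvec$ exactly as you describe. The only minor imprecision is that the difference-tensor contribution does not vanish term-by-term by antisymmetry alone but through a cancellation specific to the conformal rescaling; since you flag this as a verification to be carried out and the conclusion is correct, this is not a gap.
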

\subsection{Local existence}
A local-existence-theory for the system \eqref{eq: rescaled Einstein}, \eqref{eq: Vlasov equation} and \eqref{eq: elliptic Maxwell system} can be derived based on the ideas of \cite{BFK19,F16}.

\begin{prop}
Consider CMC-initial data 
\begin{equation}(g_0,\Sigma_0,f_0,\omega_0,\dot\omega_0)\in \mathscr{B}_{\delta}^{6,5,5,6,5}(\gamma,0,0,0,0)
\end{equation}
with $\delta>0$ sufficiently small.
Let $I \subset \R$ be a compact interval with $T_0 \in I$. Then, there exists a $T>0$ and a unique solution $(g,k,f,\omega,\dot\omega)$
to the Einstein--Vlasov--Maxwell system with $J = (T_0 - T, T_0 + T)$ launched by this initial data. $T$ depends continuously on the $H^6 (M)$-, $H^5(M)$-, $H_{\rm{Vl}, 5,4,c}(TM)$-, $H^6 (M)$-, and $H^5(M)$-norm of ${g}_0$, ${\Sigma}_0$, ${f}_0$, ${\omega}_0$, and ${\dot\omega}_0$, respectively. The following regularity properties hold
\begin{align*}
 (g, k), (\omega,\dot \omega) 
 & \in 
 C_b (J, H^6 \times H^5) \cap C_b^1 (J, H^5 \times H^4)
 \,,
 \\
 f 
 &\in 
 C_b (J, H_{\rm{Vl}, 5,4,c}(TM))
 \,.
\end{align*}
Moreover, the solution is either global in time, i.e. $T=\infty$ or 
$$
\limsup _{t\nearrow T}\left(\|g-\gamma\|_{H^6}+\|\Sigma\|_{H^5}+\|\omega\|_{H^6}+\|\dot\omega\|_{H^5}+\||f\||_{5,4}\right)\geq 2\delta. 
$$

\end{prop}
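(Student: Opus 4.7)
The plan is to construct the local solution via a Picard-type iteration scheme tailored to the mixed elliptic-hyperbolic-transport nature of the rescaled Einstein--Vlasov--Maxwell system in CMCSH and slice-adapted gauge, closely following the strategy of \cite{BFK19,F16}. Given an iterate $(g^{(n)}, \Sigma^{(n)}, f^{(n)}, \omega^{(n)}, \dot\omega^{(n)})$ in the target regularity class, I would determine the next iterate by (i) solving the elliptic equations for lapse and shift in \eqref{eq: rescaled Einstein} with coefficients frozen at step $n$, invoking Lemma~\ref{lem: uniformly positiveness} and standard elliptic theory on the compact manifold $M$ (the operators are uniformly elliptic for small perturbations of $\gamma$, and the zero kernel of $\Delta_E$ together with $9 \lambda_0 \geq 1$ gives the needed coercivity); (ii) solving the elliptic equation \eqref{eq: elliptic of psi} for $\Psi^{(n+1)}$ under the normalization $\int_M \Psi \dvol = 0$ from Lemma~\ref{lem: slice-adapted gauge}; (iii) integrating the transport equation \eqref{eq: Vlasov equation} for $f^{(n+1)}$ along characteristics, which simultaneously yields propagation of compact momentum support on a short time interval; (iv) solving the wave-type equation \eqref{eq: wave eq. of omega} for $\omega^{(n+1)}$ as a symmetric hyperbolic system in $(\omega, \dot\omega)$; and (v) advancing the geometric variables $(g^{(n+1)}, \Sigma^{(n+1)})$ via the evolution equations in \eqref{eq: rescaled Einstein}.

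Next, I would establish uniform bounds. On a short interval $|T - T_0| < T^\ast$, energy estimates at the top level — using the Sobolev product rule (facilitated by the embedding $H^2(M) \hookrightarrow L^\infty(M)$ in dimension three) — show that the iterates remain in a slightly enlarged ball $\mcB^{6,5,5,6,5}_{2\delta}(\gamma, 0, 0, 0, 0)$, with $T^\ast$ depending only on $\delta$ and the initial norms. The Vlasov norm $H_{\rm{Vl}, 5, 4, c}(TM)$ is specifically designed (see \cite{F17}) to accommodate the horizontal and vertical derivatives $\A_a, \B_a$ that appear when one commutes derivatives through the transport operator, and the compact support assumption keeps $\pbar$ and $\phat$ bounded away from degeneracies, so the coefficient $\tau N p^a/\pbar$ and the analogous factors in \eqref{eq: Vlasov equation} behave well. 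Contraction is then obtained at one derivative less, in a space such as $H^5 \times H^4 \times H_{\rm{Vl}, 4, 3, c}(TM) \times H^5 \times H^4$; this is the standard quasi-linear device of proving boundedness at high regularity and contraction at low regularity, and the difference equations satisfy linear estimates of the same form as the original system. Passing to the limit yields a unique local solution with the stated time regularity.

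The continuation/blow-up alternative follows from iterating the short-time result: as long as the $H^6 \times H^5 \times \|\cdot\|_{5,2} \times H^6 \times H^5$-norm of the solution remains below $2\delta$, one may restart the existence argument with a uniform step size, so the maximal time of existence can only be finite if the stated norm reaches $2\delta$. The main obstacle, in my view, is not in the elliptic-hyperbolic closure but in handling the mutual coupling of the Vlasov and Maxwell parts at finite regularity: the term $(**)$ in \eqref{eq: Vlasov equation} involves $\F^a$, which through \eqref{eq: eq for frakF} carries the components of $F$ obtained from $(\omega, \dot\omega)$ and indirectly from $\Psi$; in turn the current source $(\jnull, \jvec)$ in \eqref{eq: elliptic Maxwell system} depends on $f$ through integration in momentum, requiring careful use of the compact support and the Sobolev trace-type estimates on $TM$ (as in \cite{F17}). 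For local existence on a short interval this causes no genuine smallness issue — the borderline decay structure only matters for the global argument — but tracking the regularity through this coupling is the point at which care is most needed.
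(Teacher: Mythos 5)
Your proposal is a genuinely different route from the one taken in the paper. You iterate directly on the system in CMCSH \emph{and} slice-adapted gauge, treating \eqref{eq: elliptic of psi} as an elliptic equation for $\Psi$ and \eqref{eq: wave eq. of omega} as a hyperbolic system for $(\omega,\dot\omega)$ at each Picard step. The paper instead first proves local existence with the Maxwell sector in \emph{Lorenz} gauge, where the full 4-potential satisfies a single wave equation $\square_{H,h}A_\alpha=\tau^{-2}J_\alpha$; this slots directly into the elliptic--hyperbolic framework of \cite{F16} for the Einstein--Vlasov case, with $A$ controlled at the same regularity as $g$. Only afterwards is the solution gauge-transformed to the slice-adapted gauge (as in \cite{BFK19}), and the continuation criterion is then obtained by showing that smallness in slice-adapted gauge propagates back to smallness in Lorenz gauge via the wave equation for the gauge function $\Lambda$ with trivial data. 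What the paper's detour buys is precisely the avoidance of two issues your direct scheme must confront: (i) the slice-adapted gauge conditions of Lemma~\ref{lem: slice-adapted gauge} are \emph{non-local} ($\divg\omega=0$, $\omega\perp\ker\Delta_H$, $\int_M\Psi\,\dvol=0$), so you must either show they are preserved by your iteration or project at each step and verify this does not destroy the contraction; and (ii) the source $\p_k(\p_{e_0}\Psi)$ in \eqref{eq: wave eq. of omega} requires one more derivative of $\Psi$ in the time direction than your step (ii) provides — the resolution, which the paper states explicitly at the end of its proof and which your sketch omits, is that $\p_{e_0}\Psi$ itself satisfies an elliptic equation obtained by differentiating \eqref{eq: elliptic of psi}, so ellipticity restores the needed regularity. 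With these two points supplied, your scheme should close; the treatment of the Vlasov sector, the boundedness-high/contraction-low device, and the continuation argument by restarting with a uniform step are consistent with what the paper (implicitly, via \cite{F16,AF17}) does.
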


\begin{proof}
The local existence theory can be established similar to the corresponding theorems for the Vlasov case in \cite{AF17} and the Maxwell case in \cite{BFK19}. When considering the Maxwell equations in Lorenz gauge we obtain a wave-type equation for the 4-potential in the form
$$
 \square_{H, h} A_\alpha = \tau^{-2}{J}_\alpha
 \,,
$$
where $\square_{H, h}$ denotes the Hodge wave operator of the metric $h$.

Complementing the Einstein-Vlasov system with Maxwell terms by this equations does not change the structure of the elliptic-hyperbolic system with respect to the Einstein-Vlasov case considered in \cite{F16}. The additional wave-type equation can be treated at the same order of regularity as the evolution equations for the spatial metric g (which is decomposed into first order equations). 
A local-existence theorem analogue to Theorem 4.2
of \cite{F16} follows by a similar proof, where the vector potential is controlled in the same regularity class as the metric. In consequence we obtain a local-existence theory for the EVMS in CMCSH-Lorenz gauge.\\
Then, as shown in \cite{BFK19}, we perform a gauge-transformation for the vector potential of the Maxwell field to obtain a potential obeying the slice-adapted gauge while conserving the regularity of the potential and thereby of the solution. We conclude at this point that we have existence and uniqueness of a local-in-time solution of the EVMS in CMCSH-slice-adapted gauge in the respective regularity class.\\
It remains to prove the continuation criterion. For this purpose we need to ensure that a solution, which is small in CMCSH-slice-adapted gauge remains small when its vector potential is transformed to the Lorenz gauge. We therefore consider the gauge transformation of the vector potential
\begin{equation}
A_\mu=A'_\mu+\partial_\mu \Lambda,
\end{equation}
such that $A'_\mu$ fulfils the Lorenz gauge. Then $\Lambda$ solves the wave equation
\begin{equation}
 \Box_h\Lambda
 =
 -
 \p_{e_0} \Psi
 -
  N^{-1} (1 - h^{i j} h_{i j}) \left(  \p_{e_0} \Lambda - \Psi \right)
 -
 \xhat^i \xhat^j D_i \omega_j 
 +
 h^{i j} \Pi_{i j} \Psi
 +
 N^{-1} g^{i j} \p_j \omega_i
 \,,
\end{equation}
with trivial initial data.  Here, we used \eqref{eq: Christoffel symbols of h} and relations in Appendix~\ref{App: Diff Tensor}. From this follows that smallness of the solution in slice-adapted gauge in the considered regularity class implies smallness of the solution in Lorenz gauge in the corresponding regularity class on short time intervals, as $\Lambda$ is determined by the right-hand side of the wave equation. Note therefore that $\partial_{e_0}\Psi$ fulfils the elliptic equation \eqref{eq: elliptic of psi}, which improves its regularity by ellipticity of the equation. This implies the continuation criterion in slice-adapted gauge.
\end{proof}
\subsection{Coupling of the equations and consequences for the long-time behavior}
Having derived all equations we give a brief summary on the coupling issues in the remark below.
\begin{rem}
 The rescaled EVMS consists of systems of equations \eqref{eq: rescaled Einstein}, \eqref{eq: Vlasov equation}, and \eqref{eq: elliptic Maxwell system}. In this rescaled system we have particular matter-electromagnetic coupling terms, i.e., $\tau q \F^i \B_i f$ and $ N^{-1} \jnull $, which appear as source terms in the Maxwell equations and in the Vlasov equations, respectively. Those terms are not present in the respective individual cases, i.e., pure electromagnetism in \cite{BFK19} or pure uncharged matter in \cite{AF17}. In addition, they constitute principal terms in the sense of their decay properties. Both terms decay slower or at the same rate as the slowest decaying neighbouring source terms and thereby constitute potential obstacles when analyzing the rescaled equations. The major observation of the following bootstrap analysis is that those terms can indeed be handled in the existing bootstrap hierarchy of the previous works and in turn yield the same asymptotic behaviour as the respective cases.

\end{rem}

\subsection{Energy-momentum tensor}
We compute and collect all relevant terms from the energy-momentum tensor in the following.
The unrescaled energy-momentum tensor is given by
$$
 \mathbf{T}
 =
{\vlasT}
 +
{\maxT}
 \,.
$$
Accordingly, we have for the matter variables defined in \eqref{matter-quant},
$$
 \tilde{\rho}
 \equiv
  \tilde{\rho}_{\text{{\tiny{V}}}}
 +
 \tilde{\rho}_{\text{\tiny{M}}}
 \,,
 \qquad
  \tilde{j}^a
\equiv
  \tilde{j}^a_{\text{{\tiny{V}}}}
 +
 \tilde{j}^a_{\text{\tiny{M}}}
 \,,
 \qquad
 \underline{\eta}
\equiv
  \underline{\eta}^{\text{{\tiny{V}}}}
  +
  \underline{\eta}^{\text{\tiny{M}}}
  \,,
$$
where scripts V and M stand for Vlasov and Maxwell matter fields, respectively.

Then, the rescaled Vlasov matter quantities take the following form
\begin{subequations}\label{eq: Vlasov energy momentum tensor}
\begin{align}
 \rho_{\text{{\tiny{V}}}}
 &=
 4 \pi N^2 
 \int f \frac{(p^0)^2}{\phat}  \,  d\mu_p
 \,,
 \\
 j^a_{\text{{\tiny{V}}}} 
 &=
  8 \pi N
 \int f \frac{p^0 p^a}{\phat}  \,  d\mu_p
 \,,
 \\
 \underline{\eta}^{\text{{\tiny{V}}}}
 &=
 4 \pi 
 \int f \frac{ \absolg{ p + \tau^{-1} p^0 X }^2 }{\phat}  \,  d\mu_p
 \,,
 \\
 ^\mathrm{V}{\!}T^{a b}
 &=
 8 \pi
 \int f \frac{p^a p^b}{\phat}   \,  d\mu_p
 \,.
\end{align}
\end{subequations}

For the Maxwell field we start with the unrescaled energy-momentum tensor (again in the Heaviside--Lorentz units)
\begin{equation}
{\maxT}_{\mu \nu}
 =
  \tensor{F}{_\mu ^\alpha}  F_{\nu \alpha}
 -
  \frac{1}{4}
  \hb_{\mu \nu} F_{\alpha \beta} F^{\alpha \beta}
 \,.
\end{equation}

We first compute $F_{\alpha \beta} F^{\alpha \beta}$ in terms of the rescaled variables
\begin{equation}
 F_{\alpha \beta} F^{\alpha \beta}
 =
 - 2 \tau^6 N^{-2} g^{i j} F_{0 i} F_{0 j}
 +
 2\tau^5 N^{-2} X^i g^{j k} F_{0 k} F_{i j}
 +
 \tau^4 h^{i k} h^{j \ell} F_{i j} F_{k \ell}
 \,.
\end{equation}
Thus, in terms of rescaled variables ${\maxT}_{00}$ takes the form
\begin{equation}\label{eq: T00}
{\maxT}_{00}
 =
 \tau^2 h^{ij} F_{0 i} F_{0 j} 
 +
 \frac{\tau^{-4}}{4}   
 \left(
    1 -  \absolg{\xhat}^2
 \right) N^2
 F_{\alpha \beta} F^{\alpha \beta}
 \,.
\end{equation}
For the  off-diagonal components of the energy-momentum tensor we have
\begin{equation}\label{eq: T0i}
{\maxT}_{0 i}
 =
 \tau^2 h^{j k} F_{0 k} F_{i j}
 -
 \frac{1}{4} \tau^{-3} X_i \, F_{\alpha \beta} F^{\alpha \beta} 
 \,.
\end{equation}
Finally, the spatial part reads
\begin{equation}\label{eq: MaxTij}
{\maxT}_{i j}
 =
 - \tau^4 N^{-2} F_{0 i} F_{0 j}
 -
 \tau^3 N^{-2} X^k 
 \left(
  F_{0 i} F_{j k} + F_{0 j} F_{ik}
 \right)
 +
 \tau^2 h^{ k \ell} F_{i k} F_{ j \ell}
 -
 \frac{1}{4} \tau^{-2}
 F_{\alpha \beta} F^{\alpha \beta}
 \,.
\end{equation}
Then,  the rescaled Maxwell quantities can be expressed by the components of the energy-momentum tensor of Maxwell field computed above as
\begin{subequations}
\begin{align}
 \rho_{\mathrm{M}}
 &=
 4 \pi \tau N^{-2}
 \left[
{\maxT}_{00} 
  -
  2 \tau^{-1} X^i  ( {\maxT}_{0 i} )
  +
  \tau^{-2} X^i X^j ({\maxT}_{i j})
 \right]
 \,,
 \\  
 j_{\mathrm{M}}^i
 &=
 8 \pi N^{-1}
 \left[
  N^{-2} X^i ( {\maxT}_{00} )
  +
  \tau^{-1} 
   \left(
    \xhat^i \xhat^j + h^{i j}
   \right)
   {\maxT}_{0 j}
  +
  \tau^{-2} h^{i k} X^j ({\maxT}_{k j} )
 \right]
 \,,
 \\
 \underline{\eta}^{\text{{\tiny{M}}}}
 &=
 4 \pi \tau{-3} g^{i j} ({\maxT}_{i j})
 \,.
\end{align}
\end{subequations}

\subsection{Norms for matter fields}\label{Sec: Norms}
We recall in the following briefly the norms used to control distribution function and Faraday tensor as introduced in \cite{F17} and \cite{BFK19}, respectively.

\subsubsection{$L^2$-Sobolev energy of the distribution function}

$L^2$-Sobolev energies of the distribution function can be defined based on the Sasaki metrics with respect to $\gamma$ and $g$.  For $\gamma$, $ \bm{\gamma}
 \equiv
 \gamma_{i j} dx^i \otimes dx^j
 +
 \gamma_{i j} \fixcov p^i \otimes \fixcov p^j$, 
where $\fixcov p^i := dp^i + \widehat{\Gamma}^i_{j k} dp^j dp^k$, defines a metric on $TM$. We denote the associated covariant derivative by $\widehat{\mathbf{D}}$. The volume form induced by $\bm{\gamma}$ is then given by
$
 d\mu_{\bm{\gamma}}
 :=
 - \absol{\gamma} \prod_{i = 1}^{3} dx^i \wedge dp^i
 \,.
$
A weighted metric on $TM$ is defined by
$
 \underline{\bm{\gamma}}
 :=
 \gamma_{i j} dx^i \otimes dx^j
 +
 \langle p \rangle_\gamma^{-2}  \gamma_{i j} \fixcov p^i \otimes \fixcov p^j
 \,,
$
where $\langle p \rangle_\gamma := \sqrt{1 + \absol{p}^2_\gamma}$. We then define the $L^2$-Sobolev energy of the distribution function with respect to the associated Sasaki metric of $\gamma$	by
\begin{equation}
 \vertiii{f}_{\ell , \mu}
 :=
  \sqrt{\sum_{k \leq \ell} \int_{TM} \langle p \rangle_\gamma^{2\mu + 4 (\ell - k)} \absol{\widehat{\mathbf{D}}^k f }^2_ {\underline{\bm{\gamma}}} \, d \mu_{ \bm{\gamma}} }
\end{equation}
The function space associated with the above norm is denoted by $H_{\text{Vl}, \ell, \mu} (TM)$. Moreover, pointwise estimates are taken with respect to 	the following $L^\infty_x L^2_p$-norm
$$
 \vertiii{f}_{\infty, \ell , \mu}
 :=
 \sup_{x \in M}
  \left\{
   \sqrt{ \sum_{k \leq \ell}  \int_{T_xM} \langle p \rangle_\gamma^{2\mu + 4 (\ell - k)} \absol{\widehat{\mathbf{D}}^k f }^2_ {\underline{\bm{\gamma}}} \, d \widehat{\mu}_p }
  \right\}
  \,,
$$
where $d \widehat{\mu}_p := |\gamma|^{\tfrac{1}{2}} dp^1 \wedge dp^2 \wedge dp^3 $.  Then, the following lemma holds.
\begin{lem}
For $f$ sufficiently regular, there exists a constant $C$ 	such that
$$
 \vertiii{f}_{\infty, \ell , \mu}
 \leq
 C
 \vertiii{f}_{\ell + 2 , \mu}
$$
holds.
\end{lem}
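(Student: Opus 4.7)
My plan is to deduce the inequality from the classical Sobolev embedding $H^2(M)\hookrightarrow L^\infty(M)$ on the three-dimensional compact base $M$, formulated in a Bochner (Hilbert-space-valued) setting. Viewing $f(x,\cdot)$ and its horizontal derivatives as maps from $M$ into a fixed weighted $L^2$-space over the fibers will allow me to apply the scalar Sobolev embedding to each $\widehat{\mathbf{D}}^k f$ individually rather than to the aggregated squared pointwise-fiber norm; this avoids the squaring of fiber integrands that would otherwise appear under a direct scalar reduction (and would spoil the closing of the estimate at the highest-order term).

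Concretely, for each $k\leq\ell$ I would introduce the weighted fiber Hilbert space $\mathcal{H}_k(x)$ whose squared norm of a tensor field $u$ along the fiber reads
\begin{equation*}
\|u\|_{\mathcal{H}_k(x)}^2:=\int_{T_xM}\langle p\rangle_\gamma^{2\mu+4(\ell-k)}|u|_{\underline{\bm{\gamma}}}^2\,d\widehat{\mu}_p,
\end{equation*}
and view $x\mapsto \widehat{\mathbf{D}}^k f(x,\cdot)$ as an $\mathcal{H}_k$-valued section over $M$. Working in geodesic normal coordinates at any base point and locally trivializing $TM$ makes $\mathcal{H}_k$ effectively $x$-independent to leading order, so the Bochner--Sobolev embedding
\begin{equation*}
\|g\|_{L^\infty(M;\mathcal{H}_k)}^2 \leq C\sum_{j\leq 2}\int_M\|\fixcov^j g(x)\|_{\mathcal{H}_k}^2\, d\mu_\gamma
\end{equation*}
applies to $g(x)=\widehat{\mathbf{D}}^k f(x,\cdot)$.

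I would then observe that the horizontal part of the Sasaki derivative $\widehat{\mathbf{D}}$ coincides with the horizontal lift of the base derivative $\fixcov$ up to lower-order terms linear in the Christoffel symbols of $\gamma$ contracted with $p$; these correction terms grow polynomially in $|p|_\gamma$ and are absorbed by raising the power of the weight. Since $\langle p\rangle_\gamma\geq 1$, replacing the weight $\langle p\rangle_\gamma^{2\mu+4(\ell-k)}$ by the larger $\langle p\rangle_\gamma^{2\mu+4((\ell+2)-(k+j))}$ for any $j\leq 2$ only enlarges the integrand, so the right-hand side of the Bochner--Sobolev inequality is dominated termwise by summands appearing in $\vertiii{f}_{\ell+2,\mu}^2$. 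Summing over $k\leq\ell$ and taking a square root then yields $\vertiii{f}_{\infty,\ell,\mu}\leq C\vertiii{f}_{\ell+2,\mu}$.

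The main technical obstacle will be justifying the commutation between the base covariant derivative $\fixcov$ and the Sasaki derivative $\widehat{\mathbf{D}}$ when applied to tensors along the fiber, together with the $x$-dependence of both the fiber domain $T_xM$ and the weight $\langle p\rangle_\gamma=\sqrt{1+\gamma_{ij}(x)p^ip^j}$. Both are routine and will be handled by a standard partition-of-unity/normal-coordinates argument, using compactness of $M$, smoothness of $\gamma$, and the polynomial growth in $|p|_\gamma$ of all Christoffel-type correction terms, absorbed via weight monotonicity. Once this is granted, the whole result reduces to the classical $H^2(M)\hookrightarrow L^\infty(M)$ embedding on each trivializing patch.
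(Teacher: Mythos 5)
The paper states this lemma without proof (it is imported from \cite{F17}, where the norms are introduced), and your argument is the standard one used there: the Hilbert-space-valued Sobolev embedding $H^2(M;\mathcal H_k)\hookrightarrow L^\infty(M;\mathcal H_k)$ on the compact $3$-manifold, applied to each $x\mapsto\widehat{\mathbf D}^k f(x,\cdot)$, with the Sasaki-connection correction terms (which are $O(\langle p\rangle_\gamma)$ per curvature insertion) absorbed by the fact that the weight exponent $2\mu+4(\ell+2-k')$ at order $k'\leq k+2$ exceeds $2\mu+4(\ell-k)$ by at least $4$ per consumed derivative. Your bookkeeping and the identification of the technical points (vector-valued rather than scalar reduction, metric compatibility of the horizontal-lift connection with the weighted fiber inner product since $\A_a\langle p\rangle_\gamma=0$) are correct, so the proposal is sound and coincides with the intended proof.
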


In addition, we consider the $L^2$-Sobolev energy associated with the dynamical metric $g$. The Sasaki metric with respect to $g$ is
$
 \mathbf{g}
 \equiv
  g_{i j} dx^i \otimes dx^j
 +
 g_{i j} Dp^i \otimes Dp^j
 \,, 
$
where $ D p^i := dp^i + \Gamma^i_{j k} dp^j dp^k$. The associated covariant derivative and the connection coefficients are denoted by $\mathbf{D}$ and $\bm{\Gamma}$ (cf.~Appendix~\ref{App: connection coefficients of Sasaki}), respectively, and the volume form induced by $\mathbf{g}$ on $TM$ is given by
$
 d\mu_{\mathbf{g}}
 :=
 - \absol{g} \prod_{i = 1}^{3} dx^i \wedge dp^i
 \,.
$
Then, we define analogously a weighted version of the Sasaki metric associated with $g$ by
$
 \underline{\mathbf{g}}
 :=
 g_{i j} dx^i \otimes dx^j
 +
 \langle p \rangle^{-2}   g_{i j} Dp^i \otimes Dp^j
 \,, 
$
where $\langle p \rangle := \sqrt{1 + \absolg{p}^2} $. We finally define the $L^2$-Sobolev energy of the distribution function with respect to the Sasaki metric associated with the dynamical metric $g$ by
\begin{equation}\label{eq: L2 energy of f}
 \mcE_{\ell, \mu}(f)
 :=
 \sqrt{\sum_{k \leq \ell} \int_{TM} \langle p \rangle^{2\mu + 4 (\ell - k)} \absolsasag{\sasacov^k f }^2 \, d \mu_{\mathbf{g}} }
\end{equation}
Under suitable smallness assumptions, $ \mcE_{5, 4}(f)$ and $ \vertiii{f}_{5, 4}$ are equivalent.

\subsubsection{$L^2$-Sobolev norm of the Faraday tensor}

In the following we define the $L^2$-Sobolev norm of the Faraday tensor which measures the perturbation of the Maxwell field. It is defined as follows (cf.~\cite{BFK19})
\begin{align}\label{eq: L2 norm of F}
 \Fnorm{F}{\ell}^2
&:=
 \sum_{ k \leq \ell} \int_M
 \Big(
  \tau^2 g^{i j} D_{i_1} \cdots D_{i_ k} F_{0 i} \,
  D^{i_1} \cdots D^{i_ k} F_{0 j} 
\nn
 \\
 &
 \qquad\qquad\quad
 + 
 g^{i j} g^{a b} 
  D_{i_1} \cdots D_{i_ k} F_{i a} \,
  D^{i_1} \cdots D^{i_ k} F_{j b}
 \Big)
 d\mu_g
 \,.
\end{align}
The factor $\tau^2$ in the first term compensates the the growth of the $\tau$-component coming from $A_\tau = \tfrac{dT}{d\tau} A_T = - \tau^{-1} A_T $, relative to the spatial components of $F$. 

\subsection{Smallness}
To determine the final estimates we use a standard bootstrap argument. We define a set of smallness conditions for the dynamical quantities which serves as the bootstrap assumptions. We define 
\begin{align}
 \mcB^{6,5,5,6,5}_{\delta, \tau} (\gamma, 0, 0, 0, 0)
 :=
 \Big\{
 & (g, \Sigma, f, \omega, \dot{\omega}) \in H^6 \times H^5 \times  H_{\text{Vl}, 5, 4} \times H^6 \times H^5 \Big|
 \nn
 \\
\qquad \quad
 &
  |\tau|^{- \frac{1}{2}} 
  \left(
   \Hnorm{ g - \gamma}{6}
   +
   \Hnorm{\Sigma}{5}
  \right)
  +
   \Hnorm{\rho}{4}
   +
   |\tau|^\frac{1}{2} \cdot
     \vertiii{f}_{ 5,4}
   +
    \Fnorm{F}{5}
  <
  \delta 
 \Big\}
 \,. \label{eq: smallness assumption}
\end{align}
We say that $(g(\tau), \Sigma(\tau), f(\tau), \omega(\tau), \dot{\omega}(\tau))$ is \emph{$\delta$-small} if $(g, \Sigma, f, \omega, \dot{\omega}) \in \mcB^{6,5,5,6,5}_{\delta, \tau} (\gamma,0,0, 0, 0)$. To refer to $\delta$-small data we also use the term \emph{smallness assumptions}. A direct consequence of the smallness assumption is stated in the following lemma.
\begin{lem}
For $\delta$-small data \eqref{eq: smallness assumption} with $\delta$ sufficiently small, the energies $\energy{5}{4}$ and $\vertiii{f}_{5, 4}$ are equivalent.
\begin{proof}
The proof follows straightforwardly from the definitions and the smallness assumption.
\end{proof}
\end{lem}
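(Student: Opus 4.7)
The plan is to compare the two energies term-by-term at each order $k \leq 5$ and to show that the smallness hypothesis $\Hnorm{g-\gamma}{6} \leq \delta |\tau|^{1/2}$ forces the pointwise and integrated ingredients defining $\energy{5}{4}$ and $\vertiii{f}_{5,4}$ to agree up to factors $1 \pm C\delta$.

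The first step would handle all ingredients that do not involve derivatives of $f$. Smallness of $g - \gamma$ in $H^6(M)$ together with the Sobolev embedding $H^2(M) \hookrightarrow L^\infty(M)$ on the compact three-manifold $M$ yields the pointwise bound $|g - \gamma|_\gamma \lesssim \delta$, and hence $(1-C\delta)\,\gamma \leq g \leq (1 + C\delta)\, \gamma$ as quadratic forms. This immediately gives $|g|/|\gamma| = 1 + O(\delta)$, so the base volume forms $d\mu_{\mathbf{g}}$ and $d\mu_{\bm{\gamma}}$ are pointwise equivalent, as are the weights $\langle p\rangle$ and $\langle p\rangle_\gamma$ uniformly in $p$, together with the fibre metrics entering the pointwise norms $|\cdot|_{\sasag}$ and $|\cdot|_{\underline{\bm{\gamma}}}$.

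Next I would compare the Sasaki-type covariant derivatives. Setting $\Upsilon := \Gamma[g] - \widehat{\Gamma}[\gamma]$ and using that schematically $\Upsilon = \gamma^{-1} \ast \fixcov(g - \gamma)$, the smallness assumption yields $\Hnorm{\Upsilon}{5} \lesssim \delta$. The Sasaki connection coefficient $\bm{\Gamma}$ on $TM$ is polynomial in $\Gamma$, $\partial\Gamma$ and the fibre coordinates (cf.\ the appendix on connection coefficients of the Sasaki structure), so the difference of the Sasaki connections on $TM$ is a polynomial expression in $\Upsilon$, $\fixcov\Upsilon$ and $p$. Iterating the comparison of $\sasacov$ and $\widehat{\mathbf{D}}$ produces a schematic expansion
\[
\sasacov^k f \;=\; \widehat{\mathbf{D}}^k f \;+\; \sum_{j<k} P_{k,j}\bigl(\Upsilon,\fixcov\Upsilon,\dots;\,p\bigr)\,\ast\,\widehat{\mathbf{D}}^j f,
\]
where each $P_{k,j}$ is polynomial in $\Upsilon$ and its $\fixcov$-derivatives up to order $k-1\leq 4$. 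By $H^2\hookrightarrow L^\infty$, every factor built from $\Upsilon$ is bounded pointwise by $C\delta$ once the accompanying momentum powers are absorbed into the weight $\langle p\rangle_\gamma^{2\mu+4(\ell-j)}$ built into $\vertiii{f}_{\ell,\mu}$. The reverse expansion of $\widehat{\mathbf{D}}^k f$ in terms of $\sasacov^j f$ follows by the same argument.

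The hard part will be the momentum-weight bookkeeping in this last step: one must verify that the powers of $p$ produced by differentiating horizontal terms of the schematic form $p^\ell\cdot\Upsilon$ with the vertical vector fields of $\sasacov$ never exceed the allowance $\langle p\rangle^{4(\ell-j)}$ built into the definition of the $L^2$-Sobolev energies, and that the top-order term on the right-hand side above comes with coefficient $1 + O(\delta)$ rather than merely a one-sided bound. Granted this bookkeeping, combining the expansion with the pointwise equivalences from the first step yields $(1-C\delta)\,\vertiii{f}_{5,4} \leq \energy{5}{4} \leq (1+C\delta)\,\vertiii{f}_{5,4}$ for $\delta$ sufficiently small, which is the claimed equivalence.
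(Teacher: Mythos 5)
Your proposal is correct and is precisely the argument the paper leaves implicit behind its one-line proof: pointwise equivalence of $g$ and $\gamma$ (hence of the Sasaki metrics, weights and volume forms) via Sobolev embedding, plus control of the connection difference $\Upsilon$ by $\Hnorm{g-\gamma}{6}\leq\delta$. The momentum-weight bookkeeping you flag does close, since each commutation trades one covariant derivative for at most one factor of $p$ (via $p^k\,\mathrm{Riem}$ in the Sasaki connection coefficients), which is absorbed by the gain of $\langle p\rangle^{4}$ per lost derivative built into the exponent $2\mu+4(\ell-k)$.
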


The smallness assumptions imply the smallness of the perturbation for the lapse function and shift vector. The following result is a corollary of 	 Proposition~\ref{prop: estimates for N and X}.
\begin{cor}
 For any $\delta > 0$ there exists a $\bar{\delta} > 0$ such that  $ (g, \Sigma, f, \omega, \dot{\omega}) 
 \in  
 \mcB^{6,5,5,6,5}_{\delta, \tau} (\gamma, 0, 0, 0, 0)$ implies
\begin{equation}
 |\tau|^{-1} 
 \left(
  \Hnorm{N - 3}{6}
  +
  \Hnorm{X}{6}
 \right)
 <
 \bar{\delta}
 \,.
\end{equation}
\end{cor}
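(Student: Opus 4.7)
The plan is to invoke Proposition~\ref{prop: estimates for N and X}, which provides elliptic estimates for the lapse and shift equations, and to verify that the $\delta$-smallness assumption \eqref{eq: smallness assumption} controls the right-hand sides of those equations by $\bar\delta\,|\tau|$ for a suitable $\bar\delta$ that can be made as small as desired by shrinking $\delta$.

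First I would rewrite the lapse equation \eqref{lapse}, using the decomposition $\eta=\rho+\tau^2\underline{\eta}$, in the form
\[
\bigl(\Delta_g-\tfrac{1}{3}\bigr)(N-3) = N\bigl(|\Sigma|_g^2+\tau\rho+\tau^3\underline{\eta}\bigr).
\]
The operator $\Delta_g-\tfrac{1}{3}$ is negative definite and invertible with a gain of two derivatives, so $\|N-3\|_{H^6}$ is bounded by the $H^4$-norm of the right-hand side plus harmless $\|g-\gamma\|_{H^6}$-corrections from the metric dependence of the principal part. By \eqref{eq: smallness assumption}, Sobolev multiplication, and Lemma~\ref{lem: uniformly positiveness}, the three terms in the source are of sizes $\|N|\Sigma|_g^2\|_{H^4}\lesssim \delta^2|\tau|$, $\|N\tau\rho\|_{H^4}\lesssim \delta|\tau|$, and $\|N\tau^3\underline{\eta}\|_{H^4}\lesssim \delta^2|\tau|^3$ (using the quadratic matter bounds to be established in Section~\ref{Sec: estimating energy-momentum}). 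All three are dominated by $C\delta |\tau|$, which yields $|\tau|^{-1}\|N-3\|_{H^6}\leq C\delta$.

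For the shift, the elliptic operator $\Delta_g+\mathrm{Ric}_g$ acting on vector fields is coercive: its kernel would have to consist of Killing fields, which are forbidden on $(M,\gamma)$ by the Bochner identity combined with the negative Ricci curvature, and this coercivity persists under small $H^6$-perturbations of $g$. Standard elliptic regularity therefore controls $\|X\|_{H^6}$ by the $H^4$-norm of the source of the shift equation. Each of the four source terms is $O(|\tau|)$: $2D_jN\,\Sigma^{ij}$ is handled by the already-established lapse bound together with the $\Sigma$-smallness, $D^i(N/3-1)$ directly by the lapse bound, $2N\tau^2 j^i$ trivially via the explicit $\tau^2$-prefactor and the matter energy control, and the CMCSH gauge correction $(N\Sigma-DX)(\Gamma-\widehat\Gamma)$ by the bilinearity in small quantities. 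Combining yields $|\tau|^{-1}\|X\|_{H^6}\leq C\delta$.

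The main obstacle is the bookkeeping of $|\tau|$-weights: the conclusion demands a full power of $|\tau|$ on every source term in both elliptic equations, which is exactly why the smallness hierarchy \eqref{eq: smallness assumption} assigns a $|\tau|^{1/2}$-weight to $\Sigma$ and $g-\gamma$ while leaving $\rho$ without one—the quadratic appearance of $\Sigma$ in the lapse source produces the needed $|\tau|$-decay, while the explicit $\tau$-factors on $\rho$, $j$, and $\underline{\eta}$ accomplish the same for the matter contributions. Setting $\bar\delta:=C\delta$, with $C$ the composite elliptic constant absorbing the Sobolev embeddings and the coercivity constants, then closes the corollary.
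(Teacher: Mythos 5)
Your argument is correct and is essentially the paper's own route: the corollary is obtained by applying the elliptic estimates of Proposition~\ref{prop: estimates for N and X} (whose proof is exactly the elliptic-regularity argument you spell out for the lapse and shift equations) together with the matter bounds of Proposition~\ref{Prop: matter estimates} and the $|\tau|$-weights built into the smallness condition \eqref{eq: smallness assumption}. Your bookkeeping of the $|\tau|$-powers on each source term matches what the proposition delivers, so setting $\bar\delta = C\delta$ closes the statement just as in the paper.
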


\section{Estimating the energy-momentum tensor}\label{Sec: estimating energy-momentum}

In this section we will estimate the components of the energy-momentum tensor of the Maxwell field and other matter quantities by the norms defined in the previous section.
\begin{lem}\label{lem: estimate of TMaxwell}
For \red{$k > 3/2$}
 we have the following estimate
\begin{equation}
 \Hnorm{{\maxT}_{00}}{k}
 +
 |\tau|^{-1} \cdot \Hnorm{{\maxT}_{0i}}{k}
 +
 |\tau|^{-2} \cdot \Hnorm{{\underline{\maxT}}}{k}
 \leq
 C
 \Fnorm{F}{k}^2
 \,,
\end{equation}
where ${}^{\rm M}{\underline{\mathbf{T}}}$ denotes the spatial part of ${\maxT}$ and $ C = C
 \left(
   \Hnorm{N}{k}, \Lnorm{N}{\infty}, \Lnorm{N^{-1}}{\infty}, \Hnorm{X}{k}
 \right)
$.
%
\begin{proof}
The statement of the lemma follows directly from the expressions \eqref{eq: T00}--\eqref{eq: MaxTij} (cf.~also Lemma 4.2 in \cite{BFK19}).
\end{proof}
\end{lem}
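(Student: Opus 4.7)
The plan is to reduce the estimate to direct bookkeeping by inspecting the explicit formulas \eqref{eq: T00}--\eqref{eq: MaxTij} and the expression for $F_{\alpha\beta} F^{\alpha\beta}$ in rescaled variables given just above \eqref{eq: T00}. Every summand there has the schematic form
\begin{equation*}
 \tau^{s}\, \phi(N, N^{-1}, g^{ij}, X^i)\, F_{\alpha_1 \beta_1}\, F_{\alpha_2 \beta_2},
\end{equation*}
so the proof splits into two subtasks: absorbing the coefficient $\phi$ into the constant $C$, and checking that the resulting weighted quadratic expression in $F$ is controlled by $\Fnorm{F}{k}^2$.

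For the first subtask I would invoke the algebra property of $H^k$ for $k > 3/2$, namely $\Hnorm{uv}{k} \lesssim \Hnorm{u}{k}\Hnorm{v}{k}$, together with a Moser-type composition estimate bounding $\phi(N, N^{-1}, g^{ij}, X^i)$ in $H^k$ in terms of the quantities listed in the statement. The dependence on $\Lnorm{N^{-1}}{\infty}$ enters through the $N^{-2}$ prefactors visible in \eqref{eq: MaxTij} (with $N$ bounded away from zero by Lemma~\ref{lem: uniformly positiveness}), and the contribution of $h^{ij} = g^{ij} - N^{-2} X^i X^j$ is handled similarly. For the second subtask, \eqref{eq: L2 norm of F} shows that $\Fnorm{F}{k}^2$ is, up to constants depending on $g$, equivalent to $\Hnorm{\tau F_{0\cdot}}{k}^2 + \Hnorm{F_{\cdot\cdot}}{k}^2$, so any expression of the form $|\tau|^{s}\,\Hnorm{F_{0\cdot}}{k}^{a}\,\Hnorm{F_{\cdot\cdot}}{k}^{b}$ with $a+b = 2$ and $s \geq a$ is bounded by $\Fnorm{F}{k}^2$.

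It then remains to check that the powers of $\tau$ line up for each of the three components. In ${\maxT}_{00}$ the first summand gives $(s,a,b) = (2,2,0)$ directly; substituting the three pieces of $F_{\alpha\beta} F^{\alpha\beta}$ into the second summand, whose $\tau^{-4}$ prefactor exactly cancels the $\tau^{4,5,6}$ weights of those pieces, produces $(2,2,0), (1,1,1), (0,0,2)$, all admissible. Multiplication of ${\maxT}_{0i}$ by $|\tau|^{-1}$ decreases every $s$ by one and yields $(1,1,1)$ for the first summand; the term $\tau^{-3} X_i F_{\alpha\beta} F^{\alpha\beta}$, once multiplied by $|\tau|^{-1}$, carries a net $|\tau|^{-4}$, but the compensating $\tau^{4,5,6}$-pieces of $F_{\alpha\beta}F^{\alpha\beta}$ restore admissibility after $X_i$ is absorbed into $C$ via the algebra property. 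For the spatial block ${\underline{\maxT}}$ with prefactor $|\tau|^{-2}$ the only delicate term is $-\tfrac14 \tau^{-2} F_{\alpha\beta} F^{\alpha\beta}$ (total weight $|\tau|^{-4}$), whose leading piece $\tau^4 h^{ik} h^{j\ell} F_{ij} F_{k\ell}$ produces $(0,0,2)$ with no residual factor of $\tau$, while the other two pieces carry positive powers of $|\tau|$ and are controlled as before.

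There is no real obstacle: the only points to watch are the use of $k > 3/2$ to apply the algebra property so that each product $F \cdot F$ can be estimated in one step, and the observation that the weights in \eqref{eq: L2 norm of F} are calibrated precisely so that $|\tau| F_{0\cdot}$, rather than $F_{0\cdot}$, is the quantity controlled uniformly; this calibration is exactly what makes the bookkeeping close.
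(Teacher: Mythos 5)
Your proposal is correct and is exactly the argument the paper intends: the paper's proof consists of the single remark that the lemma "follows directly from the expressions \eqref{eq: T00}--\eqref{eq: MaxTij}," and your bookkeeping of the $\tau$-weights against the calibration $\Fnorm{F}{k}^2\simeq \tau^2\Hnorm{F_{0\cdot}}{k}^2+\Hnorm{F_{\cdot\cdot}}{k}^2$, combined with the $H^k$ algebra property for $k>3/2$ to absorb the coefficients into $C$, is precisely the omitted computation. All the exponent checks $(s,a,b)$ you list are accurate, so nothing is missing.
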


We summarize the estimates of the rescaled matter quantities in the following proposition.   
\begin{prop}\label{Prop: matter estimates}
For \red{$k > 3/2$} the following estimates hold
\begin{equation}
\begin{aligned}
 \Hnorm{\rho}{k}
  & \leq
  C
 \left[
  |\tau| \cdot  \Fnorm{F}{k}^2
   +
  \energy{k}{3}
 \right]
 \,,
 \\
 \Hnorm{j}{k}
 &\leq
 C
  \left[
   \Fnorm{F}{k}^2
   +
  \energy{k}{3}
 \right]
 \,,
 \\
 \Hnorm{\underline{\eta}}{k}
 &\leq
 C
  \left[
   |\tau|^{-1} \cdot \Fnorm{F}{k}^2
   +
  \energy{k}{4}
 \right]
 \,,
 \\
 \Hnorm{S}{k}
 & \leq
 C
 \left[
  |\tau| \cdot \Fnorm{F}{k}^2
  + 
  \tau^2 \energy{k}{4}
  +
  \Hnorm{\rho}{k} 
 \right]
 \,,
\end{aligned}
\end{equation}
where $C = C \left( \Hnorm{N}{k}, \Lnorm{N}{\infty}, \Lnorm{N^{-1}}{\infty} , \Hnorm{X}{k} \right)$.
\begin{proof}
Combining  Lemma~13 in \cite{AF17} with   Lemma~\ref{lem: estimate of TMaxwell} results in the estimates of the proposition. 
\end{proof}
\end{prop}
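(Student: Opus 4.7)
The plan is to split each matter quantity into its Vlasov and Maxwell contributions, $\rho = \rho_{\text{{\tiny V}}}+\rho_{\text{{\tiny M}}}$, and similarly for $j$, $\underline\eta$, and the spatial stress piece giving $S$, then estimate the two parts separately and add. Because $k > 3/2$, $H^k(M)$ is a Banach algebra, and under the smallness assumptions the fields $N^{\pm 1}$ and $X$ have uniform bounds, so any algebraic product of these with the tensorial matter quantities reduces to a Sobolev product estimate whose constant is absorbed into $C$.

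For the Vlasov pieces, the momentum integrals in \eqref{eq: Vlasov energy momentum tensor} are exactly those treated by Lemma~13 of \cite{AF17}: the key is that $p^0$ and $\phat$ are comparable to $\langle p\rangle$, so in each integrand only polynomial factors in $p$ enter. Transcribing that lemma gives $\Hnorm{\rho_{\text{{\tiny V}}}}{k}+\Hnorm{j_{\text{{\tiny V}}}}{k} \leq C\,\mcE_{k,3}(f)$ and $\Hnorm{\underline\eta^{\text{{\tiny V}}}}{k}+\Hnorm{^{\mathrm V}{\!}T^{ab}}{k} \leq C\,\mcE_{k,4}(f)$, the extra weight reflecting the additional momentum power appearing in the corresponding integrands.

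For the Maxwell pieces I would insert the formulas for $\rho_{\mathrm M}$, $j^i_{\mathrm M}$, $\underline\eta^{\text{{\tiny M}}}$ written in terms of $({}^{\mathrm M}{\mathbf T}_{00},{}^{\mathrm M}{\mathbf T}_{0i},{}^{\mathrm M}{\mathbf T}_{ij})$ and apply Lemma~\ref{lem: estimate of TMaxwell}, namely $\Hnorm{{}^{\mathrm M}{\mathbf T}_{00}}{k}\leq C\Fnorm{F}{k}^2$, $\Hnorm{{}^{\mathrm M}{\mathbf T}_{0i}}{k}\leq C|\tau|\Fnorm{F}{k}^2$, $\Hnorm{{}^{\mathrm M}{\underline{\mathbf T}}}{k}\leq C\tau^2\Fnorm{F}{k}^2$. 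The $\tau$-bookkeeping then works out: the overall factor of $\tau$ in front of $\rho_{\mathrm M}$ combines with these bounds to give $|\tau|\Fnorm{F}{k}^2$, the combination in $j^i_{\mathrm M}$ (with prefactor $\tau^{-1}$ on ${}^{\mathrm M}{\mathbf T}_{0j}$ and $\tau^{-2}$ on ${}^{\mathrm M}{\mathbf T}_{ij}$ multiplied by $X$) yields $\Fnorm{F}{k}^2$, and the $\tau^{-3} g^{ij}({}^{\mathrm M}{\mathbf T}_{ij})$ of $\underline\eta^{\text{{\tiny M}}}$ produces $|\tau|^{-1}\Fnorm{F}{k}^2$.

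Finally, $S$ has the structure $\tau^{-1}\bigl[\mathbf T_{ij}-\tfrac12(\tr_{\mathbf h}\mathbf T)\tilde g_{ij}\bigr]$ up to a constant. Taking the trace on the right produces a $\rho$-type object whose $H^k$ norm is controlled by $\Hnorm{\rho}{k}$, which explains the last term in the bound on $S$. The remaining spatial tensor $\mathbf T_{ij}$ contributes $|\tau|\Fnorm{F}{k}^2$ from its Maxwell part by the same bookkeeping as above, and $\tau^2\mcE_{k,4}(f)$ from its Vlasov part, combining the $\tau^{-1}$ prefactor with the scaling $\tilde g^{ij}\sim\tau^{-2}g^{ij}$ and the three momentum powers under the integral. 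The entire argument is essentially $\tau$-power accounting on top of Lemma~\ref{lem: estimate of TMaxwell} and Lemma~13 of \cite{AF17}; I do not anticipate any genuine obstacle.
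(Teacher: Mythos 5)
Your proposal is correct and follows exactly the paper's (very terse) proof: decompose each matter quantity into Vlasov and Maxwell parts, bound the Vlasov integrals by Lemma~13 of \cite{AF17} and the Maxwell parts by Lemma~\ref{lem: estimate of TMaxwell}, and track the powers of $\tau$; your $\tau$-bookkeeping for $\rho$, $j$, $\underline{\eta}$, and $S$ all checks out against the explicit formulas for $\rho_{\mathrm M}$, $j^i_{\mathrm M}$, $\underline{\eta}^{\text{\tiny M}}$ and the definition of $S$.
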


\subsection{Pointwise estimates on the momentum variables}

In this subsection we provide some useful auxiliary estimates used further below.
\begin{lem}\label{lem: momentum inequalities}
The following estimates hold for large $T$, using the smallness of the lapse function and shift vector, and provided that the momenta have compact support 
 \begin{align}
  \frac{\absolg{p}}{\phat}
  &\less
  |\tau|^{-1}
  \,,
  \quad
  \text{or}
  \quad
  \blue{
    \frac{\absolg{p}}{\phat}\less \absolg{p}
  }
  \,,
  \\
   \frac{p^0}{\phat}
  &\less
  \frac{1}{N \big( 1 -\absolg{\xhat}^2 \big)}
  \,,
  \\
 \absolg{\mathcal{P}}
  &\less
  |\tau|^{-1} \cdot \absolg{\xhat}
  +
  N^{-1} \absolg{p}
  \,.
 \end{align}
\end{lem}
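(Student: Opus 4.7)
The plan is to derive all three estimates directly from the explicit formula
\[
\widehat{p}^{\,2} = \tau^2 \langle \widehat{X}, p\rangle_g^2 + (1 - |\widehat{X}|_g^2)(1 + \tau^2 |p|_g^2),
\]
together with the smallness of $\widehat{X}$ (from the bootstrap corollary bounding $\|X\|_{H^6}$) and the uniform positivity $N \gtrsim 1$ (Lemma~\ref{lem: uniformly positiveness}). Throughout, the bootstrap smallness delivers $1 - |\widehat{X}|_g^2 \geq 1/2$ pointwise for $T$ large, which is what makes each coefficient clean.

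For the first estimate, I would simply drop the first nonnegative term in the defining expression for $\widehat p^{\,2}$ to get
\[
\widehat{p}^{\,2} \geq (1 - |\widehat{X}|_g^2)(1 + \tau^2 |p|_g^2) \gtrsim \tau^2 |p|_g^2,
\]
which yields $|p|_g/\widehat{p} \lesssim |\tau|^{-1}$. Similarly, keeping the $1$ instead of the $\tau^2|p|_g^2$ term gives $\widehat{p} \gtrsim 1$, and hence the alternative bound $|p|_g/\widehat p \lesssim |p|_g$.

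For the second estimate, I rewrite
\[
\frac{p^0}{\widehat{p}} = \frac{1}{N(1-|\widehat X|_g^2)}\left(\frac{\tau \langle \widehat X, p\rangle_g}{\widehat p} + 1\right),
\]
using the formula \eqref{eq: p0}. By Cauchy--Schwarz and the fact that $\widehat p^{\,2} \geq \tau^2 \langle \widehat X, p\rangle_g^2$ (again dropping nonnegative terms), one has $|\tau \langle \widehat X, p\rangle_g|/\widehat p \leq 1$, so the bracket is bounded by $2$, which gives the claim.

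For the third estimate, I start from the identity $\mathcal P_a = -(N\widehat p)^{-1}(\tau^{-1}X_a p^0 + g_{ab} p^b)$ coming from the definition of $\mathcal P$, apply the triangle inequality in the $g$-metric, and use $X = N \widehat X$ to get
\[
|\mathcal P|_g \;\leq\; \frac{|\tau|^{-1} |\widehat X|_g\, p^0}{\widehat p} + \frac{|p|_g}{N\,\widehat p}.
\]
The first term is controlled by the previous estimate $p^0/\widehat p \lesssim 1/N$ combined with $N\gtrsim 1$; the second term is controlled by $\widehat p \gtrsim 1$. Both pieces then combine to the stated bound. The only subtlety is keeping track of the $N$-factors to arrive at the exact form $|\tau|^{-1}|\widehat X|_g + N^{-1}|p|_g$, but no real obstacle arises since all coefficients are tracked explicitly and the smallness of $\widehat X$ (and hence positivity of $1-|\widehat X|_g^2$) is already embedded in the bootstrap assumptions.
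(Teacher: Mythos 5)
Your proposal is correct and is exactly the argument the paper intends: the paper's proof is the one-line remark that everything "follows straightforwardly from the explicit expressions," and your computation (dropping nonnegative terms in $\phat^2$ to get $\phat\gtrsim 1$ and $\phat\gtrsim|\tau|\absolg{p}$, rewriting $p^0/\phat$ via \eqref{eq: p0}, and expanding $\mathcal{P}_a$ with $X=N\xhat$) supplies precisely the missing details.
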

\begin{proof}
The proof follows straightforwardly from the explicit expressions for the respective quantities.
\end{proof}

\section{Control of the momentum support}\label{Sec: control of the momentum support}
Based on the characteristic system associated to the transport equation, we derive an estimate on the radius of the momentum support of the distribution function.

The characteristic system associated with the rescaled transport  equation reads
\begin{align}
 \frac{d x^a}{dT}
 &=
 - \tau \frac{p^a}{p^0}
 \,,
 \\
 \frac{dp^a}{dT}
 &=
 \tau^{-1} \Gamma^a p^0
 -
 2 p^a
 +
 2 \Gamma^a_b \, p^b
 +
 \tau
 \left[
  \Gamma^a_{b c}
  +
  N^{-1}
  \left(
   \Sigma_{b c} + \tfrac{1}{3} g_{b c}
  \right)
  X^a
 \right]
 \frac{p^b p^c}{p^0}
 \red{ 
  -
  \tau q \F^a 
 } 
 \,.
\end{align}
With the definition of the auxiliary quantity (cf.~\cite{AF17})
\begin{equation}
 \mathbf{G}(T, x, p)
 := 
 \absolg{p}^2
 \,,
\end{equation}
one finds its derivative along a characteristic 
\begin{equation}
 \frac{d \mathbf{G}}{dT}
 =
 |p|^2_{\dot{g}}
 +
 2 \tau^{-1} \scalpr{\Gamma^\ast}{p} \, p^0
 +
 4 g_{i k}  \Gamma^i_j \, p^j p^k
 +
 \frac{2 \tau}{N} 
 \left(
  \Sigma_{b c} + \tfrac{1}{3} g_{b c} 
 \right)
 \frac{p^b p^c}{p^0} \scalpr{X}{p}
\red{ -
 4 \tau q \scalpr{\F}{p} 
 } 
 \,,
\end{equation}
which implies the following result.

\begin{lem}\label{lem: estimate of dG to dT}
\red{For $\delta$-small data \eqref{eq: smallness assumption} with $\delta$ sufficiently small}, the following estimate holds for any characteristic
\begin{equation}
 \left| \frac{d \mathbf{G}}{d T} \right|
  \leq 
 C
  \left[
   \absolg{\dot{g}}
   +
   \absolg{\Gamma^\ast_\ast}
   +
   \left(
     1+  \absolg{\Sigma}  
   \right) \absolg{X}
  \right]
  \mathbf{G}
  +
  C
  \left(
  |\tau|^{-1} \cdot \absolg{\Gamma^\ast} 
  +
  \red{
   |q \tau| \cdot \absolg{\F}
  }
  \right)
  \sqrt{\mathbf{G}}
 \,.
\end{equation}
\end{lem}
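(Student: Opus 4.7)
The plan is to start from the explicit formula for $d\mathbf{G}/dT$ displayed immediately above the lemma statement, and estimate the five summands on the right-hand side separately by Cauchy--Schwarz in $g$, absorbing every factor of $p^0$, $\widehat{p}$ into either $\sqrt{\mathbf{G}}=|p|_g$ or a uniform constant via the momentum inequalities of Lemma~\ref{lem: momentum inequalities}. The three ``easy'' terms are the first, third, and fifth: for these we use
\begin{align*}
 |p|^2_{\dot{g}} &\leq |\dot{g}|_g \cdot \mathbf{G}, \\
 |4g_{ik}\Gamma^i_j p^jp^k| &\leq C |\Gamma^\ast_\ast|_g \cdot \mathbf{G}, \\
 |4\tau q \langle \F, p\rangle_g| &\leq 4|q\tau|\cdot |\F|_g \sqrt{\mathbf{G}},
\end{align*}
which already account for two of the four terms in the claimed upper bound.

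The two remaining terms, $2\tau^{-1}\langle \Gamma^\ast, p\rangle_g\, p^0$ and $\tfrac{2\tau}{N}(\Sigma_{bc}+\tfrac{1}{3}g_{bc})\tfrac{p^bp^c}{p^0}\langle X, p\rangle_g$, require the pointwise momentum inequalities. For the second term Cauchy--Schwarz gives the bound $2|\tau|^{-1}|\Gamma^\ast|_g\,|p|_g\,p^0$, and the issue is to prevent $p^0$ from carrying a growth. Lemma~\ref{lem: momentum inequalities} gives $p^0/\widehat{p}\lesssim [N(1-|\widehat{X}|^2_g)]^{-1}$, so under the smallness assumption on $N-3$ and $X$ we obtain $p^0\lesssim \widehat{p}$; combined with $|\tau||p|_g\lesssim \widehat{p}$ one sees that $p^0$ stays uniformly of order $\sqrt{1+\tau^2|p|^2_g}$, i.e. bounded away from $\widehat{p}$ from above by a universal constant, giving the desired contribution $C|\tau|^{-1}|\Gamma^\ast|_g\sqrt{\mathbf{G}}$.

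For the term involving $X$, the key cancellation between $\tau|p|_g$ in the numerator and $p^0$ in the denominator is what produces the right scaling. Bounding pointwise one gets
\[
 \left|\tfrac{2\tau}{N}(\Sigma_{bc}+\tfrac{1}{3}g_{bc})\tfrac{p^bp^c}{p^0}\langle X, p\rangle_g\right|
 \leq
 C(1+|\Sigma|_g)|X|_g \cdot \frac{|\tau|\,|p|^3_g}{p^0},
\]
and then Lemma~\ref{lem: momentum inequalities} (first inequality) together with $p^0\gtrsim \widehat{p}/N$ yields $|\tau||p|_g/p^0\lesssim 1$, so that $|\tau||p|^3_g/p^0\lesssim |p|^2_g=\mathbf{G}$, producing the factor $(1+|\Sigma|_g)|X|_g\,\mathbf{G}$. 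Summing the five contributions gives exactly the claimed estimate, with a constant $C$ depending only on the $L^\infty$ smallness of $N-3$, $X$, and the compact support bound on $\tau|p|_g$.

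The main (modest) obstacle is the balance of $\tau$- and $p^0$-weights in the two borderline summands; this is resolved by observing that the two momentum inequalities of Lemma~\ref{lem: momentum inequalities} are tailor-made to convert the $\tau$-factors and the $(p^0)^{-1}$ factor respectively into bounded quantities under the standing smallness and compact-support hypotheses. No additional input beyond these pointwise inequalities and the smallness of $N$, $X$ is required.
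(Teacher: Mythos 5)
Your proposal is correct and follows essentially the same route the paper intends: the paper derives the explicit formula for $d\mathbf{G}/dT$ immediately before the lemma and simply states that it "implies the following result," the implicit argument being exactly your term-by-term Cauchy--Schwarz estimate combined with the pointwise momentum inequalities (uniform boundedness of $p^0$ from the compact momentum support, and $|\tau|\,|p|_g/p^0 \lesssim 1$ from $\widehat{p} \gtrsim |\tau|\,|p|_g$ together with $p^0 \gtrsim \widehat{p}/N$). Your handling of the two borderline summands is precisely the intended mechanism.
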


Defining the supremum of the values of $\mathbf{G}$ in the support of $f$ at a fixed time $T$ analogous to \cite{AF17} as
\begin{equation}
 \mcG[T]
 :=
 \sup
 \left\{
  \sqrt{\mathbf{G}(T,x,p)} \, | \,
  (x,p) \in \supp f(T, .\, , . )
 \right\}
 \,,
\end{equation}
we derive an estimate to control the momentum support in the following proposition.
\begin{prop}\label{Prop: curly g}
\red{For $\delta$-small data \eqref{eq: smallness assumption} with $\delta$ sufficiently small}, the following estimate holds 
\begin{align}
 \left. \mcG  \right|_{T}
& \leq
  \left[
     \left. \mcG  \right|_{T_0}
     +
   C \int_{T_0}^T
    \left(
      e^s \Hnorm{\Gamma^\ast}{2} 
      +
      \red{
       \Fnorm{F}{2}
     }
    \right) 
      ds
  \right]
 \nn
 \\
 &
 \quad
 \times 
 \exp
  \left[
    C \int_{T_0}^T
    \left(
      \Hnorm{\Sigma}{2}
      +
      \Hnorm{N-3}{2}
      +
      \Hnorm{X}{2}
      +
      \Hnorm{\Gamma^\ast_\ast}{2}
      +
      \red{
      |q| e^{-s} \Fnorm{F}{2}
     }
    \right) 
    ds
  \right]    
 \,.
\end{align}
\begin{proof}
From Lemma \ref{lem: estimate of dG to dT} and the definition of $\F^i$, we get
\begin{equation*}
\begin{aligned}
\frac{d}{dT} \sqrt{\mathbf{G}}
&\leq
C
\left(
 \Hnorm{\Sigma}{2}
 +
 \Hnorm{N-3}{2}
 +
 \Hnorm{X}{2}
 +
 \Hnorm{\Gamma^\ast_\ast}{2}
 +
 \red{
 |q \tau| \cdot \Fnorm{F}{2}
 }
\right)
\sqrt{\mathbf{G}}
\\
& \quad
+
C 
 \left(
  |\tau|^{-1}  \Hnorm{\Gamma^\ast}{2}
  +
  \red{
  \Fnorm{F}{2}
  }
 \right)
\,.
\end{aligned}
\end{equation*}
Applying Gr\"onwall's lemma completes the proof.
\end{proof}
\end{prop}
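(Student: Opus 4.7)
The plan is to recast the pointwise derivative inequality of Lemma~\ref{lem: estimate of dG to dT} as a linear differential inequality for $\sqrt{\mathbf{G}}$ along each individual characteristic, estimate all coefficients in terms of Sobolev norms, and then apply Grönwall's lemma before taking the supremum over the momentum support of $f$.

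First I would divide the inequality of Lemma~\ref{lem: estimate of dG to dT} by $2\sqrt{\mathbf{G}}$ (the case $\mathbf{G}=0$ being trivial) to obtain the pointwise bound
$$
 \left|\frac{d\sqrt{\mathbf{G}}}{dT}\right|
 \leq
 \tfrac{C}{2}\bigl[\absolg{\dot g}+\absolg{\Gamma^\ast_\ast}+(1+\absolg{\Sigma})\absolg{X}\bigr]\sqrt{\mathbf{G}}
 +
 \tfrac{C}{2}\bigl(|\tau|^{-1}\absolg{\Gamma^\ast}+|q\tau|\absolg{\F}\bigr).
$$
Each pointwise quantity on the right I would then replace by its $L^\infty$-norm and dominate via the three-dimensional Sobolev embedding $H^2(M)\hookrightarrow L^\infty(M)$. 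For the time derivative $\dot g$ I would plug in the evolution equation $\p_T g=2N\Sigma+2(N/3-1)g-\mcL_X g$ from \eqref{eq: rescaled Einstein}, whose $L^\infty$-norm is controlled under the $\delta$-smallness hypothesis by $\Hnorm{\Sigma}{2}+\Hnorm{N-3}{2}+\Hnorm{X}{2}$.

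The one step that requires real care is bounding $\absolg{\F}$ through \eqref{eq: eq for frakF} in terms of $\Fnorm{F}{2}$. Because the Faraday norm \eqref{eq: L2 norm of F} places a weight $\tau^2$ on the $F_{0i}$ component, Sobolev embedding yields $\Lnorm{F_{0i}}{\infty}\lesssim |\tau|^{-1}\Fnorm{F}{2}$, whereas $\Lnorm{F_{ij}}{\infty}\lesssim \Fnorm{F}{2}$. The first piece of $\F^i$, namely $h^{ij}F_{0j}$, therefore contributes the dominant factor $|\tau|^{-1}\Fnorm{F}{2}$; multiplying by the external $|q\tau|$ cancels the singular weight precisely, producing a source term of order $|q|\Fnorm{F}{2}$. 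The remaining momentum-weighted pieces $(p^a/p^0)\cdots$ can be bounded with the help of Lemma~\ref{lem: momentum inequalities}, and under the bootstrap assumption they contribute lower-order terms which one absorbs into the coefficient of $\sqrt{\mathbf{G}}$, giving the factor $|q\tau|\Fnorm{F}{2}$ there. This weight-matching is the main (and essentially only) obstacle in the proof.

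Writing the resulting inequality as $\tfrac{d}{dT}\sqrt{\mathbf{G}}\leq a(T)\sqrt{\mathbf{G}}+b(T)$ with
\begin{align*}
 a(T)&=C\bigl(\Hnorm{\Sigma}{2}+\Hnorm{N-3}{2}+\Hnorm{X}{2}+\Hnorm{\Gamma^\ast_\ast}{2}+|q\tau|\Fnorm{F}{2}\bigr),\\
 b(T)&=C\bigl(|\tau|^{-1}\Hnorm{\Gamma^\ast}{2}+\Fnorm{F}{2}\bigr),
\end{align*}
the standard linear Grönwall inequality yields
$$
 \sqrt{\mathbf{G}}(T)
 \leq
 \Bigl[\sqrt{\mathbf{G}}(T_0)+\int_{T_0}^T b(s)\,ds\Bigr]\exp\Bigl(\int_{T_0}^T a(s)\,ds\Bigr)
$$
pointwise in $(x,p)$. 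Taking the supremum over $\supp f(T,\cdot,\cdot)$ replaces $\sqrt{\mathbf{G}}$ by $\mcG$; substituting $|\tau|^{-1}=|\tau_0|^{-1}e^T$ and $|q\tau|=|q\tau_0|e^{-T}$ and absorbing the constants into $C$ produces exactly the form of the estimate claimed in Proposition~\ref{Prop: curly g}.
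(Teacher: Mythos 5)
Your proposal is correct and follows essentially the same route as the paper: divide the inequality of Lemma~\ref{lem: estimate of dG to dT} by $2\sqrt{\mathbf{G}}$, control the coefficients by $H^2$-norms via Sobolev embedding, split $\F^i$ according to \eqref{eq: eq for frakF} so that the $|q\tau|\cdot|\tau|^{-1}$ cancellation on the $F_{0j}$ piece yields the inhomogeneous source $\Fnorm{F}{2}$ while the momentum-weighted pieces contribute $|q\tau|\Fnorm{F}{2}$ to the coefficient of $\sqrt{\mathbf{G}}$, and then apply Gr\"onwall before taking the supremum over the momentum support. This is exactly the (terse) argument in the paper, with the weight-matching step you single out being the only genuinely new ingredient relative to \cite{AF17}.
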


\section{Energy estimates}\label{Sec: energy estimate}

\subsection{Energy estimate for the distribution function}

We first give the energy estimate for the distribution function.

Before we proceed, we prove the following lemma:
\begin{lem}\label{lem: inequality for phi}
Assume that \red{$\delta$-small data \eqref{eq: smallness assumption} with $\delta$ sufficiently small} holds. Then, for a generic sufficiently regular function $\phi$ on $TM$, the following inequality holds
\begin{equation}
 \sum_{m \leq k} \int \absolg{ D^m \int \phi \,  d\mu_p}^2 \, \volg
 \less
 \sum_{m \leq k} \int
 \left(
  \int \bar{p} \absolg{ \sasacov^m \phi} d\mu_p
 \right)^2
\volg
 \,.
\end{equation}
\begin{proof}
To obtain this inequality, we use (cf.~(4.26) in \cite{AF17})
\begin{equation}
 D_a \int \phi \, d\mu_p
 =
 \int \A_a \phi \, d\mu_p
 \,.
\end{equation}
For higher covariant derivatives one gets terms of the form
\begin{equation}\label{eq: mixed terms}
 p^i \bullet \text{Riem}[g] \bullet \sasacov^\ell \phi
 \,,
\end{equation}
where $\ell \leq m-1$ and $\bullet$ denotes various abstract contractions of tensor indices. Then, the smallness assumption on the metric yields
\begin{align}
 \sum_{m \leq k} 
 \int \left| D^m \int \phi d\mu_p \right|^2 d\mu_g
 &\less 
 \sum_{m \leq k}  \int
 \left(
  \int \absolg{ \sasacov^m \phi} \, d\mu_p
  +
   \int \absolg{p} \absolg{ \sasacov^{m-1} \phi} \, d\mu_p 
 \right)^2
\volg
\nn
\\
 &\less 
 \sum_{m \leq k}  \int
 \left[
  \int 
  \left( 1 + \absolg{p} \right)  
  \absolg{ \sasacov^m \phi} \, d\mu_p
 \right]^2
 \volg
\nn
\\
& \less 
 \sum_{m \leq k} \int
 \left(
  \int \bar{p} \absolg{ \sasacov^m \phi} \, d\mu_p
 \right)^2
\volg
\,.
\end{align}
\end{proof}
\end{lem}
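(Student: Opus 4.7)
The plan is to establish the inequality by differentiating the fiber-integrated quantity $\int \phi \, d\mu_p$ using the horizontal derivatives $\A_a$ on $TM$, and then controlling commutator/curvature terms that arise when iterating.

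First I would recall the basic identity that integration along fibers commutes with the horizontal derivative, i.e.\ $D_a \int \phi \, d\mu_p = \int \A_a \phi \, d\mu_p$. This is the base case $m = 1$, where the fiber volume element $d\mu_p$ is parallel with respect to $\A_a$ (since $\A_a$ incorporates the Christoffel correction precisely to kill the derivative of $|g|^{1/2}$), so no extra boundary or divergence terms appear.

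Next I would iterate. For $m \geq 2$, applying $D^m$ to $\int \phi \, d\mu_p$ produces the leading term $\int \A^m \phi \, d\mu_p$ plus a finite sum of commutator corrections. Each commutator $[\A_a, \A_b]$ produces a vertical derivative $\B$ contracted with the Riemann tensor of $g$ and with a factor of $p$ (coming from the horizontal lift), which is precisely the structure indicated in \eqref{eq: mixed terms}. Thus schematically
\begin{equation*}
D^m \int \phi \, d\mu_p
=
\int \A^m \phi \, d\mu_p
+
\sum_{\ell \leq m-1}
\int p^{\ast} \cdot \mathrm{Riem}[g] \cdot \sasacov^\ell \phi \, d\mu_p,
\end{equation*}
where the lower-order terms $\sasacov^\ell \phi$ come from repeated use of the commutation, and $\A^m \phi$ can in turn be expressed in terms of $\sasacov^m \phi$ plus analogous lower-order correction terms.

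Then I would pointwise-estimate by absolute values, invoke the $\delta$-smallness assumption \eqref{eq: smallness assumption} to bound $|\mathrm{Riem}[g]|_g \lesssim 1$ uniformly (or at least absorb its $H^k$ norm), and obtain
\begin{equation*}
\left| D^m \int \phi \, d\mu_p \right|
\lesssim
\sum_{\ell \leq m}
\int (1 + \absolg{p}) \cdot \absolg{ \sasacov^\ell \phi} \, d\mu_p.
\end{equation*}
Since $1 + \absolg{p} \lesssim \bar p$ (by the definition of $\bar p = N p^0$ together with Lemma~\ref{lem: momentum inequalities} and the smallness of $N - 3$, $X$), squaring, integrating in $x$, and summing over $m \leq k$ yields the claimed inequality.

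The main obstacle is bookkeeping: one must be careful that every commutator $[\A, \A]$ truly contributes a factor of $p$ (so it can be bounded by $\bar p$) and that the combinatorial expansion of $D^m \int \phi \, d\mu_p$ in terms of $\sasacov^\ell \phi$ introduces no derivatives of order greater than $m$. Both facts follow from the explicit form of the Sasaki connection coefficients $\bm{\Gamma}$ (cf.\ the appendix referenced in the excerpt), but tracking the structure of the curvature terms carefully is the non-trivial step. Everything else is a smallness/Cauchy--Schwarz argument.
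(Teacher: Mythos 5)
Your proposal follows exactly the route of the paper's proof: the fiber-integration identity $D_a \int \phi \, d\mu_p = \int \A_a \phi\, d\mu_p$ as base case, iteration producing commutator terms of the schematic form $p^i \bullet \mathrm{Riem}[g] \bullet \sasacov^\ell \phi$ with $\ell \leq m-1$, the smallness assumption to absorb the curvature factors, and finally $1 + \absolg{p} \lesssim \bar p$ to reach the stated right-hand side. The argument is correct and matches the paper's (terser) proof step for step.
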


\begin{prop}\label{prop: T derivative of L2-energy of f}
Let $f$ be a solution of the transport equation \eqref{eq: Vlasov equation}. 
For $\delta > 0$ sufficiently small and $(g, \Sigma, f, \omega, \dot{\omega}) \in \mcB^{6,5,5,6,5}_{\delta, \tau} (\gamma,0,0, 0, 0)$ the estimate
\begin{align}
\p_T \energysq{k}{\mu}
&\leq
C
 \Big[
   \Hnorm{N - 3}{k}
   +
   \Hnorm{\Sigma}{k}
   +
   \Hnorm{X}{k + 1}
   +
   |\tau| \cdot \Hnorm{N^{-1} \p_T X}{k}
   +
  |\tau|^{-1} \cdot \Hnorm{N^{-1} \Gamma^\ast}{k}
\nn
 \\
 &
 \qquad
  +
  \Hnorm{\Gamma^\ast_\ast}{k}
  +
  \Hnorm{(\Sigma + g) X}{k}
  +
  |\tau| \mcG
  +
 \blue{
   |q | \cdot \Fnorm{F}{k}
  }   
 \Big]
 \energysq{k}{\mu}
 \,,\label{eq: T derivative of L2-energy of f}
\end{align}
holds for $k > 5/2$ and $\mu \geq 3$, provided that $|\tau| \mcG$ is bounded by a constant.
\begin{proof}
Recall the following notation. Using the introduced bold Latin indices we define the frame $\{\theta_{\mathbf{a}}\}_{\mathbf{a} \leq 6} := \{ \A_1, \A_2, \A_3, \B_1, \B_2, \B_3 \}$. That is, the indices $\mathbf{a} \in \{1,2,3 \}$ correspond to the horizontal directions whereas the indices $\mathbf{a} \in \{4, 5, 6\}$ correspond to the vertical directions.
The proof is identical to the proof of  Proposition 12 in \cite{AF17} except for the last term. However, for the last term which comes from the Maxwell part of the transport equation \eqref{eq: Vlasov equation} one can proceed similarly. Indeed; we consider 
\begin{equation}\label{eq: Sasaki cov on dt f}
 \sasacov_{\mathbf{a}_1} \cdots  \sasacov_{\mathbf{a}_k} \p_T f
 \,,
\end{equation}
which arises from $\p_T \absolsasag{\sasacov^k f }^2 $ when applying $\p_T$ to the $L^2$-energy \eqref{eq: L2 energy of f}. For all the other terms involving $\p_T$ we refer to Proposition 12 in \cite{AF17}. Therefore, we will focus on the Maxwell part of the transport equation in \eqref{eq: Sasaki cov on dt f}
\begin{align*}
 \sasacov_{\mathbf{a}_1} \cdots  \sasacov_{\mathbf{a}_k}
 \left(
    \tau q \F^a \B_a f
 \right)
 &=
 \tau q
 \Big(
  \theta_{\mathbf{a}_1} \sasacov_{\mathbf{a}_2} \cdots  \sasacov_{\mathbf{a}_k}
  -
  \sum_{2 \leq j \leq k}
   \mathbf{\Gamma}^\mathbf{c}_{\mathbf{a}_j \mathbf{a}_1} \sasacov_{\mathbf{a}_2} \cdots \sasacov_{\mathbf{a}_c} \cdots  \sasacov_{\mathbf{a}_k}
 \Big)
  \left(
     \F^a \B_a f
 \right)
 \\
 &=
  \tau q  \theta_{\mathbf{a}_1} \cdots  \theta_{\mathbf{a}_k} 
  \left(
     \F^a \B_a f
 \right)
 +
 \ldots
 +
 \tau q (-1)^{k - 1} \mathbf{\Gamma}^{\mathbf{c}_1}_{\mathbf{a}_k \mathbf{a}_1}   \mathbf{\Gamma}^{\mathbf{c}_2}_{\mathbf{c}_1 \mathbf{a}_2} 
 \cdots
  \mathbf{\Gamma}^{\mathbf{c}_{k-1}}_{\mathbf{c}_{k-2} \mathbf{a}_{k-1}} 
   \left(
     \F^a \B_a f
 \right) 
 \,,
\end{align*}
where we suppressed the mixed terms. After commuting the operator $\F^a \B_a$ to the front by using the relations in Appendix~\ref{App: Commutation} one obtains
\begin{equation}\label{eq: F B on Sasaki}
 \tau q   \F^a \B_a ( \sasacov_{\mathbf{a}_1} \cdots  \sasacov_{\mathbf{a}_k} f )
\,,
\end{equation}
and other mixed terms which are of the form (cf.~\eqref{eq: mixed terms})
\begin{equation}
  \tau q  D^{k_1} \F^a
  \left(
   p^i D^{k_2} \text{Riem}
  \right)^{k_3} 
  D^{k_4} f
 \,,
\end{equation}
with $\sum_i k_i = k$. From \eqref{eq: F B on Sasaki} and after integration by parts when evaluating  
$$
 \tau q  \sum_{k \leq \ell} \int_{TM} \langle p \rangle^{2\mu + 4 (\ell - k)} \mathbf{g}^{\mathbf{a}_1 \mathbf{b}_1} \cdots \mathbf{g}^{\mathbf{a}_k \mathbf{b}_k}
 \sasacov_{\mathbf{b}_1} \cdots  \sasacov_{\mathbf{b}_k} f \cdot \sasacov_{\mathbf{a}_1} \cdots  \sasacov_{\mathbf{a}_k} \p_T f
 \, d \mu_{\mathbf{g}}
\,, 
$$
one finds the corresponding term $|q| \cdot \Fnorm{F}{k}$ in \eqref{eq: T derivative of L2-energy of f}. All the mixed terms are of lower order and can be absorbed in the latter term.

\end{proof}
\end{prop}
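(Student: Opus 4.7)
The plan is to differentiate $\energysq{k}{\mu}$ in $T$, substitute the transport equation \eqref{eq: Vlasov equation} for $\p_T f$ inside the resulting $L^2$-pairing on $TM$, and then redistribute derivatives via integration by parts on $TM$ together with the Sasaki-connection commutators. Since the first five terms on the right-hand side of \eqref{eq: Vlasov equation} are the ones already present in the Einstein--Vlasov case, their contributions should reproduce verbatim the bounds of \cite{AF17} and account for every factor in \eqref{eq: T derivative of L2-energy of f} except the new Maxwell term $|q|\cdot \Fnorm{F}{k}$. The real work is to handle the new term $(**) = \tau q \F^a \B_a f$.

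First I would split $\p_T \energysq{k}{\mu}$ into three groups: (i) $T$-derivatives of the polynomial weight $\langle p\rangle^{2\mu+4(k-j)}$ and of the volume element $d\mu_{\sasag}$, which through $\p_T g = 2N\Sigma + 2(N/3-1)g - \mathcal{L}_X g$ generate the factors $\Hnorm{\Sigma}{k}$, $\Hnorm{N-3}{k}$ and $\Hnorm{X}{k+1}$; (ii) $T$-derivatives of the Sasaki metric coefficients that contract the $\sasacov^j f$ factors, handled analogously; and (iii) the principal pairing $2\sum_{j\leq k}\int \langle p\rangle^{2\mu+4(k-j)} \langle \sasacov^j f, \sasacov^j \p_T f\rangle_{\sasag}\, d\mu_{\sasag}$, which is the only piece that requires the transport equation.

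For the Maxwell contribution I would use the Sasaki commutation formulas of Appendix~B to write
\begin{equation*}
\sasacov^j(\tau q \F^a \B_a f) = \tau q \F^a \B_a \sasacov^j f + \sum_{j_1+j_2+j_3=j,\ j_3 < j} c_{j_1 j_2 j_3}\, D^{j_1}\F \cdot (p\, D^{j_2}\riem) \cdot \sasacov^{j_3}f,
\end{equation*}
and then integrate by parts in $p$ on the principal piece, using the compact momentum support of $f$ to kill the boundary term. The resulting expression $\tau q \int \B_a(\F^a\cdot\text{weights})\,|\sasacov^j f|_{\sasag}^2$ is controlled using the explicit form \eqref{eq: eq for frakF} of $\F^a$ together with Lemma \ref{lem: momentum inequalities} and the smallness of $\xhat$; these give pointwise bounds on $\F^a$ and $\B_a \F^a$ in terms of components of $F$, and Sobolev embedding at $k>5/2$ then yields a bound of the form $|q|\cdot \Fnorm{F}{k} \cdot \energysq{k}{\mu}$. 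The lower-order mixed commutator terms are estimated with Cauchy--Schwarz, placing the lower-derivative factor in $L^\infty$ via Sobolev embedding, and contribute the same type of bound.

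The main obstacle I expect is the weight bookkeeping. Because $\F^a$ carries a factor $p^a/p^0$ (see \eqref{eq: eq for frakF}), each $\B_a$ that lands on the product $\F^a \langle p\rangle^{2\mu+4(k-j)}$ risks producing an uncontrolled polynomial growth in $p$. The hypothesis $\mu\geq 3$ together with the assumed pointwise bound $|\tau|\mcG\leq C$ provides exactly the margin needed: combined with Lemma \ref{lem: momentum inequalities}, these ensure every commutator term can be absorbed into a constant multiple of $\energysq{k}{\mu}$ with coefficient $|q|\cdot \Fnorm{F}{k}$, which together with the Vlasov-type terms yields \eqref{eq: T derivative of L2-energy of f}.
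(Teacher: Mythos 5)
Your proposal follows essentially the same route as the paper: defer the non-Maxwell terms to Proposition~12 of \cite{AF17}, commute $\F^a\B_a$ past the Sasaki derivatives via the Appendix~\ref{App: Commutation} relations, integrate by parts in the momentum variable on the principal piece, and absorb the mixed commutator terms (of the schematic form $D^{k_1}\F\cdot(p\,D^{k_2}\mathrm{Riem})^{k_3}\cdot D^{k_4}f$) as lower order, yielding the coefficient $|q|\cdot\Fnorm{F}{k}$. Your additional remarks on the weight bookkeeping and the roles of $\mu\geq 3$ and the boundedness of $|\tau|\mcG$ are consistent with, and somewhat more explicit than, the paper's treatment.
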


\subsection{Estimate for the current density}
As argued previously, the specific form of the effective terms arising from the current, $\jnull$ and $\jvec_i$ is essential to obtain sufficient bounds on the Maxwell fields. The following lemma establishes these bounds.
\begin{lem}\label{lem: J estimates}
The following estimates 
\begin{align}
 \Hnorm{\jnull}{k}
 &\leq
 C |q| \cdot
 \Hnorm{\rho_{\text{{\tiny{V}}}} }{k}
 \,,\label{J-est}
 \\ 
 \Hnorm{\jvec}{k}
 &\leq
 C |q|
 \left(
  |\tau|^{-1} \Hnorm{X}{k} \Hnorm{\rho_{\text{{\tiny{V}}}} }{k}
  +
  \Hnorm{j_{\text{{\tiny{V}}}} }{k}
 \right)
 \,,
\end{align}
hold, provided that $|\tau| \mcG$ is bounded by a constant and \red{$\delta$-small data \eqref{eq: smallness assumption} with $\delta$ sufficiently small} holds.
\begin{proof}
 To obtain the estimates above we need following inequalities
$$
 \sup_{ (x, p) \, \in \, \supp  f(T, .  , . )  } \phat
 \leq
 C \left( 1 + |\tau| \mcG \right)
 \,,
\quad
    \sup_{ (x, p) \, \in \, \supp  f(T, .  , . )  } \frac{\phat}{ \pbar^2} 
\leq
C
\left( 1 + |\tau| \mcG \right)^3
\,,
$$
and
$$
 \sup_{ (x, p) \, \in \, \supp  f(T, .  , . )  } \frac{1}{\pbar}
 \leq
 C \left( 1 + |\tau| \mcG \right)
 \,,
$$
where we applied the smallness conditions. Then, 
\begin{align}
 \Hnorm{ \jnull }{k}
 &=
 \Hnormadj{ q N \int f  d\mu_p }{k}
 \nn
 \\
 &\leq 
 C |q| \left( 1 + |\tau| \mcG \right)^3 \Hnorm{N^{-1} \rho_{\text{{\tiny{V}}}} }{k}
 \,,
\end{align}
from which the first estimate follows. Note here that the expression $\widehat p$ and $p^0$ appear via $\int f d\mu_p=\int f (p^0)^2/\widehat p \cdot \widehat p/(p^0)^{2} d\mu_p$ to generate $\rho_V$. When the factor $ \widehat p/(p^0)^{2}$ is hit by derivatives of the norm, the generated terms can be estimated by a uniform constant invoking the smallness assumptions. All those terms are absorbed into the constant C. 

For the second estimate we have
\begin{align}
 \Hnorm{\jvec}{k}
 &=
 \Hnormadj{q \tau^{-1} X_i \int f \frac{p^0}{\phat} d\mu_p + q \int f \frac{p_i}{\phat} d\mu_p }{k}
 \nn
 \\
 &\leq 
 C |q| \left( 1 + |\tau| \mcG \right)
 \left(
  |\tau|^{-1}  \Hnorm{X}{k}  \Hnorm{\rho_\text{{\tiny{V}}}}{k}
  +
  \Hnorm{j_\text{{\tiny{V}}}}{k}
 \right)
 \,,
\end{align}
{where $\rho_V$ and $j_V$ are defined in \eqref{eq: Vlasov energy momentum tensor}.
}
Then, the assumption on the boundedness of $|\tau| \mcG$ finishes the proof.
\end{proof}
\end{lem}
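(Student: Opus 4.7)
The plan is to rewrite each current-density integral so that the Vlasov matter quantities $\rho_{\text{{\tiny{V}}}}$ and $j_{\text{{\tiny{V}}}}^a$ appear as leading factors, with all remaining multipliers being uniformly bounded on the support of $f$ by the hypothesis that $|\tau|\mcG$ stays bounded.

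Starting from $\jnull = q N \int f\, d\mu_p$ in \eqref{eq: J null}, I would insert the identity $1 = \tfrac{(p^0)^2}{\phat}\cdot\tfrac{\phat}{(p^0)^2}$ under the integral sign. The extra factor $\phat/(p^0)^2 = N^2\phat/\pbar^{\,2}$ is pointwise bounded on $\supp f$: Lemma \ref{lem: momentum inequalities} together with the hypothesis that $|\tau|\mcG$ is uniformly bounded makes $\phat$ itself uniformly bounded on $\supp f$, while the smallness conditions plus Lemma \ref{lem: uniformly positiveness} ensure $1/\pbar$ is bounded. Hence the integrand defining $\jnull$ differs from that of $\rho_{\text{{\tiny{V}}}}/(4\pi N^2)$ only by a uniformly bounded algebraic expression in $(N, X, g, p)$, and standard Moser-type product estimates in $H^k$ (valid for $k > 3/2$) yield the claimed bound, after absorbing uniformly-controlled factors of $N$ and $N^{-1}$ into the constant.

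For the second estimate, I would expand $\tilde{p}_a/\phat = -N\mathcal{P}_a$ to write $\jvec_i$ as $\pm q\tau^{-1}X_i\int f\,(p^0/\phat)\,d\mu_p \pm q\int f\,(p_i/\phat)\,d\mu_p$. The first piece is handled exactly as before: insert $p^0/p^0$ to recover $(p^0)^2/\phat$ (hence $\rho_{\text{{\tiny{V}}}}$) times a uniformly bounded pointwise factor $1/p^0 = N/\pbar$, and Sobolev product estimates then supply the $|\tau|^{-1}\,\Hnorm{X}{k}\Hnorm{\rho_{\text{{\tiny{V}}}}}{k}$ contribution. The second piece is rewritten as $\int f \,\tfrac{p^0\,p_i}{\phat}\cdot\tfrac{1}{p^0}\,d\mu_p$, whose leading integrand is proportional to $j_{\text{{\tiny{V}}}, i}/N$ (with index lowered by $g$) and whose remaining factor is again uniformly bounded; the $\Hnorm{j_{\text{{\tiny{V}}}}}{k}$ contribution then follows. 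The main obstacle here is purely bookkeeping: one must verify that the pointwise bounds on $\phat$, $1/\pbar$, and $\phat/\pbar^{\,2}$ propagate to bounds on all spatial derivatives up to order $k$ appearing in the Moser estimate, which reduces to the smallness bounds on $N-3$, $X$, and $g-\gamma$ together with the momentum-support control from Proposition \ref{Prop: curly g}.
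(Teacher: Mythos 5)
Your proposal is correct and follows essentially the same route as the paper: the same insertion of $\tfrac{(p^0)^2}{\phat}\cdot\tfrac{\phat}{(p^0)^2}$ to expose $\rho_{\text{{\tiny{V}}}}$, the same splitting of $\jvec_i$ into the $X_i$-weighted $\rho_{\text{{\tiny{V}}}}$ piece and the $j_{\text{{\tiny{V}}}}$ piece, and the same pointwise bounds on $\phat$, $1/\pbar$, and $\phat/\pbar^{\,2}$ over $\supp f$ combined with Moser-type product estimates and the smallness assumptions.
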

{
\begin{rem}
Estimate \eqref{J-est} enables us to estimate $\jnull$ by the energy density of the distribution function and eventually by the full energy density, which, as shown below, is uniformly bounded. This prevents this term to pick up the loss from the Vlasov energy.
\end{rem}
}

\subsection{Energy estimate for the vector potential}

As explained in Section~5.1 of  \cite{BFK19}, the structure of the system \eqref{eq: elliptic of psi}--\eqref{eq: wave eq. of omega} suggests the following energy
\begin{equation}\label{eq: energy of omega}
\begin{aligned}
 \EMax_k(\omega)
 &:=
 \sum_{\ell = 0}^{k-1} \int_M
  \scalpr{(\Delta_H)^\ell \mcL_{e_0} \omega}{\mcL_{e_0} \omega}
  +
  \scalpr{(\Delta_H)^{\ell+1} \omega}{\omega} \, d\mu_g
 \\
 &\simeq
 \Hgnorm{\mcL_{e_0} \omega}{k-1}^2
 +
  \Hgnorm{\omega}{k}^2
\end{aligned}
\end{equation}
with $k \geq 1$. In the following we omit the argument $\omega$ and write $\EMax_k \equiv \EMax_k(\omega)$ for simplicity.
\begin{rem}
The gauge condition $\omega \perp \text{ker} (\Delta_H)$ allows us to control the $L^2$-norm of $\omega$ by $\EMax_k$.
\end{rem}

\begin{rem}
Under smallness assumptions the $L^2$-Sobolev norm of the Faraday tensor \eqref{eq: L2 norm of F} is equivalent to the energy \eqref{eq: energy of omega} and the norm of $\Psi$ in the following sense 
\begin{equation}\label{eq: equivalence of F and E}
 \Fnorm{F}{k}
 \simeq
   \Hnorm{\Psi}{k+1}
   +
   \sqrt{\EMax_{k+1}}
 \,.
\end{equation}
\begin{rem}
Both terms on the right-hand side, as shown below, are uniformly bounded.
\end{rem}

\end{rem}
\begin{lem}\label{lem: dT of Ek}
Let $F$ be a solution of \eqref{eq: rescaled maxwell} and let $A \in \Omega^1(\mathcal{M})$ be a gauged vector potential for $F$, $\Psi$, and $\omega$ as in Lemma~\ref{lem: slice-adapted gauge}. Then, for $k>5/2$ and assuming $\Hnorm{N}{k}$ is bounded by some constant, we have
\begin{align}
 \p_T \EMax_k
 &\leq
 C
 \left(
  \Hnorm{D \log N}{k-1}
  +
  \Hnorm{ \Pi}{k-1}  
  +
  \Hnorm{ \mathcal{S} }{k-1}
 \right) \EMax_k
\nn
\\
&\quad
+
C 
\left[
 \Hnorm{\p_{e_0} \Psi}{k}
 +
 \left(
  \Hnorm{\p_{e_0} \log N}{k}
  +
  \Hnorm{ \Pi}{k-1}    
 \right)
 \Hnorm{\Psi}{k}
 +
  \blue{ 
   |\tau| \cdot \Hnorm{\jvec}{k-1}
   }
\right]
\sqrt{\EMax_k}
\,.\label{eq: time derivative of Ek}
\end{align}
\begin{proof}
 The lemma is proved along the same lines as the proof of  Lemma~5.7 in \cite{BFK19}, and by using \eqref{eq: wave eq. of omega} and (cf. Eq.~(5.37) in \cite{BFK19})
\begin{equation}\label{eq: interior pr F}
 \Hnorm{i_{e_0}F}{k-1}
 \leq
 C
 \left(
  \Hnorm{D \Psi}{k-1}
  +
  \Hnorm{\Psi}{k-1} \Hnorm{D \log N}{k-1}
  +
  \sqrt{\EMax_k}
 \right)
 \,,
\end{equation}
where $i_Y: \Omega^\ell (\mathcal{M}) \longrightarrow \Omega^{\ell - 1} (\mathcal{M}) $ is the interior product for any vector field $Y \in \mathfrak{X} (\mathcal{M})$. Here, by abuse of notation, we consider the Sobolev norm of $i_{e_0} F \in C^\infty (\R, \Omega^1(M))$.
\end{proof}
\end{lem}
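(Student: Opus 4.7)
The plan is to differentiate $\EMax_k$ in $T$ term by term, reduce the top-order quantities to the right-hand side of the wave equation \eqref{eq: wave eq. of omega}, and harvest the identified sources against $\sqrt{\EMax_k}$ via Cauchy--Schwarz. Since $\p_T = N e_0 - \mcL_X$ on spatial forms (up to tensorial factors that are controlled by the smallness assumption on $N-3$ and $X$), differentiating $\int_M \langle (\Delta_H)^\ell \mcL_{e_0}\omega, \mcL_{e_0}\omega\rangle \dvol$ produces, besides lower-order terms generated by $\p_T g^{-1}$ and $\p_T \dvol$ (which by Lemma~\ref{lem: lemma 2.1 BFK} are schematically of type $\Pi \ast \EMax_k$), the principal contribution
\[
 2 \int_M \scalpr{(\Delta_H)^\ell \mcL_{e_0}\omega}{\mcL_{e_0}(\mcL_{e_0}\omega)} N \dvol.
\]
The analogous differentiation of $\int_M \langle (\Delta_H)^{\ell+1}\omega,\omega\rangle \dvol$ produces the matching term $2\int_M \langle (\Delta_H)^{\ell+1}\omega, \mcL_{e_0}\omega\rangle N\dvol$ plus lower-order pieces. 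These two principal contributions are combined after substituting the wave equation \eqref{eq: wave eq. of omega} for $\mcL_{e_0}\mcL_{e_0}\omega$.

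The key structural step is the cancellation of the two top-order terms after the substitution. Up to commutators the term $-\int \langle (\Delta_H)^\ell \mcL_{e_0}\omega, \Delta_H\omega\rangle N\dvol$ cancels with $\int \langle (\Delta_H)^{\ell+1}\omega, \mcL_{e_0}\omega\rangle N\dvol$ after integration by parts (using the self-adjointness of $\Delta_H$ on $\Omega^1(M)$). What survives is (i) a commutator remainder of the schematic form $[\mcL_{e_0},(\Delta_H)^j]$ applied to $\omega$ or $\mcL_{e_0}\omega$, which, by the commutator identities in Lemma~\ref{lem: lemma 2.1 BFK} iterated $k$-times, is bounded by $(\|D\log N\|_{H^{k-1}}+\|\Pi\|_{H^{k-1}}+\|\mathcal{S}\|_{H^{k-1}})\cdot\EMax_k$; and (ii) the pairing of $(\Delta_H)^\ell \mcL_{e_0}\omega$ with each of the source terms on the right-hand side of \eqref{eq: wave eq. of omega}.

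For the source terms, we integrate by parts to move at most $\ell\le k-1$ powers of $\Delta_H$ onto the paired factor, and then apply Cauchy--Schwarz, each time absorbing the resulting $\|(\Delta_H)^{\ell}\mcL_{e_0}\omega\|_{L^2}$ factor into $\sqrt{\EMax_k}$. The terms containing $F_{i\nullhat}$ (namely the $\tfrac{\p_i N}{N}F_{jk}$, $\trg\Pi\cdot F_{\nullhat k}$ and $\Pi_{ik}F_{j\nullhat}$ pieces) are converted into contributions of $\Psi$ and $\sqrt{\EMax_k}$ by the estimate \eqref{eq: interior pr F}, producing the $(\|D\log N\|_{H^{k-1}}+\|\Pi\|_{H^{k-1}})\EMax_k$ and $(\|\p_{e_0}\log N\|_{H^k}+\|\Pi\|_{H^{k-1}})\|\Psi\|_{H^k}\sqrt{\EMax_k}$ terms in \eqref{eq: time derivative of Ek}. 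The gradient source $\p_k(\p_{e_0}\Psi)$ yields the $\|\p_{e_0}\Psi\|_{H^k}\sqrt{\EMax_k}$ contribution after one integration by parts. Finally, the current source contributes
\[
 -\tau \int_M \scalpr{(\Delta_H)^\ell \mcL_{e_0}\omega}{\jvec} \dvol,
\]
which by Cauchy--Schwarz is bounded by $|\tau|\cdot\|\jvec\|_{H^{k-1}}\sqrt{\EMax_k}$, accounting for the last term in \eqref{eq: time derivative of Ek}.

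The main obstacle is bookkeeping: one must verify that every commutator arising from moving $\mcL_{e_0}$ past powers of $\Delta_H$ (and the associated Lie derivative of $g$, $N$, and the connection) generates only the three coefficient structures $D\log N$, $\Pi$, and $\mathcal S$ at top order, with all other remainders absorbable into lower-order terms under the smallness assumption on $(N,X,g-\gamma)$. Since the regularity threshold $k>5/2$ permits Sobolev products in the $H^{k-1}$ norms, all such remainders close into the claimed bound, recovering the statement.
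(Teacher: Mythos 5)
Your proposal is correct and follows essentially the same route as the paper, which simply defers to the energy-estimate argument of Lemma~5.7 in \cite{BFK19}: differentiate $\EMax_k$, substitute \eqref{eq: wave eq. of omega}, cancel the top-order terms by self-adjointness of $\Delta_H$, control commutators via Lemma~\ref{lem: lemma 2.1 BFK}, convert the $F_{\nullhat}$-sources through \eqref{eq: interior pr F}, and treat the new current source $-\tau\jvec_k$ by Cauchy--Schwarz to obtain the $|\tau|\cdot\Hnorm{\jvec}{k-1}\sqrt{\EMax_k}$ term. Your bookkeeping of which source terms produce which coefficient structures matches the claimed estimate.
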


\begin{lem} \label{lem-psi}
As long as $\Hnorm{D \log N}{k}$ is small enough the following estimate holds
\begin{align}
 \Hnorm{\Psi}{k+1}
 &\leq
 C
 \Big[
  \Hnorm{\Pi}{k-1}
  +
  \left( 1 + \Hnorm{\Pi}{k-1} \right) \Hnorm{D \log N}{k-1}
  +
  \Hnorm{ \mathcal{S} }{k-1}
 \Big]
 \sqrt{\EMax_k}
 \nn
 \\
 &\quad
 +
 \blue{
 C 
 \Big(
 \underbrace{ \Hnorm{\jnull}{k-1}}_{(*)}
 +
 |\tau| \cdot \Hnorm{X}{k-1} \Hnorm{\jvec}{k-1}
 \Big)
 }
\,.
\end{align}
\begin{rem}
The term $(*)$ is borderline, as it would yield a growth for the norm of $\Psi$. Uniform boundedness is achieved due to the estimates using the energy density as outlined above.
\end{rem}
\begin{proof}[Proof of Lemma \ref{lem-psi}]
By elliptic regularity and the fact that \eqref{eq: elliptic of psi} has a unique solution, we find
\begin{align*}
 \Hnorm{\Psi}{k+1}
 &\leq
 C 
\Big(
 \Hnorm{\Psi}{k} \Hnorm{D \log N}{k}
 +
 \Hnorm{[\mcL_{e_0}, \divg] \omega}{k-1}
 \nn
 \\
 &\qquad
 +
 \Hnorm{N^{-1}}{k - 1} \Hnorm{\jnull}{k-1}
 +
 |\tau| \Hnorm{\xhat}{k-1} \Hnorm{\jvec}{k-1}
\Big)\,.
\end{align*}
And from  Lemma~\ref{lem: lemma 2.1 BFK} it follows
\begin{equation*}
 \Hnorm{[\mcL_{e_0}, \divg] \omega}{k-1}
 \leq
 C
 \left[
  \Hnorm{\Pi}{k-1}
  +
  \left( 1 + \Hnorm{\Pi}{k-1} \right) \Hnorm{D \log N}{k-1}
  +
  \Hnorm{ \mathcal{S} }{k-1}
 \right]
 \sqrt{\EMax_k}
 \,.
\end{equation*}
Combining the results and using Lemma~\ref{lem: J estimates} completes the proof.
\end{proof}
\end{lem}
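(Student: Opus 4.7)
The plan is to view equation \eqref{eq: elliptic of psi} as a Poisson-type equation for $\Psi$ and apply elliptic regularity to gain two derivatives. Since $\Delta_g$ is a second-order elliptic operator on $M$ and, under the gauge condition $\int_M \Psi \, d\mu_g = 0$ from Lemma \ref{lem: slice-adapted gauge}, the operator is invertible on the orthogonal complement of constants, we obtain
\begin{equation*}
 \Hnorm{\Psi}{k+1}
 \leq
 C \Hnorm{\Delta_g \Psi}{k-1},
\end{equation*}
with a constant depending only on the metric (and hence uniform under the smallness assumption). Substituting the right-hand side of \eqref{eq: elliptic of psi} then reduces the task to estimating four terms in $H^{k-1}$.

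First, I would handle $\divg(\Psi \cdot D\log N)$ by expanding with the Leibniz rule and Moser-type product estimates at level $k$; the resulting bound is $\leq C\,\Hnorm{\Psi}{k}\Hnorm{D\log N}{k}$, which contains one derivative too many on $\Psi$ and must therefore be absorbed into the left-hand side via a smallness assumption on $\Hnorm{D\log N}{k}$ (this is where the hypothesis of the lemma is used). Second, the commutator $[\mcL_{e_0},\divg]\omega$ is controlled at order $k-1$ using Lemma \ref{lem: lemma 2.1 BFK}: each term on the RHS of that identity is a product of geometric quantities ($\Pi$, $D\log N$, $\mathcal{S}$) paired with one derivative of $\omega$ or $\mcL_{e_0}\omega$, both of which are controlled by $\sqrt{\EMax_k}$ in view of \eqref{eq: energy of omega}; this produces exactly the first line of the claimed estimate.

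For the source terms I would apply Moser estimates directly: $\Hnorm{N^{-1}\jnull}{k-1} \leq C\,\Hnorm{N^{-1}}{k-1}\Hnorm{\jnull}{k-1}$ and $\Hnorm{\tau\widehat{X}^j \jvec_j}{k-1} \leq C|\tau|\,\Hnorm{\widehat{X}}{k-1}\Hnorm{\jvec}{k-1}$, where $\Hnorm{N^{-1}}{k-1}$ is uniformly bounded by Lemma \ref{lem: uniformly positiveness} and the smallness assumption. Combining these estimates yields the stated inequality once the $\Psi$-term is absorbed. The only subtle point is that while $k-1 > 3/2$ is needed for the algebra property of $H^{k-1}$ to yield clean products, this is guaranteed by the hypothesis $k>5/2$ implicit from the application context.

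The main obstacle, as flagged in the remark, is conceptual rather than technical: the term $\Hnorm{\jnull}{k-1}$ left in the bound is the borderline source. A naive estimate of $\jnull = \tau q N \int f \, d\mu_p$ by the Vlasov energy $\energy{k-1}{\mu}$ would propagate the $\varepsilon$-growth of the Vlasov energy to $\Psi$ and break the bootstrap. The resolution is to leave the bound in terms of $\Hnorm{\jnull}{k-1}$ at this stage and defer its control to Lemma \ref{lem: J estimates}, which identifies $\jnull$ to leading order with $q\rho_{\text{V}}$; since $\rho$ will be shown to be uniformly bounded via the continuity-equation estimate, this closes the hierarchy without loss.
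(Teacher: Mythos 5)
Your proposal is correct and follows essentially the same route as the paper: elliptic regularity for \eqref{eq: elliptic of psi}, absorption of the $\Hnorm{\Psi}{k}\Hnorm{D\log N}{k}$ term using the smallness hypothesis, control of the commutator $[\mcL_{e_0},\divg]\omega$ via Lemma~\ref{lem: lemma 2.1 BFK} and the energy $\EMax_k$, and product estimates for the current terms with $\Hnorm{\jnull}{k-1}$ deliberately left unexpanded. Your remarks on invertibility under the gauge condition and on deferring the borderline term to Lemma~\ref{lem: J estimates} match the paper's intent exactly.
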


We need to estimate the term $ \Hnorm{\p_{e_0} \Psi}{k}$ appearing in \eqref{eq: time derivative of Ek}. This is done in  Lemma~\ref{lem: de0 of Psi} which need further preparations:
\begin{lem}\label{lem: frakF}
Let the integer $k > 7/2$ and assume that $\Hnorm{N}{k-2}$ is bounded. Then, we have
\begin{align}\label{eq: F0i}
 \Hnorm{i_{\p_\tau} F}{k-2}
 &\leq
 C |\tau|^{-1}
  \Hnorm{N}{k-2}
   \left(
  \Hnorm{\Psi}{k-1}
  +
   \Hnorm{\Psi}{k-2} \Hnorm{D \log N}{k-2}
  +
  \sqrt{\EMax_{k-1}}
 \right)
\nn
 \\
 &\quad
 +
C  |\tau|^{-1}
 \Hnorm{X}{k-2} \sqrt{\EMax_{k-1}} 
 \,.
\end{align}
\begin{proof}
We recall that  $i_{\p_\tau} F \in C^\infty (\R, \Omega^1(M))$. Hence,
\begin{equation}
 i_{\p_\tau} F(\p_a)
=
 F_{a 0}
 =
 - \tau^{-1}  N \cdot i_{e_0} F(\p_a) 
 +
 \tau^{-1}  X^i F_{a i}
\,.
\end{equation}
The first term can be estimated by \eqref{eq: interior pr F} and for the second term we use \eqref{eq: equivalence of F and E}. Then, invoking the assumption on $\Hnorm{N}{k-2}$ finishes the proof.
\end{proof}
\end{lem}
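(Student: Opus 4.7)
\textbf{Proof proposal for Lemma~\ref{lem: frakF}.} The plan is to reduce the estimate to the two already-established ingredients: the interior product bound \eqref{eq: interior pr F} and the norm equivalence \eqref{eq: equivalence of F and E}. First I would rewrite $\partial_\tau$ in terms of the adapted frame. From $e_0 = N^{-1}(\partial_T + X)$ combined with $\partial_T = -\tau\,\partial_\tau$, one gets $\partial_\tau = -\tau^{-1}(Ne_0 - X)$. Pairing this decomposition with $F$ and evaluating on a spatial coordinate vector $\partial_a$ produces the identity the lemma's statement is built around,
\begin{equation*}
 F_{a0} \;=\; -\tau^{-1} N \cdot (i_{e_0}F)(\partial_a) \;+\; \tau^{-1} X^i F_{ai},
\end{equation*}
so that $i_{\partial_\tau}F$, viewed as a spatial $1$-form, splits into a piece proportional to $i_{e_0}F$ and a piece of the form $X \lrcorner F|_{\text{spatial}}$.

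Next I would estimate the two pieces in $H^{k-2}$. Because $k-2 > 3/2$, the space $H^{k-2}(M)$ is a Banach algebra, so the product rule applies to both terms. For the first piece I would invoke \eqref{eq: interior pr F}, multiplied by the Sobolev-algebra bound on $|\tau|^{-1}N$, obtaining
\begin{equation*}
 \Hnorm{|\tau|^{-1} N \cdot i_{e_0}F}{k-2}
 \leq C|\tau|^{-1}\Hnorm{N}{k-2}\bigl( \Hnorm{D\Psi}{k-2} + \Hnorm{\Psi}{k-2}\Hnorm{D\log N}{k-2} + \sqrt{\EMax_{k-1}} \bigr),
\end{equation*}
and absorb $\Hnorm{D\Psi}{k-2}$ into $\Hnorm{\Psi}{k-1}$. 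For the second piece, the purely spatial components $F_{ai}$ are controlled in $H^{k-2}$ by $\sqrt{\EMax_{k-1}}$ through the definition \eqref{eq: L2 norm of F} of $\Fnorm{F}{k-1}$ together with \eqref{eq: equivalence of F and E}; multiplying by the Sobolev norm of $X$ yields
\begin{equation*}
 \Hnorm{|\tau|^{-1} X^i F_{ai}}{k-2} \leq C|\tau|^{-1}\Hnorm{X}{k-2}\sqrt{\EMax_{k-1}}.
\end{equation*}

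Summing the two contributions gives exactly \eqref{eq: F0i}. The only genuinely delicate point is a bookkeeping one: one must confirm that the spatial part of $\Fnorm{F}{k-1}$ does in fact dominate $\Hnorm{F_{ij}}{k-2}$ (without a $\Psi$-contribution), which is immediate from \eqref{eq: L2 norm of F} since the purely spatial curvature term there carries no $\tau^2$ weight. There is no other real obstacle; the lemma is essentially the algebraic identity above combined with two preceding estimates, and the hypothesis $k > 7/2$ is used precisely to secure the Sobolev algebra property in $H^{k-2}$.
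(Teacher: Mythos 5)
Your proposal is correct and follows essentially the same route as the paper: the identity $F_{a0} = -\tau^{-1}N\,(i_{e_0}F)(\partial_a) + \tau^{-1}X^iF_{ai}$, followed by \eqref{eq: interior pr F} for the first piece and the control of the spatial components via $\sqrt{\EMax_{k-1}}$ for the second, with the Sobolev algebra property of $H^{k-2}$ handling the products. Your extra observation that the purely spatial part of the Faraday norm carries no $\Psi$-contribution is a worthwhile clarification but does not change the argument.
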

\begin{lem}\label{lem: e0 of J0} 
Let $k, \mu \in \N$ such that $k > 7/2$
and $\mu \geq 3$. Further, assume that $\Hnorm{N}{k-2}$ is bounded. Then, the following estimate holds
\begin{equation}
 \Hnorm{\pnull \jnull}{k-2}
 \leq
 C
 \left(
  \Hnorm{\p_T N}{k-2} 
   +
  \Hnorm{\nhat }{k-2}  
  +
  \Hnorm{X}{k-1}
 \right) 
 \Hnorm{\jnull}{k-2}
 +
 C |q|  \energy{k-1}{\mu}
 \,, 
\end{equation}
provided that $\Hnorm{\Gamma^\ast}{k-2}$ is bounded.
\begin{proof}
 We have
\begin{equation}\label{eq: time derivative of J0}
 \pnull \jnull
 =
  N^{-2} \p_T N \jnull
  +
 q \int \p_T f d\mu_p
  +
 q \int f \p_T ( d\mu_p )
 +
 \xhat^k \p_k \jnull
 \,.
\end{equation}
The third term can be rewritten as 
\begin{equation*}
 q \int f \p_T ( d\mu_p )
 =
 \blue{
  N^{-1}
  \left(
     3 \nhat - D_i X^i + 6
  \right)   
  }
  \jnull
 \,,
\end{equation*}
where we used the evolution equation for $|g|^{\frac{1}{2}}$ and the rescaling of the momentum according to \eqref{eq: scaling} to get
\begin{equation}\label{eq: dT of dp}
 \p_T ( d\mu_p )
 =
 \left(
  3  \nhat
  -
  D_i X^i
  +
  6
 \right)
 d\mu_p
 \,.
\end{equation}
For the second term in \eqref{eq: time derivative of J0}, using the transport equation, we find
\begin{align}
 \int \p_T f \, d\mu_p
 &=
 \int
 \big[
  \tau N \frac{p^i}{\pbar} \A_if 
  -
  \tau^{-1} \frac{\pbar}{N} \Gamma^a \B_a f
  +
  2 p^a \B_a f
\nn
\\
&  
\qquad\quad
  -
  2 \Gamma^a_c p^c \B_a f
  -
  \tau 
  \left(
   \Sigma_{b c} + \tfrac{1}{3} g_{b c}
  \right)
  X^a \frac{p^b p^c }{\pbar	} \B_a f
  -
 \tau q \F^i \B_i f
 \big]
 \, d\mu_p
  \,.\label{eq: integral of transport}
\end{align}
The last term after integration by parts gives
\begin{align}\label{eq: Maxwell term in Vl. eq.}
 - \tau q 
 \int 
 \F^i \B_i f \, d\mu_p
 &=
 \tau q \int \B_i \F^i \, f \, d\mu_p
\nn
\\ 
 &=
\blue{
 \tau q
 \int f \, 
  \left[
    \left(
    g^{i j} - \xhat^i \xhat^j
    \right)
    F_{a j}
    +
    \frac{\tau}{N} F_{a 0} \xhat^i
  \right]
  \B_i 
  \left(   
    \frac{p^a}{p^0}
  \right)  
  \, d\mu_p
 } 
 \,.
\end{align}
The third term of the \eqref{eq: integral of transport} cancels the third term of \eqref{eq:  dT of dp}, after integration by parts. Hence,
\begin{align}
 q \int \p_T f \, d\mu_p
  +
 q \int f \p_T ( d\mu_p )
 &=
 q \int
 \big[
  \tau N \frac{p^i}{\pbar} \A_if 
  -
  \tau^{-1} \frac{\pbar}{N} \Gamma^a \B_a f
  -
  2 \Gamma^a_c p^c \B_a f
\nn
 \\
 & 
\qquad\quad 
 -
  \tau 
  \left(
   \Sigma_{b c} + \tfrac{1}{3} g_{b c}
  \right)
  X^a \frac{p^b p^c }{\pbar	} \B_a f
  +
 \tau  q \B_i \F^i  f
 \big]
 \, d\mu_p
\nn
 \\
 &=:
 q 
 ( I_1 + I_2 + I_3 + I_4 + I_5 )
  \,.
\end{align}
Invoking Lemmas~\ref{lem: momentum inequalities} and \ref{lem: inequality for phi} and the relations in Appendix~\ref{App: momentum derivatives}, integrating by parts whenever possible, and imposing the smallness assumptions, leads to
\begin{equation}
\begin{aligned}
 \Hnorm{I_1}{k-2}
 &\leq
 \blue{
C  \cdot |\tau|  \cdot   \energy{k-1}{\mu}
  }
 \,,
 \\
 \Hnorm{I_2}{k-2}
 &\leq
 \blue{
 C \cdot |\tau|  \cdot  \Hnorm{\Gamma^\ast}{k-2} \cdot
  \energy{k-2}{\mu}
  }
 \,,
 \\
  \Hnorm{I_3}{k-2}
 &\leq
 C  \cdot \Hnorm{\Gamma^\ast_\ast}{k-2} \cdot
  \energy{k-2}{\mu}
 \,,
 \\ 
 \Hnorm{I_4}{k-2}
 &\leq
 \blue{
 C \cdot |\tau| \cdot  \Hnorm{(\Sigma + g)X}{k-2}  \cdot
 \energy{k-2}{\mu + 1}
 }
 \\
\Hnorm{I_5}{k-2}
 & \leq
 \blue{
 C \cdot |\tau q| \cdot
 \Fnorm{F}{k-2} \cdot   \energy{k-2}{\mu}
 }
 \,.
\end{aligned}
\end{equation}
\blue{Note that $\energy{k-2}{ \mu + 1} \leq \energy{k-1}{\mu}$.} 
Hence,
\blue{
\begin{align*}
 \Hnorm{\pnull \jnull}{k-2}
 &\leq
 C
  \left(
    \Hnorm{N^{-2}}{k-2} \cdot \Hnorm{\p_T N}{k-2}
    +
    \Hnorm{\nhat}{k-2}
    +
    \Hnorm{X}{k-1}
   \right)
   \Hnorm{\jnull}{k-2} 
 \nn
 \\
 &  
 \quad
    +
  |q|   \sum_{i=1}^5 \Hnorm{I_i}{k-2}
\,.
\end{align*}
}
Assuming that $\Hnorm{N}{k-2}$ and \red{$\Hnorm{\Gamma^\ast}{k-2}$} are bounded, we arrive at the estimate of the lemma.
\end{proof}
\end{lem}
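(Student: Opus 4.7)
The plan is to expand $\pnull \jnull$ using $e_0 = N^{-1}(\p_T + X)$ applied to the definition $\jnull = qN \int f\, d\mu_p$, and to group the resulting terms into four classes: (i) derivatives of the prefactor $N$, (ii) the time derivative of $f$ handled via the Vlasov equation \eqref{eq: Vlasov equation}, (iii) the time derivative of the momentum measure $d\mu_p$, and (iv) the spatial transport $X^i \p_i \jnull$. Class (i) and the part of class (iv) where $X^i$ hits $N$ produce directly the $\Hnorm{\p_T N}{k-2}\Hnorm{\jnull}{k-2}$ and $\Hnorm{X}{k-1}\Hnorm{\jnull}{k-2}$ contributions, while class (iii) requires computing $\p_T (d\mu_p) = (3\nhat - D_i X^i + 6)\,d\mu_p$ using the evolution equation for $g_{ij}$ and the rescaling \eqref{eq: scaling}; the $3\nhat$ piece contributes to the $\Hnorm{\nhat}{k-2}\Hnorm{\jnull}{k-2}$ term.

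For class (ii), I substitute the rescaled transport equation and obtain six momentum integrals on the right-hand side. The key observation is that the $2p^a \B_a f$ term, after integration by parts in momentum, produces $-6\jnull$, which cancels exactly the constant ``$6$'' from $\p_T d\mu_p$; similarly the $-D_i X^i$ piece from the measure cancels the shift contribution to the spatial transport. What remains are five nontrivial terms: the horizontal transport $\tau N \frac{p^a}{\pbar} \A_a f$, the $\Gamma^\ast$ term, the $\Gamma^\ast_\ast$ term, the $(\Sigma + g)X$ term, and the Maxwell forcing $\tau q \F^a \B_a f$. Each of these must be estimated in $H^{k-2}$ and bounded by $|q|\energy{k-1}{\mu}$ (or its refinements) multiplied by factors already assumed bounded.

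To carry out these estimates I would integrate by parts in $p$ wherever a $\B_a$ falls on $f$, invoke Lemma~\ref{lem: inequality for phi} to convert $\int \cdot\, d\mu_p$ integrals bounded by the Sasaki-weighted norm of $\sasacov^k f$, and use Lemma~\ref{lem: momentum inequalities} together with the compact-support assumption ($|\tau|\mcG$ bounded) to remove the factors of $\pbar$ and $\phat$ that arise. The factor $\tau$ in the horizontal transport term compensates exactly the weight loss of shifting $k-1$ derivatives to $f$, yielding $|\tau|\energy{k-1}{\mu}$, which is absorbed into $|q|\energy{k-1}{\mu}$ under smallness (and is in fact the leading term). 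The $\Gamma^\ast$, $\Gamma^\ast_\ast$, and shift-type terms are lower order under the hypothesis that $\Hnorm{\Gamma^\ast}{k-2}$ is bounded.

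The main obstacle is the Maxwell term $\tau q \int \F^a \B_a f\, d\mu_p$. Integration by parts transfers $\B_a$ onto $\F^a$, and using the explicit formula \eqref{eq: eq for frakF} together with $\B_i(p^a/p^0) = \delta^a_i/p^0 + \tau^2 p^a \mathcal{P}_i/(p^0)^2$ (from Appendix~\ref{App: momentum derivatives}), one finds a contribution involving $\Fnorm{F}{k-2}$ multiplied by $|\tau q|\energy{k-2}{\mu}$. Here the factor $|\tau|$ is crucial: combined with the smallness of $\Fnorm{F}{k-2}$ and the embedding $\energy{k-2}{\mu+1} \le \energy{k-1}{\mu}$, this term fits inside $C|q|\energy{k-1}{\mu}$. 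Collecting all contributions yields the stated bound.
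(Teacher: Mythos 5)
Your proposal follows the paper's proof essentially verbatim: the same decomposition of $\pnull \jnull$ into the four classes, the same computation $\p_T(d\mu_p) = (3\nhat - D_iX^i + 6)\,d\mu_p$, the same cancellation of the $2p^a\B_a f$ term against the constant $6$ after integration by parts in $p$, and the same treatment of the Maxwell term by transferring $\B_a$ onto $\F^a$ via $\B_a p^0 = -\tau^2\mathcal{P}_a$, yielding the $|\tau q|\cdot\Fnorm{F}{k-2}\cdot\energy{k-2}{\mu}$ contribution. The only (harmless) deviation is your claim that the $-D_iX^i$ piece of the measure derivative \emph{cancels} the spatial transport $\xhat^k\p_k\jnull$; the paper does not cancel these but simply bounds both inside the $\Hnorm{X}{k-1}\Hnorm{\jnull}{k-2}$ term, and your combination of them into a divergence-type expression works equally well.
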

\begin{lem}
Let the integers $k> 7/2$ and let $ \mu \geq 3$ and assume that $\Hnorm{N}{k-2}$ is bounded. Then,
%
\begin{align}
 \Hnorm{\pnull \jvec}{k-2}
 &\leq
 C
 \left(
  \Hnorm{\p_T N}{k-2} 
   +
  \Hnorm{\nhat }{k-2}  
  +
  \Hnorm{X}{k-1}
 \right) 
 \Hnorm{\jvec}{k-2}
 \nn
 \\
 &
 \quad
 +
 C |q|
  \Big[
      1
      +
      |\tau|
      +
    |\tau|^{-1} \cdot \Hnorm{\Gamma^\ast}{k-2}
     +
     \Hnorm{\Gamma^\ast_\ast}{k-2}
     +
     \Hnorm{\Sigma}{k-2}
     +
     \Hnorm{\nhat}{k-2}
      +
   \Hnorm{X}{k-1}  
  \nn
  \\
  &
  \qquad\quad
  +
  |\tau|^{-1}
   \left(
    \Hnorm{X}{k-2}
    +
    \Hnorm{\p_T X}{k-2}
   \right)
   +
   \Hnorm{\p_T N}{k-2}
     +
     |q|  \cdot  \Fnorm{F}{k-2}
   \Big]
   \energy{k-1}{\mu}
  \,. 
\end{align}
\begin{proof}
 We have
\begin{equation}
 \p_{e_0} \jvec_k
 =
 N^{-2} \p_T N \jvec_k
 +
 q
  \int \p_T f \mathcal{P}_k \, d\mu_p
  +
  q
  \int  f \p_T \mathcal{P}_k \, d\mu_p   
  + 
  q
  \int  f  \mathcal{P}_k \p_T ( d\mu_p)    
 +
 \xhat^i \p_i \jvec_k
 \,.\label{eq: e0 of Jk}
\end{equation}
Again, using \eqref{eq: dT of dp}, the fourth term can be rewritten as
\begin{equation} 
  q
  \int  f  \mathcal{P}_k \, \p_T  (d\mu_p)     
 = 
 \blue{
 N^{-1}
 \left(
  3 \nhat 
  -
  D_i X^i
  +
  6
 \right) }
  \jvec_k
 \,.
\end{equation}
For the second and the fifth terms of \eqref{eq: e0 of Jk} we repeat a similar calculation to the one in Lemma \ref{lem: e0 of J0}. The corresponding term which after integrating by parts cancels the fifth term of \eqref{eq: e0 of Jk} is $2  \int \mathcal{P}_k \, p^a \B_a f \, d\mu_p$. Thus, 
\begin{equation}
 2 q  \int \mathcal{P}_k \, p^a \B_a f \, d\mu_p
 +
 6 N^{-1} \jvec_k 
 =
 - 2 q  \int \B_a \mathcal{P}_k \, p^a  f \, d\mu_p
 \,.
\end{equation}
Using \eqref{eq: B p0}, under smallness assumptions, we find
\blue{
\begin{equation}
 \Hnormadj{q  \int \B_a \mathcal{P}_k \, p^a  f \, d\mu_p}{k-2}
 \leq
 C |q| 
 \energy{k-2}{\mu}
 \,.
\end{equation}
}
For the Maxwell part of the second term of \eqref{eq: e0 of Jk} one gets, after integration by parts,
\blue{
\begin{equation}
 \Hnormadj{\tau q \int \F^i \B_i f \mathcal{P}_k \, d\mu_p}{k-2}
 \leq
 C |q| 
 \left( 1 + |\tau| \right) 
 \Fnorm{F}{k-2} \cdot \energy{k-2}{\mu}
 \,.
\end{equation}
}
Therefore,
\blue{
\begin{align}
 \Hnormadj{ q  \int \p_T f \mathcal{P}_k \, d\mu_p}{k-2}
 & \leq
  C  |q|
  \Big[
    |\tau| 
    +
    |\tau|^{-1}  \left( 1 + \tau^2 \right) \Hnorm{\Gamma^\ast}{k-2}
    +
    \Hnorm{\Gamma^\ast_\ast}{k-2}
 \nn
 \\
  &
  \qquad\quad  
    +
   |\tau| \cdot  \Hnorm{(\Sigma + g)X}{k-2} 
    +
  |q| \cdot
 \Fnorm{F}{k-2} 
  \Big]
  \energy{k-1}{\mu}
  \,.
\end{align}
}
Finally, putting \eqref{eq: T derivative of mathcal P} into the third term of \eqref{eq: e0 of Jk} and making use of the relations in Appendix~\ref{app: T derivatives of momentum functions}, one finds
\blue{
\begin{align}
 \Hnormadj{q  \int  f  \p_T \mathcal{P}_k  \, d\mu_p }{k-2}
 &\leq
 C |q| 
  \Big[
    |\tau|^{-1}
    \left(
     \Hnorm{X}{k-2}
     +
     \Hnorm{\p_T X}{k-2}
    \right)
    +
    \Hnorm{\p_T N}{k-2}
 \nn
 \\
 &   
 \qquad\quad
    +
    \Hnorm{\Sigma}{k-2}
    +
    \Hnorm{\nhat}{k-2}
  \Big]
  \energy{k-2}{\mu}
  \,.
\end{align}
}
Inserting the results above and invoking the smallness assumptions yields the proof.
\end{proof}
\end{lem}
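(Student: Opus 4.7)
The plan is to apply $\pnull = N^{-1}(\p_T + X^j \p_j)$ directly to the definition $\jvec_k = q N \int f\,\mathcal{P}_k\, d\mu_p$ and distribute derivatives by the product rule, obtaining
\begin{equation*}
 \pnull \jvec_k = N^{-2}(\p_T N)\jvec_k + q\! \int (\p_T f)\,\mathcal{P}_k\, d\mu_p + q\!\int f\,(\p_T \mathcal{P}_k)\, d\mu_p + q\!\int f\,\mathcal{P}_k\,\p_T(d\mu_p) + \xhat^j \p_j \jvec_k.
\end{equation*}
This mirrors the decomposition in the proof of Lemma \ref{lem: e0 of J0}, with two new features: the extra momentum factor $\mathcal{P}_k$ in each integrand, and the genuinely nontrivial $\p_T \mathcal{P}_k$ piece which has no counterpart in the $\jnull$ computation.

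The three terms not involving $\p_T f$ or $\p_T \mathcal{P}_k$ are straightforward. The $\p_T N$ piece and the spatial advection $\xhat^j \p_j \jvec_k$ give directly the $\Hnorm{\p_T N}{k-2}\,\Hnorm{\jvec}{k-2}$ and $\Hnorm{X}{k-1}\,\Hnorm{\jvec}{k-2}$ contributions respectively, while after substituting \eqref{eq: dT of dp} the $\p_T(d\mu_p)$ piece becomes $N^{-1}(3\nhat - D_i X^i + 6)\jvec_k$, delivering the $\Hnorm{\nhat}{k-2}\,\Hnorm{\jvec}{k-2}$ contribution and, crucially, a constant factor $6N^{-1}\jvec_k$ that I reserve for a cancellation below. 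These together furnish the first bracket of the claimed estimate.

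For the $\p_T f$ term I substitute the rescaled transport equation \eqref{eq: Vlasov equation} and integrate by parts in $p$ on every $\B_a$-term, producing six integrals analogous to $I_1,\ldots,I_5$ of Lemma \ref{lem: e0 of J0}. The $2p^a \B_a f$ contribution, after integration by parts, exactly cancels the $+6N^{-1}\jvec_k$ residue reserved above, up to a remainder $-2q\int (\B_a \mathcal{P}_k)\, p^a f\, d\mu_p$ which, by the identity for $\B_a p^0$ from Appendix \ref{App: momentum derivatives}, is controlled by $|q|\,\energy{k-2}{\mu}$. The remaining five integrals are bounded in $H^{k-2}$ using Lemmas \ref{lem: momentum inequalities} and \ref{lem: inequality for phi}; the extra factor of $\langle p\rangle$ contributed by $|\mathcal{P}_k|\lesssim |\tau|^{-1}|\xhat|_g + N^{-1}|p|_g$ is absorbed by raising the weight to $\mu+1$ and then using $\energy{k-2}{\mu+1}\leq \energy{k-1}{\mu}$. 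The Maxwell piece $\tau q \F^i \B_i f$, after integrating by parts in $p$ and using \eqref{eq: eq for frakF}, delivers the $|q|\,\Fnorm{F}{k-2}$ factor inside the second bracket.

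The main obstacle is the $\p_T \mathcal{P}_k$ integral. Because $\mathcal{P}_a$ involves $\tau^{-1} X_a p^0/\phat + g_{ab}p^b/\phat$ and therefore depends on $g$, $X$, $N$ and, through the mass-shell relation \eqref{eq: p0}, on $p^0$ and $\phat$, a chain-rule expansion using the formulas collected in Appendix \ref{app: T derivatives of momentum functions} produces a polynomial combination of bounded factors $p/\phat$ and $p^0/\phat$ weighted by $\p_T g$, $\p_T X$ and $\p_T N$, together with further pieces involving $\nhat$, $\Sigma$ and $X$ that enter via the evolution equations for the geometric variables. The awkward $|\tau|^{-1}$ multiplying $\Hnorm{\p_T X}{k-2}$ and $\Hnorm{X}{k-2}$ in the statement enters precisely through the $\tau^{-1} X_a p^0$ piece of $\mathcal{P}_a$. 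Assembling every contribution, using the smallness assumption and the bound on $\Hnorm{N}{k-2}$ to absorb all cross terms into the constant $C$, delivers the claimed estimate.
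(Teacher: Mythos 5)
Your proposal follows essentially the same route as the paper: the identical five-term product-rule decomposition of $\pnull \jvec_k$, the substitution of \eqref{eq: dT of dp} for the measure term, the cancellation of the residual $6N^{-1}\jvec_k$ against the integrated-by-parts $2p^a\B_a f$ contribution with remainder $-2q\int \B_a\mathcal{P}_k\,p^a f\,d\mu_p$, the same treatment of the Maxwell term, and the same use of \eqref{eq: T derivative of mathcal P} with the Appendix formulas for the $\p_T\mathcal{P}_k$ integral. The argument is correct and matches the paper's proof in structure and in all essential estimates, including the weight-raising step $\energy{k-2}{\mu+1}\leq\energy{k-1}{\mu}$.
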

\begin{lem}\label{lem: de0 of Psi}
Let integers $k > 7/2$
and and $\mu \geq 3$. Then, as long as $\Hnorm{D \log N}{k}$ is small enough and $\Hnorm{\Pi}{k-2}$ is bounded, the following holds
\begin{align}
  \Hnorm{\p_{e_0} \Psi}{k}
 &\leq
 C
  \bigg(
  \Hnorm{ \Pi}{k-2}  
  +
  \Hnorm{ \mathcal{S} }{k-2}
  +
  \Hnorm{D \log N}{k-2}
  +
  \Hnorm{D \pnull \log N}{k-2}
 \nn
 \\
 &
 \qquad\quad
 + 
  \Hnorm{\mcL_{e_0} \mathcal{S} }{k-2}
  +
  \Hnorm{\mcL_{e_0} \Pi }{k-2}  
 \bigg)
 \left(
  \sqrt{\EMax_k} + \Hnorm{\Psi}{k}
 \right)
 \nn
 \\
 &
 \quad
 +
 \blue{
 C
\left(
 \Hnorm{\p_T N}{k-2}
 +
 \Hnorm{\nhat}{k-2}
 +
 \Hnorm{X}{k-1} 
\right)
   \Hnorm{\jnull}{k-2} 
   }
 \nn
 \\
 &
 \quad
 +
 \blue{
 C
 \left(
  1
  +
  \Hnorm{\p_T N}{k-2}
 +
  \Hnorm{\p_T X}{k-2}
 +
 \Hnorm{\nhat}{k-2}
 +
  \Hnorm{X}{k-1} 
\right)
   |\tau| \cdot \Hnorm{X}{k-2} \cdot
   \Hnorm{\jvec}{k-2}
   }
 \nn
 \\
 &
 \quad
 +
 \blue{
 C |q|
 \Big\{
  |\tau| \cdot
   \Big[
      1
      +
     \Hnorm{\Gamma^\ast}{k-2}
     +
     \Hnorm{(\Sigma + g )X}{k-2}
      +
   \Hnorm{X}{k-2} \cdot  \Hnorm{\nhat}{k-2}
   }
 \nn
 \\
 &  
 \qquad\qquad
 +
 \blue{
   \Hnorm{\p_T X}{k-2}
   +
   \Hnorm{\p_T N}{k-2}
   +
   |q| \cdot  \Fnorm{F}{k-2}
   \Big]
   +
   \Hnorm{\Gamma^\ast_\ast}{k-2}
 \Big\}
 \energy{k-1}{\mu}
 }
 \,.
\end{align}
\begin{proof}
Differentiating \eqref{eq: elliptic of psi} in the direction of $\p_{e_0}$ and using the elliptic regularity, we arrive at (cf.~(5.39) in \cite{BFK19})
\begin{align}\label{eq: de0 of Psi, first equality}
 \Hnorm{\p_{e_0} \Psi}{k}
 &\leq
 C
 \Big(
  \Hnorm{ [\mcL_{e_0}, \Delta_g] \Psi}{k-2}
  +
  \Hnorm{\p_{e_0} \divg (\Psi D \log N)}{k-2}
 \nn
 \\
 &\qquad
  +
  \Hnorm{\p_{e_0}  [\mcL_{e_0}, \divg] \omega}{k-2}
  +
 \Hnorm{\p_{e_0} (N^{-1} \jnull + \tau \xhat^j \jvec_j) }{k-2} 
 \Big)
 \,.
\end{align}
The first three terms are estimated in  Lemma~5.9 in \cite{BFK19} by assuming that $\Hnorm{D \log N}{k}$ is small enough and $\Hnorm{\Pi}{k-2}$ is bounded:
\begin{align}
\lefteqn{  
  \Hnorm{ [\mcL_{e_0}, \Delta_g] \Psi}{k-2}
  +
  \Hnorm{\p_{e_0} \divg (\Psi D \log N)}{k-2}
  +
  \Hnorm{\p_{e_0}  [\mcL_{e_0}, \divg] \omega}{k-2}
}
&
\nn
\\
& \leq
 C
 \bigg(
  \Hnorm{ \Pi}{k-2}  
  +
  \Hnorm{ \mathcal{S} }{k-2}
  +
  \Hnorm{D \log N}{k-2}
  +
  \Hnorm{D \pnull \log N}{k-2}
 \nn
 \\
 &\qquad
 + 
  \Hnorm{\mcL_{e_0} \mathcal{S} }{k-2}
  +
  \Hnorm{\mcL_{e_0} \Pi }{k-2}  
 \bigg)
 \left(
  \sqrt{\EMax_k} + \Hnorm{\Psi}{k}
 \right)
 +
 \Hnorm{D \log N}{k-1}  \Hnorm{\pnull \Psi}{k-1}
 \,.\label{eq: first part of de0 Psi}
\end{align}
We estimate the last term of \eqref{eq: de0 of Psi, first equality}. To this end, we use the following relations:
\begin{align*}
 \p_{e_0} N^{-1}
 &=
 -N^{-3}
 \left(
  \p_T N
  +
  X^k \p_kN
 \right)
 \,,
 \\
 \p_{e_0} \tau
 &=
 -\tau N^{-1} 
 \,,
 \\
 \p_{e_0} X^k
 &=
N^{-1}
\left(
 \p_T X^k
 +
 X^j \p_j X^k
\right)
\,,
\\
\pnull ( \tau \xhat^j \jvec_j)
&=
- \tau X^k \jvec_k
+
\tau \pnull \xhat^k \jvec_k
+
\tau \xhat^k \pnull \jvec_k
\,,
\\
\Hnorm{ \p_{e_0} \left( N^{-1} \jnull \right)}{k-2}
&\leq
C \Hnorm{N^{-3}}{k-2}
\left(
 \Hnorm{\p_T N}{k-2}
 +
 \Hnorm{X}{k-2} \Hnorm{N}{k-1}
\right)
 \blue{
 \Hnorm{\jnull}{k-2} 
 }
\nn
\\
& \quad
+
C \Hnorm{N^{-1}}{k-2} \Hnorm{\pnull \jnull}{k-2}
\,,
\\
\Hnorm{\pnull \xhat}{k-2}
&\leq
C \Hnorm{N^{-3}}{k-2}
\big[
 \left(
  \Hnorm{\p_T N}{k-2}
  +
 \Hnorm{N}{k-1}   \Hnorm{X}{k-2}
\right) 
\Hnorm{X}{k-2}
\nn
\\
&
\qquad\quad\quad\quad\qquad
 +
 \Hnorm{N}{k-2}
  \left(
 \Hnorm{\p_T X}{k-2}
 +
 \Hnorm{X}{k-1} \Hnorm{X}{k-2}
\right)
\big]
\,,
\\
\Hnorm{\pnull ( \tau \scalpr{\xhat}{\jvec} )}{k-2}
&\leq
 C |\tau|
 \left(
  \Hnorm{N^{-1}}{k-2} \cdot  \Hnorm{\xhat}{k-2} 
  +
  \Hnorm{\pnull \xhat}{k-2}
\right)  
 \blue{
  \Hnorm{\jvec}{k-2}
 }
\nn
\\
& \quad
+  
C |\tau |
 \Hnorm{\xhat}{k-2} \Hnorm{\pnull \jvec}{k-2}
 \,.
\end{align*}
Then, by the smallness assumptions and boundedness of $\Hnorm{N}{k-2}$, one finds
\blue{
\begin{equation}\label{eq: second part of de0 Psi}
\begin{aligned}
\lefteqn{ 
 \Hnorm{\p_{e_0} (N^{-1} \jnull + \tau \xhat^j \jvec_j) }{k-2} 
}
&
 \\ 
 &\leq
  C
\left(
 \Hnorm{\p_T N}{k-2}
 +
 \Hnorm{\nhat}{k-2}
 +
 \Hnorm{X}{k-1} 
\right)
   \Hnorm{\jnull}{k-2} 
 \\
 & \quad
 +
 C
 \left(
  1
  +
  \Hnorm{\p_T N}{k-2}
 +
  \Hnorm{\p_T X}{k-2}
 +
 \Hnorm{\nhat}{k-2}
 +
  \Hnorm{X}{k-1} 
\right)
   |\tau| \cdot \Hnorm{X}{k-2} \cdot
   \Hnorm{\jvec}{k-2}
 \\
 & \quad
 +
 C |q|
 \Big\{
  |\tau| \cdot
   \Big[
      1
      +
     \Hnorm{\Gamma^\ast}{k-2}
     +
     \Hnorm{(\Sigma + g )X}{k-2}
     +
   |q| \cdot  \Fnorm{F}{k-2}
      +
   \Hnorm{X}{k-2} \cdot  \Hnorm{\nhat}{k-2}
 \\
 & \quad
   +
   \Hnorm{\p_T X}{k-2}
   +
   \Hnorm{\p_T N}{k-2}
   +
   |q| \cdot  \Fnorm{F}{k-2}
   \Big]
   +
   \Hnorm{\Gamma^\ast_\ast}{k-2}
 \Big\}
 \energy{k-1}{\mu}
 \,.
\end{aligned}
\end{equation}
}
Inserting \eqref{eq: first part of de0 Psi} and \eqref{eq: second part of de0 Psi} into \eqref{eq: de0 of Psi, first equality} and assuming that $\Hnorm{D \log N}{k-2}$ is small enough, one obtains the claim of the lemma.
\end{proof}
\end{lem}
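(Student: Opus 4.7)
The plan is to differentiate the elliptic equation \eqref{eq: elliptic of psi} for $\Psi$ in the direction of $e_0$ and then apply elliptic regularity to recover two derivatives on $\Psi$. Writing
\[
 \Delta_g(\pnull\Psi)
 =
 [\mcL_{e_0},\Delta_g]\Psi
 -
 \pnull\!\divg(\Psi\cdot D\log N)
 -
 \pnull[\mcL_{e_0},\divg]\omega
 -
 \pnull\bigl(N^{-1}\jnull\bigr)
 +
 \pnull\bigl(\tau\xhat^j\jvec_j\bigr),
\]
elliptic regularity for $\Delta_g$ on $M$ (noting $\pnull\Psi$ has zero spatial mean up to controllable corrections, since $\int_M\Psi\,\dvol=0$) then bounds $\Hnorm{\pnull\Psi}{k}$ by the $H^{k-2}$-norms of these source terms, modulo a $C\Hnorm{D\log N}{k-1}\Hnorm{\pnull\Psi}{k-1}$ contribution that I will absorb using smallness of $\Hnorm{D\log N}{k}$.

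Next I would treat the three purely geometric source terms — the commutators and $\pnull\divg(\Psi\,D\log N)$ — exactly as in \cite{BFK19}, Lemma~5.9. That lemma, together with Lemma~\ref{lem: lemma 2.1 BFK} applied to commute $\mcL_{e_0}$ with $\Delta_g$ and $\divg$, yields precisely the first block of terms in the claim, involving $\Hnorm{\Pi}{k-2}$, $\Hnorm{\mathcal{S}}{k-2}$, $\Hnorm{D\log N}{k-2}$, $\Hnorm{D\pnull\log N}{k-2}$, $\Hnorm{\mcL_{e_0}\mathcal{S}}{k-2}$ and $\Hnorm{\mcL_{e_0}\Pi}{k-2}$, multiplied by $\sqrt{\EMax_k}+\Hnorm{\Psi}{k}$. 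Only the assumptions $\Hnorm{D\log N}{k}\ll 1$ and $\Hnorm{\Pi}{k-2}$ bounded are needed here, which match the hypotheses.

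The new content is the estimate of $\pnull(N^{-1}\jnull+\tau\xhat^j\jvec_j)$. I would expand via Leibniz, writing
\[
\pnull N^{-1}=-N^{-3}(\p_T N+X^k\p_k N),\qquad
\pnull\tau=-\tau N^{-1},\qquad
\pnull X^k=N^{-1}(\p_T X^k+X^j\p_j X^k),
\]
so that only geometric coefficients multiply $\jnull$, $\jvec$, $\pnull\jnull$, $\pnull\jvec$. Sobolev product estimates plus boundedness of $\Hnorm{N}{k-2}$ give the second and third blocks of the claimed estimate in terms of $\Hnorm{\jnull}{k-2}$ and $|\tau|\Hnorm{X}{k-2}\Hnorm{\jvec}{k-2}$. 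The remaining pieces $\Hnorm{\pnull\jnull}{k-2}$ and $\Hnorm{\pnull\jvec}{k-2}$ are then controlled by the previous two lemmas on $\pnull\jnull$ and $\pnull\jvec$; substituting these in produces the last block multiplying $\energy{k-1}{\mu}$, with the explicit combination of $\Gamma^\ast$, $\Gamma^\ast_\ast$, $(\Sigma+g)X$, $\p_T X$, $\p_T N$, $\nhat$ and $|q|\Fnorm{F}{k-2}$ appearing there.

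The main obstacle is organizational rather than analytic: one must carefully track which factors of $|\tau|$, $N^{-1}$, $X$ and $|q|$ appear when the $\pnull$ derivatives hit $N^{-1}$, $\tau$, and $\xhat$, and then combine with the already-established bounds for $\pnull\jnull$ and $\pnull\jvec$ so that the factor $\Hnorm{\jnull}{k-2}$ survives unreplaced (only at the end is it estimated by the energy density via Lemma \ref{lem: J estimates}, which is the decisive ingredient that avoids picking up a growing Vlasov-energy factor on the Maxwell side). With these bookkeeping steps, combining the geometric block, the matter-current block, and the $\energy{k-1}{\mu}$ block yields the stated inequality after absorbing the $\Hnorm{D\log N}{k-1}\Hnorm{\pnull\Psi}{k-1}$ term by the smallness hypothesis.
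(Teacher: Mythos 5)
Your proposal matches the paper's proof essentially step for step: differentiating the elliptic equation \eqref{eq: elliptic of psi} along $e_0$, applying elliptic regularity, invoking Lemma~5.9 of \cite{BFK19} for the commutator and $\divg(\Psi D\log N)$ terms, expanding $\p_{e_0}(N^{-1}\jnull+\tau\xhat^j\jvec_j)$ via the same Leibniz relations, and inserting the two preceding lemmas on $\pnull\jnull$ and $\pnull\jvec$ before absorbing the $\Hnorm{D\log N}{k-1}\Hnorm{\pnull\Psi}{k-1}$ remainder by smallness. This is the same argument as in the paper, correctly organized.
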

\begin{prop}\label{prop: dTE of omega}
 For the energy defined in \eqref{eq: energy of omega} we have the estimate
\blue{
\begin{equation}
\begin{aligned}
 \p_T \EMax_k
 &\leq
  C
 \Big(
  \Hnorm{ \Sigma}{k-1}  
  +
  \Hnorm{\divg \Sigma}{k-1}
  +
  \Hnorm{ \pnull N}{k}
  +
  \Hnorm{N-3}{k}
\\
 &  
 \qquad
  +
  \Hnorm{\mcL_{e_0} \Sigma}{k-2}
  +
  \Hnorm{\mcL_{e_0} \divg \Sigma}{k-2}  
 \Big)
 \EMax_k
 \\
 & \quad
 +
 C
 \Big(
   \Hnorm{\p_T N}{k-2} 
   +
   \Hnorm{N-3}{k}
   +
   \Hnorm{\pnull N}{k}
   +
   \Hnorm{X}{k-1}
   +
   \Hnorm{\Sigma }{k-1}
 \\
 & \qquad \quad
  +
   \Hnorm{\divg \Sigma}{k-1}
   +
  \Hnorm{\mcL_{e_0} \Sigma}{k-2}
  +
  \Hnorm{\mcL_{e_0} \divg \Sigma}{k-2}  
 \Big) 
 \Hnorm{\jnull}{k-2} \sqrt{\EMax_k}
 \\
 & \quad
 +
 C
 |\tau| \cdot \Hnorm{\jvec}{k-2} \sqrt{\EMax_k}
 +
 C |q|
 \Big(
  |\tau|
   +
   \Hnorm{\Gamma^\ast_\ast}{k-2}
 \Big)
 \energy{k-1}{\mu} \sqrt{\EMax_k}
 \,,
\end{aligned}
\end{equation}
}
as long as the norms in the brackets are bounded.
\begin{proof}
The boundedness of $\Hnorm{\Pi}{k-2}$ and \red{$\Hnorm{D \log N}{k-2}$ together with Lemma~\ref{lem-psi} } implies 
\blue{
\begin{equation}\label{eq: estimate of E plus Psi}
 \sqrt{\EMax_k} + \Hnorm{\Psi}{k}
 \leq
 C
 \left(
   \sqrt{\EMax_k}
   +
   \Hnorm{\jnull}{k-2}
   +
   |\tau| \cdot \Hnorm{X}{k-2} \Hnorm{\jvec}{k-2}
 \right)  
 \,.
\end{equation}
}
Then, combining  Lemmas~\ref{lem: dT of Ek} and \ref{lem: de0 of Psi}, applying the smallness assumptions, and finally assuming the boundedness of the norms appearing in the estimate yield the proof.
\end{proof}
\end{prop}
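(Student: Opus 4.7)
My starting point is Lemma~\ref{lem: dT of Ek}, which gives a preliminary form of $\p_T \EMax_k$ that involves three types of quantities on the right-hand side: (i) the coefficients $\Hnorm{D \log N}{k-1}$, $\Hnorm{\Pi}{k-1}$, $\Hnorm{\mathcal{S}}{k-1}$ multiplying $\EMax_k$; (ii) the source terms $\Hnorm{\p_{e_0}\Psi}{k}$, $\Hnorm{\Psi}{k}$, and $|\tau|\cdot\Hnorm{\jvec}{k-1}$ multiplying $\sqrt{\EMax_k}$. The first set can be identified with the quantities appearing in the statement by expanding $\Pi = -\Sigma + N^{-1}(1-N/3)g$ and $\mathcal{S}=2\divg\Pi - D\trg\Pi$ in terms of $\Sigma$, $\divg\Sigma$, $N-3$, $\pnull N$, and bounding lower-order pieces via the smallness assumptions; a parallel expansion of $\mcL_{e_0}\Pi$ and $\mcL_{e_0}\mathcal{S}$ accounts for the extra Lie-derivative terms that will enter through Lemma~\ref{lem: de0 of Psi}.

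Next I would feed in Lemma~\ref{lem-psi} to control $\Hnorm{\Psi}{k}$ by $\sqrt{\EMax_k}$, $\Hnorm{\jnull}{k-1}$ and $|\tau|\cdot\Hnorm{X}{k-1}\Hnorm{\jvec}{k-1}$, and Lemma~\ref{lem: de0 of Psi} to control $\Hnorm{\p_{e_0}\Psi}{k}$. Under the smallness assumption \eqref{eq: smallness assumption} and the running hypothesis that the norms listed inside the brackets of the proposition are bounded, all the coefficient factors like $\Hnorm{\p_T N}{k-2}$, $\Hnorm{\p_T X}{k-2}$, $\Hnorm{\nhat}{k-2}$, $\Hnorm{X}{k-1}$, $\Hnorm{\Gamma^\ast}{k-2}$, $\Hnorm{(\Sigma+g)X}{k-2}$, $|q|\Fnorm{F}{k-2}$ that appear with $\Hnorm{\jnull}{k-2}$, $|\tau|\Hnorm{X}{k-2}\Hnorm{\jvec}{k-2}$, or $\energy{k-1}{\mu}$ either collapse into the already-listed ones or are swallowed into the constant $C$. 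This is the bookkeeping step; it is tedious but mechanical, and it produces the combined inequality \eqref{eq: estimate of E plus Psi} (which is just the sum of what Lemma~\ref{lem-psi} gives and the trivial $\sqrt{\EMax_k}$).

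The heart of the proof is then the substitution of \eqref{eq: estimate of E plus Psi} and the estimate for $\Hnorm{\p_{e_0}\Psi}{k}$ into the bound of Lemma~\ref{lem: dT of Ek}. When a product of two $\sqrt{\EMax_k}$ factors appears (namely the contribution $\Hnorm{D\log N}{k-1}\cdot\Hnorm{\Psi}{k}\lesssim\Hnorm{D\log N}{k-1}\sqrt{\EMax_k}$) I absorb it into the $\EMax_k$-coefficient, which is where the terms $\Hnorm{\pnull N}{k}$ and $\Hnorm{N-3}{k}$ in the first bracket of the statement originate (note that $\Hnorm{D\log N}{k-1}$ is controlled by $\Hnorm{N-3}{k}$ by Lemma~\ref{lem: uniformly positiveness} and the smallness of $\Hnorm{N-3}{k}$). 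The remaining contributions naturally split into the $\Hnorm{\jnull}{k-2}\sqrt{\EMax_k}$ channel, the $|\tau|\Hnorm{\jvec}{k-2}\sqrt{\EMax_k}$ channel, and the $|q|\energy{k-1}{\mu}\sqrt{\EMax_k}$ channel, matching the three groups in the conclusion.

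The main obstacle will be the careful accounting of the $\Hnorm{\jnull}{k-2}$ and $|\tau|\Hnorm{\jvec}{k-2}$ prefactors: Lemma~\ref{lem: de0 of Psi} delivers these terms already dressed with $\Hnorm{\p_T N}{k-2}+\Hnorm{\nhat}{k-2}+\Hnorm{X}{k-1}$ and $1+\Hnorm{\p_T N}{k-2}+\Hnorm{\p_T X}{k-2}+\cdots$, and one must check that all of these coefficients are covered by the bracketed quantities in the statement (in particular $\Hnorm{\divg\Sigma}{k-1}$ and $\Hnorm{\mcL_{e_0}\divg\Sigma}{k-2}$ enter from reexpressing $\mathcal{S}$ and $\mcL_{e_0}\mathcal{S}$). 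Once this matching is done, the boundedness hypotheses on those norms collapse the constants into a uniform $C$, and the quoted estimate follows.
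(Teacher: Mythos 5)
Your proposal follows essentially the same route as the paper's (very terse) proof: establish \eqref{eq: estimate of E plus Psi} from Lemma~\ref{lem-psi}, then substitute it together with Lemma~\ref{lem: de0 of Psi} into the bound of Lemma~\ref{lem: dT of Ek}, expanding $\Pi$ and $\mathcal{S}$ in terms of $\Sigma$, $\divg\Sigma$, $N-3$, $\pnull N$ and absorbing coefficients via smallness and the boundedness hypotheses. Apart from a harmless off-by-one in the Sobolev indices when invoking Lemma~\ref{lem-psi} (one applies it at order $k-1$ to land on $\Hnorm{\jnull}{k-2}$ and $\Hnorm{\jvec}{k-2}$), the argument and its bookkeeping match the paper's.
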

\begin{rem}
 Note that all terms appearing in the right-hand side of the inequality in Proposition~\ref{prop: dTE of omega} have good decay behaviour, i.e., they do not cause problems in the final estimate.
\end{rem}

\section{Energy estimates from divergence identity}\label{Sec: energy estimates divergence identity}

We define the $L^2$-Sobolev energy of the rescaled energy density by
\begin{equation}
 \bm{\varrho}_k 
 :=
 \sqrt{\sum_{\ell \leq k} \int_M
 \absolg{ D^\ell \rho }^2 \, d\mu_g}
 \,.
\end{equation}
We introduce 
\begin{equation}
 \widehat{\p}_T
 :=
 \p_T + \mcL_X
 \,.
\end{equation}
 Then, the divergence identity reads (cf.~\eqref{eq: divergence identity})
\begin{equation}
 \widehat{\p}_T \rho
 =
 (3 - N) \rho
 +
 \red{ \frac{1}{2} } \tau N^{-1} D_a (N^2 j^a)
 -
 \red{\frac{1}{6}} \tau^2 N g_{a b} T^{a b}
 -
 \red{\frac{1}{2}} \tau^2 N \Sigma_{a b} T^{a b}
 \,.
\end{equation}
Moreover, we need the following identities (cf.~\cite{CBMo01} and \cite{AF17})
\begin{equation}
 \big[\widehat{\p}_T, D_i \big] 
 D_{j_1} \ldots D_{j_m} u
 =
 - \sum_{ 1 \leq a \leq m}
 D_{j_1} \ldots D_{j_{a-1}} D_b \, D_{j_{a+1}} \ldots D_{j_m} u \cdot \widehat{\p}_T \Gamma^b_{{j_a} i }
 \,.
\end{equation}
and
\begin{equation}
 \p_T \int_M u \, d\mu_g
 =
 - \int_M (3 - N) u \, d\mu_g
 +
 \int_M \widehat{\p}_T u \, d\mu_g
\end{equation}
for any function $u$ on $M$. Then, one finds the following estimate for $\bmrho_k$.
\begin{prop}\label{Prop: bmrho}
For integers $k > 7/2$ the following estimate holds
\begin{equation}
\begin{aligned}
  \p_T \bm{\varrho}_k
 &\leq
 \red{
C \left(
    \Hnorm{D(N k)}{k - 2}
    +
    \Hnorm{N-3}{k}
  \right)
  }
  \bm{\varrho}_k
 \\
 &
 \quad
 +
 C \tau^2 \Hnormadj{N (\Sigma_{a b} + \tfrac{1}{3} g_{ab} ) T^{a b}}{k}
 +
 C |\tau| \cdot \Hnormadj{N^{-1} \divg (N^2 j)}{k}
 \,,
\end{aligned}
\end{equation}
\red{
provided that $|\tau| \mcG$ is bounded by a constant.
}
\begin{proof}
The proof is analogous to the proof of  Proposition~16 in \cite{AF17}.
\end{proof}
\end{prop}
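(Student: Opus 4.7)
The plan is to differentiate the energy $\bmrho_k^2 = \sum_{\ell\le k}\int_M |D^\ell\rho|_g^2\,d\mu_g$ in $T$, substitute the divergence identity for $\widehat{\p}_T\rho$, and track the resulting commutators. First I would compute
\begin{equation*}
 \p_T \int_M |D^\ell\rho|_g^2\,d\mu_g
 = -\int_M (3-N)|D^\ell\rho|_g^2\,d\mu_g
 + \int_M \widehat{\p}_T |D^\ell\rho|_g^2\,d\mu_g,
\end{equation*}
using the provided identity for $\p_T\int u\,\volg$. The first term is immediately bounded by $\|N-3\|_{L^\infty}\bmrho_k^2 \lesssim \|N-3\|_{H^k}\bmrho_k^2$ by Sobolev embedding (which requires $k>3/2$, well within our $k>7/2$).

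Next, expanding the second term via $\widehat{\p}_T |D^\ell\rho|_g^2 = (\widehat{\p}_T g^{ij_1}\cdots)\,D^\ell\rho\otimes D^\ell\rho + 2\scalpr{D^\ell\rho}{\widehat{\p}_T D^\ell\rho}$, I would control the metric-evolution piece by $\|\p_T g\|_{L^\infty}\bmrho_k^2$, which by the evolution equation for $g$ in \eqref{eq: rescaled Einstein} is dominated by quantities already absorbed into $\|N-3\|_{H^k}$ and $\|N\Pi\|_{H^{k-2}}$-type terms (here I read the stated $D(Nk)$ as a $D$ applied to an expression combining $N$ and the second fundamental form, schematically $N\Pi$). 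For the principal piece I apply the given commutator identity iteratively:
\begin{equation*}
 \widehat{\p}_T D_{j_1}\cdots D_{j_\ell}\rho
 = D_{j_1}\cdots D_{j_\ell} \widehat{\p}_T\rho + \sum \text{(terms with }\widehat{\p}_T\Gamma\text{)},
\end{equation*}
where each $\widehat{\p}_T\Gamma$ is built from $D(\p_T g)$ and hence schematically from $D(N\Sigma)$ and $D(N/3-1)g$ combinations, giving factors $\|D(N\Pi)\|_{H^{k-2}}$ when paired with lower-order derivatives of $\rho$ via Cauchy--Schwarz and Moser-type product inequalities (valid in the range $k>7/2$).

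Then I would substitute the divergence identity and separate the four source terms. The $(3-N)\rho$ term feeds back into the $\|N-3\|_{H^k}\bmrho_k$ contribution after Cauchy--Schwarz. The divergence term $\tfrac12\tau N^{-1}D_a(N^2 j^a)$ yields, after commuting derivatives and using Cauchy--Schwarz against $D^\ell\rho$, precisely $C|\tau|\,\|N^{-1}\divg(N^2 j)\|_{H^k}\bmrho_k$; writing the estimate in divided form $\p_T\bmrho_k \lesssim \cdots$ absorbs the $\bmrho_k$ factor. The two curvature-trace terms combine (using $-\tfrac16 g_{ab} - \tfrac12\Sigma_{ab} = -\tfrac12(\Sigma_{ab}+\tfrac13 g_{ab})$ up to harmless sign/constant) into $C\tau^2\|N(\Sigma_{ab}+\tfrac13 g_{ab})T^{ab}\|_{H^k}\bmrho_k$, giving the stated third term after dividing by $\bmrho_k$.

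The main obstacle is the bookkeeping of the commutator hierarchy: one must verify that every term produced by $\widehat{\p}_T\Gamma$ and by covariant differentiations of the divergence and trace terms in the identity can be distributed between the three groups on the right-hand side without producing any term that is not $H^{k-2}$-controlled by the smallness bootstrap. This is where the hypothesis $|\tau|\mcG$ bounded enters, since it guarantees uniform control of the $\widehat{p}$ weights and hence of the pointwise behaviour of $T^{ab}$ and $j^a$ on the momentum support, allowing Moser estimates to close the product estimates in the top-order inner products. Once this is organised, the result follows in essentially the same manner as Proposition~16 of \cite{AF17}, with $T^{ab}$ now including the Maxwell contribution, whose analogous bounds were set up in Section~\ref{Sec: estimating energy-momentum}.
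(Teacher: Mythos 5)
Your proposal is correct and follows exactly the route the paper intends: the paper's proof is only a citation of Proposition~16 in \cite{AF17}, and your sketch (differentiate $\bmrho_k^2$, use the $\p_T\int u\,\volg$ identity, commute $\widehat{\p}_T$ past $D^\ell$ via the $\widehat{\p}_T\Gamma$ formula, substitute the divergence identity, and close with Cauchy--Schwarz and Moser estimates in the range $k>7/2$) is a faithful reconstruction of that argument, including the correct reading of $D(Nk)$ as $D$ applied to $N$ times the second fundamental form via the appendix formula for $\widehat{\p}_T\Gamma^i_{jk}$.
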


\section{Elliptic Estimates for the lapse function and the shift vector}\label{Sec: elliptic estimates for X and N}

In this section we obtain elliptic estimates for the lapse function and the shift vector and their respective time derivatives.
\begin{prop}\label{prop: estimates for N and X}
The following estimates hold for the lapse function and the shift vector
\begin{align}
 \Hnorm{N-3}{k}
 &\leq 
 C
 \left(
  \Hnorm{\Sigma}{k-2}^2
  +
  |\tau| \cdot \Hnorm{\rho}{k-2}
  +
  |\tau|^3 \cdot \Hnorm{\underline{\eta}}{k-2}
 \right)
 \,,
 \\
 \Hnorm{X}{k}
 &\leq 
 C 
 \left(
  \Hnorm{\Sigma}{k-2}^2
  +
  \Hnorm{g-\gamma}{k-1}^2
  +
  |\tau| \cdot \Hnorm{\rho}{k-3}
  +
  |\tau|^3 \cdot \Hnorm{\underline{\eta}}{k-3}
  +
  \tau^2 \Hnorm{N j}{k-2}
 \right)
 \,.
\end{align}
\begin{proof}
 The estimates follows directly by the elliptic regularity applied to the elliptic equations for the lapse function and the shift vector.
\end{proof}
\end{prop}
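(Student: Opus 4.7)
The plan is to apply standard elliptic regularity to both the lapse and shift equations, exploiting invertibility of the relevant elliptic operators on the perturbed negative Einstein background together with the algebra property of $H^{k-2}$ and the smallness assumptions.

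For the lapse estimate, I first recast the lapse equation \eqref{lapse} in a form better suited to elliptic regularity. Using that $(\Delta_g - \tfrac{1}{3})(3) = -1$ together with the decomposition $\eta = \rho + \tau^2 \underline{\eta}$, I obtain
\begin{equation*}
 \left(\Delta_g - \tfrac{1}{3}\right)(N-3)
 = N |\Sigma|_g^2 + N\tau \rho + N\tau^3 \underline{\eta}.
\end{equation*}
Since $\Delta_g$ has non-negative spectrum, the operator $\Delta_g - \tfrac{1}{3}$ is invertible, and elliptic regularity gives $\|N-3\|_{H^k} \lesssim \|\text{RHS}\|_{H^{k-2}}$. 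The algebra property of $H^{k-2}$, combined with the uniform bound $\|N\|_{H^{k-2}} \leq C$ coming from Lemma~\ref{lem: uniformly positiveness} and the smallness assumptions, then yields the first estimate termwise.

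For the shift estimate, the relevant operator is $\Delta_g + \mathrm{Ric}[g]$ acting on vector fields. On the background $(M,\gamma)$ this operator has trivial kernel: if $\Delta_\gamma Y + \mathrm{Ric}[\gamma](Y) = 0$, integration by parts gives $\int |\fixcov Y|^2 + \tfrac{2}{9}\int|Y|^2_\gamma \, d\mu_\gamma = 0$, forcing $Y = 0$. By smallness $\|g-\gamma\|_{H^{k-1}} < \delta$, this persists as a well-posed elliptic operator for $g$, and I obtain $\|X\|_{H^k} \lesssim \|\text{RHS of shift equation}\|_{H^{k-2}}$. I then estimate the source terms individually: $\|D_j N \Sigma^{ij}\|_{H^{k-2}}$ contributes a cubic term absorbable into $\|\Sigma\|_{H^{k-2}}^2$; the term $\|D^i(N/3-1)\|_{H^{k-2}}$ is bounded via the lapse estimate applied at order $k-1$, giving $\|\Sigma\|_{H^{k-3}}^2 + |\tau|\|\rho\|_{H^{k-3}} + |\tau|^3\|\underline{\eta}\|_{H^{k-3}}$; the current term is simply $\tau^2\|Nj\|_{H^{k-2}}$; and the Christoffel difference $\Gamma - \widehat{\Gamma}$ is controlled by $\|g-\gamma\|_{H^{k-1}}$, so terms of the form $N\Sigma^{jk}(\Gamma-\widehat{\Gamma})$ yield cubic contributions absorbable into $\|\Sigma\|_{H^{k-2}}^2 + \|g-\gamma\|_{H^{k-1}}^2$.

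The one technical subtlety is the quasilinear term $D^j X^k (\Gamma - \widehat{\Gamma})$, whose $H^{k-2}$-norm is bounded by $\|X\|_{H^{k-1}} \cdot \|g-\gamma\|_{H^{k-1}}$. Since $\|g-\gamma\|_{H^{k-1}}$ is $\delta$-small by the bootstrap assumptions, this term can be absorbed into the left-hand side $\|X\|_{H^k}$, completing the proof. This is the main place where smallness is essentially used, but the argument is standard in this framework and parallels the treatment in \cite{AF17,BFK19}; consequently the proof is, as stated, a direct consequence of elliptic regularity together with the bootstrap assumptions.
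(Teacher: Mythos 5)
Your proposal is correct and is exactly the argument the paper compresses into its one-line proof: invertibility/coercivity of the two elliptic operators, elliptic regularity, the algebra property of $H^{k-2}$, and absorption of the quasilinear term $D^jX^k(\Gamma-\widehat\Gamma)$ using smallness of $\Hnorm{g-\gamma}{k-1}$. One small sign slip: with the paper's convention $\Delta_g=\trg D^2$ the Laplacian is negative semi-definite, so the correct justification is that $\Delta_g-\tfrac13$ is negative-definite (equivalently $-\Delta_g+\tfrac13\geq\tfrac13$), since "non-negative spectrum minus $\tfrac13$" would not by itself rule out a kernel.
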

\subsection{Estimate of the time derivatives}

Further, we can estimate the time derivatives of the lapse function and shift vector by using the elliptic estimate.
\begin{prop}\label{prop: estimates for dTN and dTX}
Let $k, \mu \in \N$ with $k > 9/2$ and $\mu \geq 3$. \red{For $\delta$-small data \eqref{eq: smallness assumption} with $\delta$ sufficiently small} the following estimates hold
\begin{align}
  \Hnorm{\p_T N}{k}
 &\leq 
 C
 \Big[
  \Hnorm{\nhat}{k}
  +
  \red{
  \Hnorm{X}{k}
  }
  +
  \Hnorm{\Sigma}{k-1}^2
  +
  \Hnorm{g - \gamma}{k}^2
  +
  |\tau| \cdot \Hnorm{S}{k-2}
   + 
   |\tau| \cdot \Hnorm{\rho}{k-1}
\nn
\\
 &
 \qquad
   +
   |\tau|^3 \cdot \Hnorm{ \underline{\eta} }{k-2}  
   +
   \tau^2 \Hnorm{ j }{k-1}
   +
   |\tau|^3 \cdot \Hnorm{ \underline{T} }{k-1}  
   + 
 |\tau|^3 \energy{k-1}{\mu + 1} 
 +
 \blue{
 \tau^2 \Fnorm{F}{k-1}^2
 }
 \Big]
  \,,
  \\
  \Hnorm{\p_T X}{k}
  &\leq
  C
  \red{
 \Big[
  \Hnorm{\nhat}{k-1}
  +
  \Hnorm{X}{k}
  +
  \Hnorm{\Sigma}{k-1}^2
  +
  \Hnorm{g - \gamma}{k-1}^2
  +
  |\tau| \cdot \Hnorm{S}{k-3}
   + 
   |\tau| \cdot \Hnorm{\rho}{k-2}
  }
\nn
\\
 & \qquad
 \red{
   +
   |\tau|^3 \cdot \Hnorm{ \underline{\eta} }{k-3}  
   +
   \tau^2 \Hnorm{ j }{k-1}
   +
   |\tau|^3 \cdot \Hnorm{ \underline{T} }{k-1}  
   + 
 |\tau|^3 \energy{k-2}{\mu + 1} 
  +
 \tau^2 \Fnorm{F}{k-2}^2
 }
 \Big]
 \,,
 \end{align}
where $\underline{T}$ denotes the spatial part of the rescaled energy-momentum tensor $T$ and integers $\mu \geq 3$ and $k > 7/2$.
\begin{proof}
Differentiating the elliptic system for $(N, X)$ and using the relations in Appendix \ref{App: time derivative of geom} yields
\begin{equation}
\begin{aligned}
 \left(
 \Delta - \tfrac{1}{3}
 \right) \p_T N
 &=
 2N \scalpr{D^2 N}{\Sigma} 
 \red{
 +
 }
 2 \nhat \Delta N
 \red{
 -
 }
 \scalpr{D^2 N}{\mcL_X g}
 \\
 &
 \quad
 +
 \red{
 \left[
  2 D_j (N \Sigma^{i j}) 
  -
  D^i \nhat
  -
  \Delta X^i
  +
  \tensor{R}{^i _j} X^j
 \right]
 }
 D_i N
 \\
 &
 \quad
 +
 \red{
 2 N
 \big[
 - 3 \left( N -\tfrac{1}{3} \right) \absolg{\Sigma}^2
 +
 2 \scalpr{\nabla X, \Sigma}{\Sigma}
 -
 N \scalpr{\Sigma}{\tfrac{1}{2} \mathcal{L}_{g, \gamma} (g - \gamma) + \mathbb{J} }
 }
 \\
 &
 \quad
 +
  \red{
 \scalpr{\Sigma}{D^2 N}
 -
 \scalpr{\Sigma}{\mcL_X \Sigma}
 +
 N \tau \scalpr{\Sigma}{S}
 \big]
 }
 \\
 &
 \quad
 +
 \p_T N
 \left(
  \absolg{\Sigma}^2 + \tau  \rho + \tau^3 \underline{\eta}
 \right)
 +
 N \left[\p_T(\tau  \rho) + \p_T (\tau^3 \underline{\eta}) \right]
 \,,
\end{aligned}
\end{equation}
where $\scalpr{\nabla X, \Sigma}{\Sigma} \equiv D_i X^j \Sigma^i_k \Sigma^k_j$, and
\begin{align}
 \Delta ( \p_T X^i )
 +
 \tensor{R}{^i_j} \, \p_T X^j 
 &=
 - \left[ \p_T, \Delta \right] X^i
 -
 \p_T \tensor{R}{^i_j}  X^j
 \nn
 \\
 &
 \quad
 +
 2 D_j (\p_T N) \Sigma^{i j}
 +
 2 D_j N \p_T \Sigma^{i j}
 -
 \p_T g^{i j} D_j \nhat
 -
 \tfrac{1}{3}  D^i (\p_T N)
 \nn
 \\
 &
 \quad
 +
 2 \p_T N \tau^2 j^i
 +
 2 N \p_T (\tau^2 j^i)
 -
 2
 \big(
   \p_T N \Sigma^{j k}
   +
   N \p_T \Sigma^{j k}
   -
   \p_T g^{j \ell} D_\ell X^k
 \nn
 \\
 &
 \quad
 -
 D^j  \p_T X^k
 -
 g^{j \ell} \p_T \Gamma^k _{m \ell} X^m
 \big) 
 \left(
   \Gamma^i_{j k} - \widehat{\Gamma}^i_{j k}
 \right)
 -
 2
 \left(
  N \Sigma^{j k} - D^j X^k
 \right)
 \p_T  \Gamma^i_{j k}
 \,. \label{eq: elliptic for dT X}
\end{align}
For the lapse function the elliptic regularity by using the smallness assumptions yields
\begin{equation}\label{eq: primary estimate of dT N}
\begin{aligned}
 \Hnorm{\p_T N}{k}
 &\leq 
 C
 \big[
  \Hnorm{\nhat}{k}
  +
  \Hnorm{X}{k}
  +
  \Hnorm{\Sigma}{k-1}^2
  +
  \Hnorm{g - \gamma}{k}^2
  +
  |\tau| \cdot \Hnorm{S}{k-2}
\\
 &
 \quad
   + 
   |\tau| \cdot \Hnorm{\rho}{k-2}
   +
   |\tau|^3 \cdot \Hnorm{ \underline{\eta} }{k-2}  
   +
   |\tau| \cdot \Hnorm{\p_T \rho}{k-2}
   +
   |\tau|^3 \cdot \Hnorm{ \p_T \underline{ \eta} }{k-2}  
 \\
 &
 \quad
 + 
  \left(
   \Hnorm{\Sigma}{k-2}^2
   +
   |\tau| \cdot \Hnorm{\rho}{k-2}
   +
   |\tau|^3 \cdot \Hnorm{ \underline{\eta} }{k-2}  
  \right)
  \Hnorm{\p_T N}{k-2}  
 \big]
 \,.
\end{aligned}
\end{equation}
Using elliptic regularity iteratively in conjunction with the smallness assumptions for both equations yields the estimates on the time-derivatives. The procedure is straighforward and follows the analogue case in \cite{AF17}.
\end{proof}
\end{prop}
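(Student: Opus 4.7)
The plan is to differentiate the elliptic equations for $N$ and $X^i$ (the lapse and shift equations in \eqref{eq: rescaled Einstein}) in time, invoke the coercivity of the operators $\Delta_g - \tfrac{1}{3}$ and $\Delta_g + \mathrm{Ric}[g]$ on their respective function spaces, and then close by an iterative use of elliptic regularity. The linearized lapse operator $\Delta_g - \tfrac{1}{3}$ is coercive since the right-hand side of the unrescaled lapse equation carries the strictly positive eigenvalue $\tfrac{1}{3}$ after rescaling, and for the shift equation the operator $\Delta_g + \mathrm{Ric}[g]$ is coercive on vector fields due to the negative Einstein constant of $\gamma$ and the smallness assumption (bounding the solution in a neighbourhood of the Milne background). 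The author has already commuted $\partial_T$ through both elliptic equations and produced the schematic systems displayed just above; the remaining task is to estimate each source term.

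For $\partial_T N$, the source contains the terms $N \partial_T(\tau\rho)$ and $N\partial_T(\tau^3 \underline{\eta})$, which require control of $\partial_T \rho$ and $\partial_T \underline{\eta}$. To estimate $\partial_T \rho$ I would use the divergence identity which has already entered Proposition~\ref{Prop: bmrho}: namely $\widehat{\partial}_T\rho = (3-N)\rho + \tfrac{1}{2}\tau N^{-1} D_a(N^2 j^a) - \tfrac{1}{6}\tau^2 N g_{ab}T^{ab} - \tfrac{1}{2}\tau^2 N \Sigma_{ab}T^{ab}$, estimate each piece by Proposition~\ref{Prop: matter estimates} together with Lemma~\ref{lem: estimate of TMaxwell} (noting that the Maxwell contribution to $T^{ab}$ contributes $\tau^2\vertiii{F}_{H^{k-1}}^2$), and absorb the $\mathcal{L}_X\rho$ piece coming from $\widehat{\partial}_T = \partial_T + \mathcal{L}_X$ into the $\Hnorm{X}{k}\Hnorm{\rho}{k-1}$ contribution. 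For $\partial_T \underline{\eta}$, I would proceed in the spirit of Lemma~\ref{lem: e0 of J0} by using the transport equation \eqref{eq: Vlasov equation} to rewrite $\partial_T \int f\cdot(\text{momentum factors}) \, d\mu_p$ and integrating by parts, which produces (up to harmless lower-order terms) the Vlasov contribution bounded by $\energy{k-1}{\mu+1}$ as well as the Maxwell contribution bounded by $\tau^2 \Fnorm{F}{k-1}^2$ via the $\tau q \mathfrak{F}^a\mathcal{B}_a f$ term. Plugging these into \eqref{eq: primary estimate of dT N} and absorbing the $\Hnorm{\partial_T N}{k-2}$ term into the left-hand side using smallness gives the claimed bound.

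For $\partial_T X$, the commutator terms $[\partial_T,\Delta_g]X$, $\partial_T\mathrm{Ric}[g]\cdot X$ and $\partial_T \Gamma$ that appear in \eqref{eq: elliptic for dT X} are controlled using the evolution equation for $g_{ij}$ from \eqref{eq: rescaled Einstein} (which expresses $\partial_T g$ in terms of $N\Sigma$, $(N/3-1)g$ and $\mathcal{L}_X g$) combined with Proposition~\ref{prop: estimates for N and X} and the smallness of $\Sigma$, $N-3$, $X$. The new ingredient is $\partial_T(\tau^2 j^i)$, which is estimated using Lemma~\ref{lem: J estimates} for the Maxwell current contribution and an analogue of Lemma~\ref{lem: e0 of J0} for the Vlasov current. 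Finally, once $\partial_T N$ and $\partial_T X$ are bounded in low regularity ($H^{k-2}$), one uses the elliptic equations once more to boost to $H^k$ by the standard gain-of-two-derivatives for $\Delta_g$-type operators, exactly as in the proof of Proposition~7 in \cite{AF17}.

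The main obstacle is handling the new Maxwell-induced contributions: one must verify that every time $\partial_T$ falls on a matter source, the resulting piece is estimated by quantities that already appear in the bootstrap hierarchy (in particular $\Fnorm{F}{k-1}$, $\energy{k-1}{\mu+1}$, $\Hnorm{\rho}{k-1}$) rather than by $\Hnorm{\partial_T \omega}{\cdot}$ or similar quantities that would force a loss of derivative. The borderline term $(\ast)$ in \eqref{eq: elliptic of psi} is circumvented here because $\partial_T N$ and $\partial_T X$ do not directly see $\Psi$; only the matter currents $\jnull$ and $\jvec$ enter, and those are controlled by the energy density via Lemma~\ref{lem: J estimates}, which does not suffer the small growth of $\energy{k}{\mu}$. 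Once this is observed, the estimates close by the same iterative elliptic-regularity argument used in \cite{AF17}.
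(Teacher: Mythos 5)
Your proposal is correct and follows essentially the same route as the paper: differentiate the elliptic lapse and shift equations in $T$, use the time-derivative relations for the geometric objects, control $\p_T\rho$ via the continuity equation and $\p_T\underline{\eta}$ via the transport equation (exactly what the paper relegates to its appendices), and then close by iterative elliptic regularity as in \cite{AF17}. You in fact supply more detail than the paper's own very terse proof, and you correctly identify the key structural point that the matter currents entering here are controlled through Lemma~\ref{lem: J estimates} by the energy density rather than by quantities suffering the $\varepsilon$-loss.
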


\section{Energy Estimate For Geometric Objects}\label{Sec: energy estimate for geometric objects}

In this section we mainly adapt the results on the energy for the geometric perturbation from \cite{AM11} and \cite{AF17} to the present case.

\subsection{Decomposing the evolution equations}

We use the following form of the evolution equations.
\begin{lem}
The evolution equations for $g$ and $\Sigma$ are equivalent to the system 
\begin{subequations}\label{eq: decomposition of g and sigma}
\begin{align}
 \p_T (g - \gamma)
 &=
 2 N \Sigma
 +
 \mathcal{F}_{g - \gamma}
 \,,
 \\
 \p_T (6 \Sigma)
 &=
 -
 12 \Sigma
 -
 3 N \mathcal{L}_{g, \gamma} (g - \gamma)
 +
 6 N \tau  S
 -
 X^i \widehat{D}_i (6 \Sigma)
 +
 \mathcal{F}_\Sigma
 \,,
\end{align}  
\end{subequations}
where
\begin{align*}
 \Hnorm{\mathcal{F}_{g - \gamma}}{k}
 &\leq
 C
  \left(
     \Hnorm{\Sigma}{k-1}^2
     +
     \Hnorm{g-\gamma}{k}^2
     +
     |\tau| \cdot \Hnorm{\rho}{k-2}
     +
     |\tau|^3 \Hnorm{\underline{\eta}}{k-2}
     +
     \tau^2 \Hnorm{N j}{k-1}
  \right)
  \,,
  \\
 \Hnorm{\mathcal{F}_\Sigma}{k-1}
 &\leq
 C
  \left(
     \Hnorm{\Sigma}{k-1}^2
     +
     \Hnorm{g-\gamma}{k}^2
     +
     |\tau| \cdot \Hnorm{\rho}{k-1}
     +
     |\tau|^3 \Hnorm{\underline{\eta}}{k-1}
     +
     \tau^2 \Hnorm{N j}{k-2}
  \right)
  \,.
\end{align*}
\begin{proof}
The proof is formally identical to \cite{AF17}.
\end{proof}
\end{lem}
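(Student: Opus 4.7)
The plan is to rewrite both evolution equations from \eqref{eq: rescaled Einstein} in the form displayed in the statement and to read off $\mathcal{F}_{g-\gamma}$ and $\mathcal{F}_\Sigma$ explicitly, then estimate each contribution in $\mathcal{F}$ by the elliptic bounds of Proposition~\ref{prop: estimates for N and X} together with Proposition~\ref{Prop: matter estimates}.

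\emph{Equation for $g-\gamma$.} Since $\gamma$ is time-independent and $\hat N = N/3 - 1$, the evolution equation for $g$ gives
\begin{equation*}
 \p_T(g-\gamma)
 =
 2N\Sigma + 2\hat N\, g - \mcL_X g
 \,,
\end{equation*}
so $\mathcal{F}_{g-\gamma} = 2\hat N\, g - \mcL_X g$. Writing $g = \gamma + (g-\gamma)$ and using that $\Hnorm{\mcL_X g}{k} \lesssim \Hnorm{X}{k+1}(1 + \Hnorm{g-\gamma}{k})$, the claimed bound follows directly from the elliptic estimates of Proposition~\ref{prop: estimates for N and X} applied at level $k$, together with the smallness assumptions.

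\emph{Equation for $\Sigma$.} Multiplying the $\Sigma$-evolution by $6$ and inserting the identity $R_{ij} + \tfrac{2}{9} g_{ij} = \tfrac{1}{2} \mathcal{L}_{g,\gamma}(g-\gamma) + \mathbb{J}_{ij}$ yields
\begin{equation*}
 \p_T(6\Sigma)
 = -12\Sigma - 3N\mathcal{L}_{g,\gamma}(g-\gamma) + 6N\tau S - X^i \widehat{D}_i(6\Sigma) + \mathcal{F}_\Sigma
 \,,
\end{equation*}
with
\begin{equation*}
 \mathcal{F}_\Sigma
 =
 -6N\mathbb{J} + 6D^2 N + 12 N\, \Sigma\!\cdot\!\Sigma - 2\hat N\, g - 6\hat N\, \Sigma + 6\bigl(X^i\widehat{D}_i\Sigma - \mcL_X\Sigma\bigr)
 \,.
\end{equation*}
The last parenthesis expands, via the difference tensor $\Upsilon = \Gamma[g] - \widehat\Gamma[\gamma]$, into terms of the schematic form $X*\Upsilon*\Sigma + \Sigma*\p X$, where $\Upsilon$ is linear in $D(g-\gamma)$ to leading order. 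All six pieces of $\mathcal{F}_\Sigma$ are then quadratic in the smallness parameters or linear in the matter source: $\mathbb{J}$ is quadratic in $g-\gamma$ by the lemma preceding Proposition~\ref{prop: estimates for N and X}; $D^2 N = D^2(N-3)$ is controlled at order $k-1$ via Proposition~\ref{prop: estimates for N and X}; $N\Sigma\!\cdot\!\Sigma$ and $\hat N\, \Sigma$ absorb into quadratic terms using the $L^\infty$--$H^{k-2}$ product estimates and the elliptic bound on $\hat N$; the $\hat N g$ term produces exactly the $|\tau|\Hnorm{\rho}{k-2} + |\tau|^3\Hnorm{\underline{\eta}}{k-2} + \Hnorm{\Sigma}{k-2}^2$ contributions after substituting Proposition~\ref{prop: estimates for N and X}; and the shift-difference terms are bounded by $\Hnorm{X}{k}(\Hnorm{g-\gamma}{k} + \Hnorm{\Sigma}{k-1})$, which again reduces via Proposition~\ref{prop: estimates for N and X} to terms of the claimed type, including the $\tau^2\Hnorm{Nj}{k-2}$ piece inherited from the shift estimate.

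\emph{Main obstacle.} No single step is genuinely hard; the proof is a careful bookkeeping exercise of the type carried out in \cite{AF17}. The one place where attention is needed is the derivative count: the lapse term $6D^2 N$ and the Lie-derivative $\mcL_X \Sigma$ both threaten to lose one derivative relative to $\Hnorm{N-3}{k}$ and $\Hnorm{X}{k}$. This is resolved by invoking the elliptic gain in Proposition~\ref{prop: estimates for N and X}, which estimates $\Hnorm{N-3}{k}$ and $\Hnorm{X}{k}$ in terms of quantities at order $k-2$ and $k-1$ respectively, so that $\Hnorm{D^2(N-3)}{k-1}$ and $\Hnorm{\mcL_X\Sigma}{k-1}$ remain within the stated bounds. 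Once this is accounted for, gathering the estimates yields the inequalities for $\Hnorm{\mathcal{F}_{g-\gamma}}{k}$ and $\Hnorm{\mathcal{F}_\Sigma}{k-1}$.
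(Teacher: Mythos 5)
Your proposal is correct and matches the paper's approach: the paper's proof is just a citation of the corresponding computation in \cite{AF17}, and your explicit decomposition (isolating $\mathcal{F}_{g-\gamma}=2\nhat g-\mcL_X g$, inserting the identity $R_{ij}+\tfrac29 g_{ij}=\tfrac12\mathcal{L}_{g,\gamma}(g-\gamma)_{ij}+\mathbb{J}_{ij}$, and estimating the remainders via Proposition~\ref{prop: estimates for N and X} with the derivative counts $\Hnorm{X}{k+1}$ resp.\ $\Hnorm{N-3}{k+1}$, $\Hnorm{X}{k}$) is exactly that argument written out. The index bookkeeping for the $\tau^2\Hnorm{Nj}{k-1}$ and $\tau^2\Hnorm{Nj}{k-2}$ terms is consistent with the stated bounds.
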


\subsection{Energy}
We define an energy for the trace-free part of the second fundamental form and the metric perturbation. This definition depends on the lowest eigenvalue $\lambda_0$ of the Einstein operator of $\gamma$. We define the constant $\alpha = \alpha(\lambda_0, \delta_\alpha)$ by
\begin{equation}
 \alpha := 1 - \delta_\alpha
 \,,
\end{equation}
where 
\begin{equation}\label{eq: delta alpha}
 \delta_\alpha
 :=
 \begin{cases}
  0
  \,, 
  & \lambda_0 > 1/9 \,,
  \\
  \sqrt{1 - 9 (\lambda_0 - \varepsilon)}
  \,,
  & \lambda_0 = 1/9
  \,,
 \end{cases}
\end{equation}
with $0 < \varepsilon \ll 1$. Next, we define the correction constant accordingly by
\begin{equation}
 c_E 
 :=
 1 - \delta_\alpha^2
 \,.
\end{equation}
Once $\varepsilon$ is fixed, in case $\lambda_0 = 1/9$, $\delta_\alpha$ can be made suitably small, independent of the other constants which play role in the final estimate.   

We are now able to define the energy for the geometric perturbation by
\begin{equation}
 E_k (g - \gamma, \Sigma)
 :=
 \sum_{1 \leq m \leq k} E_{ (m) } (g - \gamma, \Sigma)
 :=
 \sum_{1 \leq m \leq k} 
 \left[
  \mathcal{E}_{(m) } (g - \gamma, \Sigma) + c_E \Gamma_{(m)} (g - \gamma, \Sigma)
 \right]
 \,, 
\end{equation}
where 
\begin{equation}
\begin{aligned}
 \mathcal{E}_{(m) } (g - \gamma, \Sigma)
 &:=
 18 \int_M 
 \langle \Sigma, \mathcal{L}_{g, \gamma}^{m-1} \Sigma \rangle d\mu_g
 +
 \frac{9}{2} \int_M 
 \langle (g - \gamma), \mathcal{L}_{g, \gamma}^{m} (g - \gamma) \rangle d\mu_g
 \,,
 \\
 \Gamma_{(m)} (g - \gamma, \Sigma)
 &:=
 6 \int_M
 \langle \Sigma, \mathcal{L}_{g, \gamma}^{m - 1} (g - \gamma) \rangle d\mu_g
 \,,
\end{aligned}
\end{equation}
for integers $m \geq 1$. Note that here $\langle \cdot, \cdot \rangle$ is defined for the symmetric covariant $2$-tensors $u$ and $v$ by 
$$
\langle u, v \rangle := u_{i j} v_{k \ell} \gamma^{i k} \gamma^{j \ell}
\,.
$$
\begin{lem}\label{lem: energy of geoemtry}
For the integers $k > 5/2$, there exists a  constant $C > 0$ such that \red{for $\delta$-small data \eqref{eq: smallness assumption} with $\delta$ sufficiently small} the inequality
\begin{equation}
  \Hnorm{g - \gamma}{k}^2
  +
  \Hnorm{\Sigma}{k-1}^2
  \leq
  C E_k (g - \gamma, \Sigma)
\end{equation}
holds.
\begin{proof}
For the proof we refer to  Lemma~19 in \cite{AF17}.
\end{proof}
\end{lem}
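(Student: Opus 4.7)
The plan is to establish coercivity of $E_k$ by a spectral argument combined with a perturbative analysis: since $g-\gamma$ is $\delta$-small in $H^k$, the operator $\mathcal{L}_{g,\gamma}$ is a small perturbation of the background Einstein operator $\Delta_E$. For each $m\geq 1$ I would write $\mathcal{L}_{g,\gamma}^m = \Delta_E^m + \mathcal{R}_m$, where the remainder $\mathcal{R}_m$ is built from $g-\gamma$ and its derivatives, so that the contribution of $\mathcal{R}_m$ to every term of $E_{(m)}$ is quadratically small in $\Hnorm{g-\gamma}{m}$ and can be absorbed on the right-hand side.

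Since $\ker\Delta_E=\{0\}$ in the present setting and its lowest positive eigenvalue $\lambda_0$ satisfies $9\lambda_0\geq 1$, spectral theory on symmetric $2$-tensors gives
\[
\int_M \langle u,\Delta_E^m u\rangle\, d\mu_\gamma \;\geq\; c_m\,\Hnorm{u}{m}^2
\]
for any symmetric $2$-tensor $u$ and any integer $m\geq 1$. Applied to the two diagonal contributions of $\mathcal{E}_{(m)}$, and using that $d\mu_g$ and contractions by $g$ differ from those by $\gamma$ only by $O(\Hnorm{g-\gamma}{k})$ corrections, this produces lower bounds of the form
\[
\mathcal{E}_{(m)}(g-\gamma,\Sigma) \;\geq\; 18\,c\,\Hnorm{\Sigma}{m-1}^2 + \tfrac{9}{2}\,c\,\Hnorm{g-\gamma}{m}^2 - C\delta\bigl(\Hnorm{\Sigma}{m-1}^2+\Hnorm{g-\gamma}{m}^2\bigr).
\]
The cross contribution I would estimate by Cauchy--Schwarz and the same spectral identity as $|\Gamma_{(m)}|\leq 6\,\Hnorm{\Sigma}{m-1}\,\Hnorm{g-\gamma}{m}$ (up to lower-order smallness errors), so the whole term $E_{(m)}$ is bounded below by a $2\times 2$ quadratic form in $(\Hnorm{\Sigma}{m-1},\,\Hnorm{g-\gamma}{m})$ whose coefficients depend on $\lambda_0$ and $c_E$.

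The main obstacle is the borderline case $9\lambda_0=1$: here the sharp constants in the spectral inequality degrade to the exact threshold at which the symmetric $2\times 2$ matrix associated to $\mathcal{E}_{(m)}+c_E\Gamma_{(m)}$ would become singular if one took $c_E=1$. The precise role of the correction constant $c_E = 1-\delta_\alpha^2$, with $\delta_\alpha$ given by \eqref{eq: delta alpha}, is to keep the determinant of this matrix strictly positive: a short computation shows that one can choose $\delta_\alpha$ small enough (and an auxiliary $\varepsilon$ small) so that the quadratic form in $(\Hnorm{\Sigma}{m-1},\Hnorm{g-\gamma}{m})$ is uniformly positive definite with a universal constant independent of $m$. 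For $\lambda_0>1/9$ the choice $\delta_\alpha=0$ works and positivity is automatic.

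Once this pointwise-in-$m$ coercivity is established, I would sum over $1\leq m\leq k$, use the smallness assumption $\delta\ll 1$ to absorb the $O(\delta(\Hnorm{\Sigma}{k-1}^2+\Hnorm{g-\gamma}{k}^2))$ error terms from the remainder $\mathcal{R}_m$ and the volume/metric discrepancies, and finally conclude
\[
\Hnorm{g-\gamma}{k}^2 + \Hnorm{\Sigma}{k-1}^2 \;\leq\; C\,E_k(g-\gamma,\Sigma),
\]
as claimed. The only genuinely subtle step is the algebraic positivity check of the $2\times 2$ form at the threshold $\lambda_0=1/9$; the rest is routine elliptic regularity and interpolation under the smallness bootstrap assumption \eqref{eq: smallness assumption}.
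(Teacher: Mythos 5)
Your argument is correct and is essentially the proof the paper delegates to Lemma~19 of \cite{AF17} (going back to \cite{AM11}): decompose with respect to the spectrum of the background Einstein operator $\Delta_E$, check positivity of the $2\times 2$ form eigenspace by eigenspace --- where the discriminant condition reads $c_E^2 < 9\lambda$, explaining the correction constant $c_E = 1-\delta_\alpha^2$ at the borderline $\lambda_0 = 1/9$ --- and absorb the $O(\delta)$ errors coming from $\mathcal{L}_{g,\gamma}-\Delta_E$ and from replacing $d\mu_g$ by $d\mu_\gamma$. The only caution is that the cross term $\Gamma_{(m)}$ must indeed be handled spectrally (its natural Cauchy--Schwarz bound pairs $\mathcal{L}_{g,\gamma}^{m-1}\Sigma$ with $\mathcal{L}_{g,\gamma}^{m-1}(g-\gamma)$, so the eigenvalue weights $\lambda^{m-1},\lambda^{m},\lambda^{m-1}$ enter the determinant), as you indicate; a crude global bound $|\Gamma_{(m)}|\leq 6\,\Hnorm{\Sigma}{m-1}\Hnorm{g-\gamma}{m}$ by itself would not give the sharp threshold.
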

The following lemma gives an estimate for $E_k(g - \gamma, \Sigma)$.
Due to convenience we omit the argument $(g - \gamma, \Sigma)$ and write $E_k$ for the energy of the geometric perturbation in the following. 
\begin{lem}
\red{For $\delta$-small data \eqref{eq: smallness assumption} with $\delta$ sufficiently small}, the estimate
\begin{equation} \label{eq: T derivative of E geometry}
\begin{aligned}
 \p_T E_k
 &\leq
 - 2 \alpha E_k
 +
 C E_k^{3/2}
 \\
 &
 \quad
 +
 C E_k^{1/2}
  \left(
    |\tau| \cdot \Hnorm{NS}{k-1}
    +
    |\tau| \cdot \Hnorm{\rho}{k-1}
    +
    |\tau|^3 \Hnorm{\underline{\eta}}{k-1}
    +
  \tau^2  \red{ \Hnorm{Nj}{k-1} }
  \right)
\end{aligned}
\end{equation}
holds for integers $k > 5/2$.
\begin{proof}
The proof is the same as the proof of  Lemma~20 in \cite{AF17}.
\end{proof}
\end{lem}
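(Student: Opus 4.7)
The plan is to differentiate $E_k$ along the flow using the decomposed evolution equations from \eqref{eq: decomposition of g and sigma}, and then to extract the decay rate $-2\alpha$ from the principal terms by combining $\mathcal{E}_{(m)}$ with the correction $c_E\Gamma_{(m)}$. All remaining contributions will be absorbed either into the cubic remainder $E_k^{3/2}$ or into the matter source term on the right-hand side.

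First, for each $1\leq m\leq k$, I would compute $\partial_T\mathcal{E}_{(m)}$ and $\partial_T\Gamma_{(m)}$ by plugging in $\partial_T(g-\gamma)=2N\Sigma+\mathcal{F}_{g-\gamma}$ and $\partial_T(6\Sigma)=-12\Sigma-3N\mathcal{L}_{g,\gamma}(g-\gamma)+6N\tau S-\mathcal{L}_X(6\Sigma)+\mathcal{F}_\Sigma$, commuting $\partial_T$ past $\mathcal{L}_{g,\gamma}^{m-1}$ and past $d\mu_g$. The commutators $[\partial_T,\mathcal{L}_{g,\gamma}^m]$ produce only terms which are quadratic in $(g-\gamma,\Sigma)$ times one derivative of the metric, hence contribute to the $E_k^{3/2}$-remainder once $\delta$-smallness is invoked. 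The same applies to $\partial_T d\mu_g$ and to the shift-transport term $-\mathcal{L}_X(6\Sigma)$, after integration by parts, using the smallness of $X$ from Proposition~\ref{prop: estimates for N and X}.

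The principal (linearized) part of $\partial_T\mathcal{E}_{(m)}$ reads, schematically,
\begin{align*}
\tfrac{d}{dT}\mathcal{E}_{(m)}
&\approx
-12\int\langle\Sigma,\mathcal{L}_{g,\gamma}^{m-1}\Sigma\rangle\,d\mu_g
-6\int\langle\Sigma,\mathcal{L}_{g,\gamma}^m(g-\gamma)\rangle N\,d\mu_g
+9\int\langle 2N\Sigma,\mathcal{L}_{g,\gamma}^m(g-\gamma)\rangle\,d\mu_g,
\end{align*}
whereas the principal part of $\partial_T\Gamma_{(m)}$ yields $-12\int\langle\Sigma,\mathcal{L}_{g,\gamma}^{m-1}(g-\gamma)\rangle d\mu_g$ together with $\int\langle 2N\Sigma,\mathcal{L}_{g,\gamma}^{m-1}\Sigma\rangle d\mu_g$ and a term of the form $-3\int\langle(g-\gamma),N\mathcal{L}_{g,\gamma}^m(g-\gamma)\rangle d\mu_g$. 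After noting $N\approx 3$ up to a small error controlled by the previous lemmas, the cross terms between $\Sigma$ and $\mathcal{L}_{g,\gamma}(g-\gamma)$ cancel in leading order; the remaining diagonal quadratic form is then bounded from above by $-2\alpha E_{(m)}$ by invoking the spectral inequality $9\lambda_0\geq 1$ for the Einstein operator together with the definition \eqref{eq: delta alpha} of $\delta_\alpha$. This is precisely where the choice $c_E=1-\delta_\alpha^2$ enters: it is tuned so that the quadratic form coercively dominates both $\mathcal{E}_{(m)}$ and $c_E\Gamma_{(m)}$.

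Finally, I would collect the non-principal contributions. Those stemming from $\mathcal{F}_{g-\gamma}$ and $\mathcal{F}_\Sigma$ are controlled by the bounds in the previous lemma: the purely geometric pieces are quadratic in $(g-\gamma,\Sigma)$ and, after pairing with the linear factor $E_k^{1/2}$ produced by Cauchy--Schwarz, yield the term $CE_k^{3/2}$; the matter pieces in $\mathcal{F}_{g-\gamma}$ and $\mathcal{F}_\Sigma$ give exactly the source terms $|\tau|\|\rho\|_{H^{k-1}}$, $|\tau|^3\|\underline{\eta}\|_{H^{k-1}}$ and $\tau^2\|Nj\|_{H^{k-1}}$ appearing in \eqref{eq: T derivative of E geometry}. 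The direct contribution $6N\tau S$ in the evolution of $\Sigma$ produces the term $|\tau|\|NS\|_{H^{k-1}}\cdot E_k^{1/2}$. Summing over $1\leq m\leq k$ and using Lemma~\ref{lem: energy of geoemtry} to convert Sobolev norms back to $E_k$ then gives the claimed inequality.

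The main obstacle is the delicate algebraic bookkeeping needed to show that the cross terms between $\Sigma$ and $\mathcal{L}_{g,\gamma}(g-\gamma)$ cancel at the right order: without the correction $c_E\Gamma_{(m)}$, one would only recover a decay rate of $-1$ instead of $-2\alpha$, which would be insufficient once combined with the small $\varepsilon$-growth picked up by the Vlasov and Maxwell energies in the subsequent coupled bootstrap. The fine-tuning of $c_E$ and the use of the spectral gap $9\lambda_0\geq 1$ are therefore the decisive analytic inputs, and the remainder of the argument is a careful but routine application of smallness and Sobolev-algebra estimates, mirroring the proof of Lemma~20 in \cite{AF17}.
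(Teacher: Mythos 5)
Your proposal is correct and follows essentially the same route as the paper, which simply delegates to the corrected-energy argument of Lemma~20 in \cite{AF17}: differentiate $E_k$ using \eqref{eq: decomposition of g and sigma}, absorb commutators, $\p_T d\mu_g$, the shift transport and the $\mathcal{F}$-remainders into $CE_k^{3/2}$ plus the matter source terms, observe the cancellation of the $\langle\Sigma,\mathcal{L}_{g,\gamma}^m(g-\gamma)\rangle$ cross terms in $\p_T\mathcal{E}_{(m)}$, and use $c_E\Gamma_{(m)}$ together with $9\lambda_0\geq 1$ to extract the rate $-2\alpha$. Only note that the coefficients in your schematic display for $\p_T\mathcal{E}_{(m)}$ are not mutually consistent (with the prefactors $18$ and $\tfrac{9}{2}$ and $\p_T\Sigma=-2\Sigma-\tfrac{N}{2}\mathcal{L}_{g,\gamma}(g-\gamma)+\dots$ the cross terms are $-18N+18N=0$, whereas yours give $-6N+18N\neq 0$), but the structural claim you base the argument on is the correct one.
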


\section{Total energy estimate}\label{Sec: total energy estimate}

We will, in the following, fine-tune the different energies to achieve sufficiently fast decay for the perturbation. This is done by introducing a total energy which incorporates the individual energies with some weight functions of time variable $T$. Then, under smallness assumptions on the total energy, the $L^2$-Sobolev energy of the rescaled energy density, and the support of the momentum we acquire an energy estimate for the total energy and in turn, for the individual energies.
\red{
Moreover, note that from now on we assume that the total charge $q$ is bounded by some constant.
}

\subsection{Total energy}

We first define the total energy in terms of the individual energies of the geometric perturbation, $L^2$-Sobolev energy of the distribution function, and the energy of the potential $1$-form with respective weight functions.
\begin{defn}[Total energy]
We define the total energy by
\begin{equation}
 \Etot(g-\gamma, \Sigma, f, \omega)
 :=
 e^{(1+\delta_E)T} E_6(g-\gamma, \Sigma)
 +
 e^{ - ( 1 - \delta_\mcE ) T} \mcE^2_{5,4}(f)
 +
 e^{- (1 - \delta_\EMax) T} \EMax_6 (\omega)
\,,
\end{equation}
where $\delta_E$, $\delta_\mcE$, and $\delta_\EMax$ are positive constants  such that $\delta_E+\delta_\mcE+\delta_\EMax \ll1$ and satisfy
\begin{equation}\label{eq: relations between constants}
\begin{aligned}
\delta_E & < \delta_\mcE 
\,,
\\
\delta_E &= \delta_\EMax - 2 \delta_\alpha
\,,
\end{aligned}
\end{equation}
which imply
\begin{equation}
2 \delta_\alpha  < \delta_\EMax < 2 \delta_\alpha + \delta_\mcE
\,.
\end{equation}
with $\delta_\alpha$ defined in \eqref{eq: delta alpha}.
\end{defn}

For the sake of simplicity we drop the argument $(g-\gamma, \Sigma, f, \omega)$ for the total energy in the following.

\subsection{Estimate for the Faraday tensor}

For the computations below we need an estimate for the Faraday tensor which is given in the following lemma.
\begin{lem}\label{lem: estimates on F}
\red{For $\delta$-small data \eqref{eq: smallness assumption} with $\delta$ sufficiently small} the following estimate holds
\begin{equation}
\Fnorm{F}{5}
\leq
C
e^{ (1 - \delta_E) T/2} \sqrt{ \Etot }
\,.
\end{equation}
\begin{proof}
The estimate follows from \eqref{eq: equivalence of F and E} and \eqref{eq: estimate of E plus Psi}, \red{and using Lemma~\ref{lem: J estimates} to estimate $\Hnorm{\jnull}{4}$ and $\Hnorm{\jvec}{4}$ by the $L^2$-energy of the distribution function}. Then, after estimating the result by the total energy we find 
\begin{equation}
\begin{aligned}
 \Fnorm{F}{5}
 &\leq
 C
 \left(
 \sqrt{ \EMax_6 }
  +
   \Hnorm{\jnull}{4}
   +
   |\tau|  \Hnorm{X}{4} \Hnorm{\jvec}{4}
 \right) 
\\
 &\leq
 C
 \left[
  e^{(1 - \delta_\EMax)T/2} 
  +
  e^{(1 - \delta_\mcE)T/2}
 \right]
 \sqrt{\Etot}
\\
 &\leq 
 C e^{(1 - \delta_E)T/2} 
 \left[
  e^{- \delta_\alpha T} 
  +
  e^{(\delta_E - \delta_\mcE)T/2}
 \right]
\sqrt{ \Etot }
 \,.
\end{aligned} 
\end{equation}
The terms in the last bracket are bounded using the conditions \eqref{eq: relations between constants}.
\end{proof}
\end{lem}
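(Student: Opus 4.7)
The plan is to start from the equivalence \eqref{eq: equivalence of F and E} with $k=5$, which reduces the problem to bounding $\Hnorm{\Psi}{6} + \sqrt{\EMax_6}$. Applying the already-established estimate \eqref{eq: estimate of E plus Psi} at $k=6$ then reduces matters further to controlling the Maxwell energy itself together with the matter-current contributions $\Hnorm{\jnull}{4}$ and $|\tau|\cdot\Hnorm{X}{4}\Hnorm{\jvec}{4}$, each of which must be shown to carry no worse a weight than $e^{(1-\delta_E)T/2}\sqrt{\Etot}$.

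To deal with the Maxwell term, I would read off directly from the definition of the total energy that $\sqrt{\EMax_6} \leq C e^{(1-\delta_\EMax)T/2}\sqrt{\Etot}$. For the current terms, I would invoke Lemma~\ref{lem: J estimates} to replace $\jnull$ and $\jvec$ by $\Hnorm{\rho_{\text{{\tiny{V}}}}}{4}$ and $\Hnorm{j_{\text{{\tiny{V}}}}}{4}$, and then apply Proposition~\ref{Prop: matter estimates} to bound these further by the $L^2$-Sobolev Vlasov energy $\energy{4}{3}$, which in turn is controlled by $e^{(1-\delta_\mcE)T/2}\sqrt{\Etot}$ through the definition of the total energy. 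The factors of $\Hnorm{X}{4}$ coming from the shift-contribution to $\jvec$ and from the $\xhat$-term in \eqref{eq: elliptic of psi} provide additional smallness of order $|\tau|$ via Proposition~\ref{prop: estimates for N and X}, which is more than enough to absorb any adverse weight.

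The final step is to collect all contributions and factor out $e^{(1-\delta_E)T/2}$. What remains is $e^{-\delta_\alpha T} + e^{(\delta_E - \delta_\mcE)T/2}$, and both exponentials stay bounded for $T\in[T_0,0)$ precisely because of the design conditions \eqref{eq: relations between constants}: the first exponent is non-positive since $T \leq 0$ and $\delta_\alpha \geq 0$ (indeed the relation $\delta_E = \delta_\EMax - 2\delta_\alpha$ was introduced exactly so that the Maxwell weight converts into the $e^{(1-\delta_E)T/2}$ weight modulo a decaying factor), and the second is non-positive because $\delta_E < \delta_\mcE$. There is no serious analytical obstacle here — the content of the lemma is bookkeeping — but the conceptual subtlety, and the reason the constants had to be tuned this way in the first place, is that the matter-current source in the elliptic equation for $\Psi$ picks up the mild Vlasov loss $\delta_\mcE$, while the Maxwell energy itself picks up $\delta_\EMax$; the whole hierarchy \eqref{eq: relations between constants} is calibrated so that the resulting aggregate loss for $F$ is exactly $\delta_E$, matching the weight on the geometric energy in $\Etot$ and closing the bootstrap.
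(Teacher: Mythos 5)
Your argument follows the paper's proof essentially verbatim: the equivalence \eqref{eq: equivalence of F and E} at $k=5$, then \eqref{eq: estimate of E plus Psi} at $k=6$, Lemma~\ref{lem: J estimates} (made explicit via Proposition~\ref{Prop: matter estimates}) to pass from the currents to the Vlasov energy, and the weights in $\Etot$ together with \eqref{eq: relations between constants} to absorb the residual bracket $e^{-\delta_\alpha T}+e^{(\delta_E-\delta_\mcE)T/2}$. One small slip in the last step: in this paper the rescaled time satisfies $T\to+\infty$ (since $\tau=\tau_0 e^{-T}\nearrow 0$), not $T\in[T_0,0)$, and the two exponents are non-positive precisely because $T>0$ together with $\delta_\alpha\geq 0$ and $\delta_E<\delta_\mcE$ --- so your conclusion is correct but your stated sign convention for $T$ is reversed.
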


\subsection{Estimate for some geometric objects}

We summarize the estimates of the lapse function and the shift vector together with their time derivatives in the next lemma.
\begin{lem}\label{lem: Gamma and X by energy}
\red{For $\delta$-small data \eqref{eq: smallness assumption} with $\delta$ sufficiently small}, the estimates 
\begin{equation}
\begin{aligned}
\lefteqn{
\Hnorm{X}{k} + \Hnorm{N-3}{k} + \Hnorm{\p_T X}{k} + \Hnorm{\p_T N}{k} + \Hnorm{\Gamma^\ast_\ast}{k-1}
}
&
\\
&\leq
C
\left[
  e^{- (1 + \delta_E) T} \Etot
 +
 e^{ - (3 + \delta_\mcE ) T / 2 }  \sqrt{\Etot} 
 +
 e^{-T} \Hnorm{\rho}{k-1}
\right]
\end{aligned}
\end{equation}
and
\begin{equation}\label{eq: estimate of Gammas}
  \Hnorm{\Gamma^\ast}{k-1}
  \leq
  C
  \left[
  e^{- (1 + \delta_E) T}  \Etot 
   +
  e^{- (3 + \delta_\mcE ) T / 2} \sqrt{\Etot}
  +
  e^{-T} \bmrho_{k-1}
 \right]
\end{equation}
hold for integers $5 \leq k \leq 6$.
\begin{proof} 
We use Propositions~\ref{Prop: matter estimates}, \ref{prop: estimates for N and X}, and \ref{prop: estimates for dTN and dTX}, and  Lemma~\ref{lem: estimates on F} and apply them to \eqref{eq: Gammas} to get the estimates. Then, we estimate the result by the total energy. Now, we recall the constraints on the constants $(\delta_E, \delta_\mcE, \delta_\EMax)$ which lead to the result. 
Note that we do not estimate $\Hnorm{\rho}{k-1}$ in the first estimate. This term could be estimated later either by the $L^2$-energy of the distribution function or by the smallness assumption. On the other hand, in the second inequality we estimated $\Hnorm{\rho}{k-1}$ by $\bmrho_{k-1}$ which plays crucial role in the final estimate.
\end{proof}
\end{lem}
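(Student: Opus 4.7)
The plan is to plug the explicit expressions \eqref{eq: Gammas} for $\Gamma^\ast$ and $\Gamma^\ast_\ast$ into the left-hand side and then control each resulting factor by chaining the elliptic, matter and total-energy estimates already at our disposal. Modulo purely quadratic remainders in the small quantities $N-3$, $X$, $\Sigma$, $DN$, $\p_T N$, $\p_T X$, the formulas simplify to $\Gamma^a_b \approx -N\Sigma^a_b + \tfrac{1}{3}(3-N)\delta^a_b + D_bX^a$ and $\Gamma^a \approx -\p_T X^a - X^a + N D^a N$, so the target estimates follow once one controls $\Hnorm{N-3}{k}$, $\Hnorm{X}{k}$, $\Hnorm{\p_T N}{k}$, $\Hnorm{\p_T X}{k}$ and $\Hnorm{\Sigma}{k-1}$. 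For the first four I would invoke Propositions~\ref{prop: estimates for N and X} and~\ref{prop: estimates for dTN and dTX}, which express them through $\Hnorm{\Sigma}{\cdot}^2$, $\Hnorm{g-\gamma}{\cdot}^2$, the matter quantities $\Hnorm{\rho}{\cdot}$, $\Hnorm{j}{\cdot}$, $\Hnorm{\underline{\eta}}{\cdot}$, $\Hnorm{S}{\cdot}$, and $\Fnorm{F}{\cdot}^2$, $\energy{\cdot}{\cdot}$. Proposition~\ref{Prop: matter estimates} then trades $j$, $\underline{\eta}$ and $S$ for $\Fnorm{F}{\cdot}^2$ plus Vlasov energies (leaving a residual $\Hnorm{\rho}{\cdot}$ from the $S$-decomposition), and Lemmas~\ref{lem: energy of geoemtry} and~\ref{lem: estimates on F} together with the definition of $\Etot$ supply
\[
  \Hnorm{g-\gamma}{6}+\Hnorm{\Sigma}{5}\le Ce^{-(1+\delta_E)T/2}\sqrt{\Etot},\quad
  \mcE_{5,4}(f)\le Ce^{(1-\delta_\mcE)T/2}\sqrt{\Etot},\quad
  \Fnorm{F}{5}\le Ce^{(1-\delta_E)T/2}\sqrt{\Etot}.
\]

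The three-term structure on the right-hand side then emerges by sorting the resulting contributions according to their exponential weight. Pure quadratic-in-energy pieces such as $\Hnorm{\Sigma}{k-1}^2$, $\Hnorm{g-\gamma}{k}^2$ and $\tau^2\Fnorm{F}{k-1}^2$ all collapse into $e^{-(1+\delta_E)T}\Etot$. Contributions that are linear in $\mcE_{5,4}(f)$ and carry a $\tau^2$ prefactor, which is how the Vlasov energy enters $\Hnorm{\p_T N}{k}$, $\Hnorm{X}{k}$ and $\Hnorm{\p_T X}{k}$ through the term $\tau^2\Hnorm{j}{\cdot}$, produce exactly $\tau^2\mcE_{5,4}(f)\le Ce^{-(3+\delta_\mcE)T/2}\sqrt{\Etot}$. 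The other exponentials appearing along the way, e.g.\ $e^{-(5+\delta_\mcE)T/2}$ from $|\tau|^3\energy{k-2}{4}$ or $e^{-(2+\delta_E)T}$ from higher-order matter products, are all strictly dominated by one of these two leading rates thanks to the ordering \eqref{eq: relations between constants}, in particular $\delta_E<\delta_\mcE$ and $\delta_E=\delta_\EMax-2\delta_\alpha$. The only matter contribution that fails to be converted into a power of $\Etot$ in this sorting is the linear $|\tau|\Hnorm{\rho}{\cdot}$, which is therefore kept as the explicit third summand.

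The distinction between the two displayed inequalities is cosmetic: under the smallness hypothesis the $\gamma$-based norm $\Hnorm{\rho}{k-1}$ and the $g$-based norm $\bmrho_{k-1}$ agree up to a factor $1+O(\delta)$, so either form of the bound is admissible. I would use $\bmrho_{k-1}$ in the estimate for $\Gamma^\ast$ because $\bmrho_{k-1}$ is the natural quantity on which the divergence-identity estimate of Proposition~\ref{Prop: bmrho} operates, and this particular form of $\Gamma^\ast$ is precisely the object that feeds back into that estimate when the bootstrap hierarchy is closed in Section~\ref{Sec: total energy estimate}. The main obstacle is organizational rather than analytical: one must follow every term generated by fully unfolding the auxiliary propositions and check that each either admits a rate of at least $e^{-(1+\delta_E)T}$ paired with $\Etot$, or at least $e^{-(3+\delta_\mcE)T/2}$ paired with $\sqrt{\Etot}$, or else reproduces the explicit $|\tau|\Hnorm{\rho}{k-1}$; this verification relies essentially on \eqref{eq: relations between constants} and on ensuring that no hidden source of $\Hnorm{\rho}{\cdot}$ beyond the one already displayed survives the reductions.
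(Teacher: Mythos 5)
Your proposal is correct and follows essentially the same route as the paper: unfold \eqref{eq: Gammas}, feed in Propositions~\ref{Prop: matter estimates}, \ref{prop: estimates for N and X}, \ref{prop: estimates for dTN and dTX} and Lemma~\ref{lem: estimates on F}, convert to the total energy, and sort the resulting terms by their exponential rates using \eqref{eq: relations between constants}, keeping the linear $|\tau|\Hnorm{\rho}{k-1}$ term explicit. Your bookkeeping of the rates (the quadratic pieces collapsing to $e^{-(1+\delta_E)T}\Etot$, the $\tau^2\mcE_{5,4}(f)$ pieces giving $e^{-(3+\delta_\mcE)T/2}\sqrt{\Etot}$) and your remark on $\Hnorm{\rho}{k-1}$ versus $\bmrho_{k-1}$ both match the paper's intent.
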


\subsection{Estimate for $ \EMax_6$}

Combining the results above we find the following estimate for the time derivative of $\EMax_6$.
\begin{lem}\label{lem: T derivative of the EMax}
\red{For $\delta$-small data \eqref{eq: smallness assumption} with $\delta$ sufficiently small} the estimate holds
\begin{equation}\label{eq: T derivative of E Maxwell}
\p_T \EMax_6
\leq 
 C e^{-(\delta_E + \delta_\EMax) T} \Etot
 +
 C
 e^{-(1 + \delta_\EMax + 2 \delta_E ) T / 2} \Etot^{3/2}
 \,.
\end{equation}
\begin{proof}
The estimate follows directly from  Proposition~\ref{prop: dTE of omega} by invoking  Proposition~\ref{Prop: matter estimates},  Lemma~\ref{lem: Gamma and X by energy} and using the smallness assumptions, in particular, that of $\Hnorm{\rho}{4}$.
\end{proof}
\end{lem}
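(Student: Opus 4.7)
My plan is to apply Proposition~\ref{prop: dTE of omega} with $k=6$ and $\mu=4$ and then bound each factor on the right-hand side in terms of $\Etot$. The key equivalences to keep in mind are $\sqrt{E_6}\le e^{-(1+\delta_E)T/2}\sqrt{\Etot}$, $\mcE_{5,4}(f)\le e^{(1-\delta_\mcE)T/2}\sqrt{\Etot}$, $\sqrt{\EMax_6}\le e^{(1-\delta_\EMax)T/2}\sqrt{\Etot}$, together with $|\tau|=|\tau_0|e^{-T}$ and the quadratic-in-$\Etot$ estimates of Lemma~\ref{lem: Gamma and X by energy} for the lapse, the shift and their time derivatives.

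First I would handle the geometric bracket multiplying $\EMax_6$ in Proposition~\ref{prop: dTE of omega}. The lapse pieces $\Hnorm{\pnull N}{6}$ and $\Hnorm{N-3}{6}$ are controlled via Lemma~\ref{lem: Gamma and X by energy}, whose right-hand side $e^{-(1+\delta_E)T}\Etot+e^{-(3+\delta_\mcE)T/2}\sqrt{\Etot}+e^{-T}\Hnorm{\rho}{5}$ supplies the missing decay rate; multiplying by $\EMax_6\le e^{(1-\delta_\EMax)T}\Etot$ and absorbing the extra $\sqrt{\Etot}$ factors through $\sqrt{\Etot}\le C\delta$ and $\Hnorm{\rho}{4}\le \delta$ from \eqref{eq: smallness assumption} produces the first target contribution $Ce^{-(\delta_E+\delta_\EMax)T}\Etot$. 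The linear factors $\Hnorm{\Sigma}{5}$ and $\Hnorm{\mcL_{e_0}\Sigma}{4}$ are estimated by $C\sqrt{E_6}$, the Lie derivative being controlled through the $\Sigma$-evolution in \eqref{eq: rescaled Einstein} combined with Lemma~\ref{lem: Gamma and X by energy}; these give an $\Etot^{3/2}$ contribution whose decay rate fits $Ce^{-(1+\delta_\EMax+2\delta_E)T/2}\Etot^{3/2}$ after the constraints \eqref{eq: relations between constants} are used. The divergence pieces $\Hnorm{\divg\Sigma}{5}$ and $\Hnorm{\mcL_{e_0}\divg\Sigma}{4}$ are replaced via the momentum constraint $D^i\Sigma_{ij}=\tau^2 j_j$ and Proposition~\ref{Prop: matter estimates}, the prefactor $\tau^2$ rendering them subleading.

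Next come the source contributions $\Hnorm{\jnull}{4}\sqrt{\EMax_6}$, $|\tau|\,\Hnorm{\jvec}{4}\sqrt{\EMax_6}$, and $|q|(|\tau|+\Hnorm{\Gamma^\ast_\ast}{4})\,\mcE_{5,4}\sqrt{\EMax_6}$. The first is the decisive one: a direct estimate through the Vlasov energy $\mcE_{5,4}$ would inject the small growth $e^{(1-\delta_\mcE)T/2}$ and break the bootstrap. I therefore route $\jnull$ through Lemma~\ref{lem: J estimates} and use $\Hnorm{\jnull}{4}\le C|q|\Hnorm{\rho}{4}\le C|q|\delta$ from \eqref{eq: smallness assumption}; multiplying by the (small, Lemma~\ref{lem: Gamma and X by energy}-controlled) geometric bracket from Proposition~\ref{prop: dTE of omega} and by $\sqrt{\EMax_6}\le e^{(1-\delta_\EMax)T/2}\sqrt{\Etot}$ lands this piece inside $Ce^{-(\delta_E+\delta_\EMax)T}\Etot$. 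The $\Hnorm{\jvec}{4}$ contribution acquires an extra $|\tau|=|\tau_0|e^{-T}$, which combined with Lemma~\ref{lem: J estimates} and $\sqrt{\EMax_6}$ fits inside the second target term; the explicit Vlasov source $|q|(|\tau|+\Hnorm{\Gamma^\ast_\ast}{4})\,\mcE_{5,4}\sqrt{\EMax_6}$ is bounded similarly, using $|\tau|=|\tau_0|e^{-T}$ or (via Lemma~\ref{lem: Gamma and X by energy}) the decay of $\Hnorm{\Gamma^\ast_\ast}{4}$ to supply the missing exponential.

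The main obstacle throughout is the borderline coupling from $\jnull$, flagged in the remark after Lemma~\ref{lem: J estimates}: any estimate that bounds $\jnull$ directly by the $L^2$-energy of the distribution function inherits its small growth and prevents the bootstrap from closing. This is bypassed by pushing $\jnull$ through the energy density $\rho$, whose smallness is built into the bootstrap ball \eqref{eq: smallness assumption}. Once this routing is in place, the remaining work is pure exponent bookkeeping: collect like powers of $\Etot$, absorb stray $\sqrt{\Etot}$ factors into $\delta$, and invoke \eqref{eq: relations between constants} (in particular $\delta_E<\delta_\mcE$ and $\delta_E=\delta_\EMax-2\delta_\alpha$) to align the two surviving decay rates with those appearing on the right-hand side of \eqref{eq: T derivative of E Maxwell}.
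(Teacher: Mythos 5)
Your proposal follows exactly the route the paper intends: apply Proposition~\ref{prop: dTE of omega} with $k=6$, control the geometric prefactors via Proposition~\ref{Prop: matter estimates} and Lemma~\ref{lem: Gamma and X by energy}, and -- crucially -- estimate the borderline current term $\Hnorm{\jnull}{4}$ through Lemma~\ref{lem: J estimates} and the smallness of $\Hnorm{\rho}{4}$ in \eqref{eq: smallness assumption} rather than through the (slowly growing) Vlasov energy, which is precisely the point the paper's one-line proof emphasizes. The remaining exponent bookkeeping you describe is at the same level of precision as the paper's own statement, so the proposal is correct and essentially identical in approach.
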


\subsection{Estimate for $\bmrho_4$}

For the rescaled energy density we prove the following lemma.
\begin{lem}\label{lem: bmrho estimate}
For $T_0 > 1$ and \red{$\delta$-small data \eqref{eq: smallness assumption} with $\delta$ sufficiently small} the following estimate holds
\begin{align}
 \bmrho_4  \Big|_T
 &\leq
 \left\{
  \bmrho_4 \Big|_{T_0}
 +
 C \int^T_{T_0}  
 \left[
  e^{ -  ( 1 + \delta_\mcE )  s / 2  }
  +
  e^{  - \delta_\EMax  s} \sqrt{  \Etot \Big|_s}
 \right]
 \sqrt{  \Etot \Big|_s} \, ds
\right\} 
\nn
\\ 
&
\qquad \qquad
\times
\exp \left\{
C \int^T_{T_0} 
 \left[
   e^{ - ( 1 +  \delta_E )  s / 2} 
   + 
   e^{ - ( 1 + \delta_\mcE )  s  /2 }
   +
   e^{ - \delta_\EMax \cdot s } \sqrt{  \Etot \Big|_s}
 \right]
 \sqrt{  \Etot \Big|_s} \, ds
 \right\}
 \,.
\end{align}
\begin{proof}
 From  Propositions~\ref{Prop: matter estimates} and ~\ref{Prop: bmrho}, we get under smallness assumptions
\begin{equation*}
 \p_T \bmrho_4 
 \leq
 C
 \left[
   \Hnorm{\Sigma}{3}
   +
   |\tau|  \cdot
    \EMax_3 
   + 
  |\tau| \cdot  \energy{2}{4}
 \right] \bmrho_4
 +
 |\tau|
 \left[
   \energy{5}{3}
   +
   \EMax_6
 \right]
 \,,
\end{equation*}
which after estimating by the total energy and integrating the result gives
\begin{align*}
 \bmrho_4 \Big|_T
 &\leq
 \bmrho_4 \Big|_{T_0}
 +
 C \int_{T_0}^T e^{-s} 
 \left[
  e^{   ( 1-\delta_\mcE )  s / 2  }
  +
  e^{ (1 - \delta_\EMax) s} \sqrt{  \Etot \Big|_s}
 \right]
 \sqrt{  \Etot \Big|_s} \,  ds
 \nn
 \\
 &
 \quad
 +
 C \int_{T_0}^T 
 \left[
   e^{ - ( 1 +  \delta_E )  s / 2} 
   + 
   e^{ - ( 1 + \delta_\mcE ) s /2 }
   +
   e^{ - \delta_\EMax s } \sqrt{  \Etot \Big|_s} 
 \right]
 \sqrt{  \Etot \Big|_s} \, \bmrho_4 \Big|_s \,  ds
 \,.
\end{align*}
Applying Gr\"onwall's inequality finishes the proof.
\end{proof}
\end{lem}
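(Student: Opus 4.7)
My plan is to combine Proposition~\ref{Prop: bmrho} (with $k=4$) with the matter bounds of Proposition~\ref{Prop: matter estimates}, derive a linear differential inequality for $\bmrho_4$ whose coefficient and source are expressed in terms of the total energy $\Etot$, and then integrate via Gr\"onwall's inequality.

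Starting from Proposition~\ref{Prop: bmrho}, the smallness assumption \eqref{eq: smallness assumption} immediately reduces the coefficient of $\bmrho_4$ on the right-hand side to something controlled. The two source contributions $\tau^2\Hnormadj{N(\Sigma_{ab}+\tfrac13 g_{ab})T^{ab}}{4}$ and $|\tau|\Hnormadj{N^{-1}\divg(N^2j)}{4}$ are then expanded using Proposition~\ref{Prop: matter estimates}, producing $\energy{5}{\cdot}$-factors, $\Fnorm{F}{5}^2$-factors, and a harmless $\Hnorm{\rho}{4}$-factor absorbed into the bootstrap. Passing from $\Fnorm{F}{5}^2$ to $\EMax_6$ is done via the equivalence \eqref{eq: equivalence of F and E} together with Lemma~\ref{lem-psi} and the current bound of Lemma~\ref{lem: J estimates} (the latter being the step that keeps the Maxwell energy free of any $\varepsilon$-growth coming from $\Psi$). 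The outcome is a schematic differential inequality of the form
$$
\p_T \bmrho_4 \;\leq\; C\bigl[\Hnorm{\Sigma}{3} + |\tau|\,\EMax_3 + |\tau|\,\energy{2}{4}\bigr]\bmrho_4 \;+\; C|\tau|\bigl[\energy{5}{3} + \EMax_6\bigr].
$$

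The second step is to translate every factor into $\Etot$ using the weights built into its definition and the identity $|\tau|=|\tau_0|e^{-T}$. Since $E_6\leq e^{-(1+\delta_E)T}\Etot$, $\mcE_{5,4}^2(f)\leq e^{(1-\delta_\mcE)T}\Etot$ and $\EMax_6\leq e^{(1-\delta_\EMax)T}\Etot$, one obtains
$$
\Hnorm{\Sigma}{3}\lesssim e^{-(1+\delta_E)T/2}\sqrt{\Etot},\qquad |\tau|\energy{\cdot}{\cdot}\lesssim e^{-(1+\delta_\mcE)T/2}\sqrt{\Etot},\qquad |\tau|\EMax_\cdot\lesssim e^{-\delta_\EMax T}\Etot,
$$
which, after substitution, turns the inequality for $\bmrho_4$ into $\p_T\bmrho_4\leq A(T)\bmrho_4+B(T)$ with integrable coefficients on $[T_0,\infty)$. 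Gr\"onwall's inequality then produces exactly the claimed multiplicative form of the bound, with the integrand of the exterior factor coming from $B$ and the integrand of the exponential coming from $A$.

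The main bookkeeping obstacle is the Maxwell contribution $|\tau|\EMax_\cdot$: it decays only like $e^{-\delta_\EMax T}\Etot$ (no square root), so it must be grouped as $e^{-\delta_\EMax s}\sqrt{\Etot}\cdot\sqrt{\Etot}$ inside both the exterior integrand and the Gr\"onwall exponent, matching the structure of the stated estimate. This is the borderline behaviour highlighted in the introduction; its integrability along the bootstrap will be guaranteed in the next step of the argument by the constraints $\delta_\EMax>2\delta_\alpha>0$ imposed in \eqref{eq: relations between constants} together with a uniform bound on $\Etot$. No input beyond the previous lemmas and the smallness hypotheses is needed at this stage, so the proof closes once the above differential inequality is established and integrated.
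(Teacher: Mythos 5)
Your proposal is correct and follows essentially the same route as the paper: you derive the identical differential inequality $\p_T\bmrho_4\leq C[\Hnorm{\Sigma}{3}+|\tau|\EMax_3+|\tau|\mcE^2_{2,4}(f)]\bmrho_4+C|\tau|[\mcE^2_{5,3}(f)+\EMax_6]$ from Propositions~\ref{Prop: matter estimates} and~\ref{Prop: bmrho}, convert each factor to $\Etot$ with the same exponential weights, and close with Gr\"onwall. The bookkeeping of the Maxwell term as $e^{-\delta_\EMax s}\sqrt{\Etot}\cdot\sqrt{\Etot}$ matches the stated estimate exactly.
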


\subsection{Estimate for $\mcG$}

For the 	support of the momentum we get the following estimate.
\begin{lem}\label{lem: esitmate of G with Etot}
For $T_0 > 1$ and \red{$\delta$-small data \eqref{eq: smallness assumption} with $\delta$ sufficiently small} the following estimate holds
\begin{align}
 \mcG \Big|_T
 &\leq
 \left\{
 \mcG \Big|_{T_0}
 +
 C \int^T_{T_0} 
 \left[
  e^{- \delta_E s} \, \Etot\Big|_s
  +
\red{
  e^{ (1 - \delta_E) s/2} \sqrt{\Etot\Big|_s}
  }
  +
  \bmrho_4(s)
 \right] ds
 \right\}
\nn
\\
 &
 \qquad 
 \times
 \exp
 \left\{
   C \int^T_{T_0} 
 \left[
 1
 +
   e^{- (1 + \delta_E) s/2} \sqrt{\Etot \Big|_s}
\right]  
e^{- (1 + \red{\delta_E} ) s / 2} \sqrt{\Etot \Big|_s} \,  ds
 \right\}
 \,.
\end{align}
\begin{proof}
 The proof follows directly from  Proposition~\ref{Prop: curly g} by using the estimates in Lemma~\ref{lem: Gamma and X by energy}.
\end{proof}
\end{lem}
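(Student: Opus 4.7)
The plan is to apply Proposition~\ref{Prop: curly g} directly and bound the integrands appearing in its bracketed factor and its exponent in terms of $\Etot$ and $\bmrho_4$, using the bounds already collected in Lemmas~\ref{lem: Gamma and X by energy} and~\ref{lem: estimates on F}. Since the Gr\"onwall argument has already been carried out in the proof of Proposition~\ref{Prop: curly g}, the only remaining work is a careful bookkeeping of weight functions.

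First I would handle the bracketed factor, whose integrand is $e^s \Hnorm{\Gamma^\ast}{2} + \Fnorm{F}{2}$. The estimate \eqref{eq: estimate of Gammas} applied with $k=5$, together with the trivial inclusion $\Hnorm{\Gamma^\ast}{2} \le \Hnorm{\Gamma^\ast}{4}$, yields
\[
e^s \Hnorm{\Gamma^\ast}{2} \leq C\bigl( e^{-\delta_E s} \Etot + e^{-(1+\delta_\mcE)s/2} \sqrt{\Etot} + \bmrho_4 \bigr),
\]
while Lemma~\ref{lem: estimates on F} yields $\Fnorm{F}{2} \le \Fnorm{F}{5} \le C e^{(1-\delta_E)s/2} \sqrt{\Etot}$. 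Because $\delta_E, \delta_\mcE$ are small positive, the intermediate $\sqrt{\Etot}$-term above is absorbed into the $\Fnorm{F}{2}$ contribution, producing precisely the three terms inside the braces of the claim.

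For the exponent integrand $\Hnorm{\Sigma}{2} + \Hnorm{N-3}{2} + \Hnorm{X}{2} + \Hnorm{\Gamma^\ast_\ast}{2} + |q|e^{-s}\Fnorm{F}{2}$, I would bound $\Hnorm{\Sigma}{2}$ using Lemma~\ref{lem: energy of geoemtry} and the definition of $\Etot$, obtaining $\Hnorm{\Sigma}{2} \leq C e^{-(1+\delta_E)s/2}\sqrt{\Etot}$; this is the source of the unit summand inside the bracketed prefactor of the exponent. The three lapse/shift/$\Gamma^\ast_\ast$-norms are controlled by Lemma~\ref{lem: Gamma and X by energy} via $C\bigl[e^{-(1+\delta_E)s}\Etot + e^{-(3+\delta_\mcE)s/2}\sqrt{\Etot} + e^{-s}\Hnorm{\rho}{4}\bigr]$; the last summand is absorbed using the smallness assumption on $\Hnorm{\rho}{4}$, and the middle one is dominated by the bound on $\Hnorm{\Sigma}{2}$ since $-(3+\delta_\mcE)/2 \le -(1+\delta_E)/2$. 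Finally $|q|e^{-s}\Fnorm{F}{2} \le C e^{-(1+\delta_E)s/2}\sqrt{\Etot}$ by Lemma~\ref{lem: estimates on F}. Summing these contributions gives the bound $\bigl(1 + e^{-(1+\delta_E)s/2}\sqrt{\Etot}\bigr) e^{-(1+\delta_E)s/2}\sqrt{\Etot}$ in the exponent.

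The main obstacle is purely bookkeeping: one must invoke the ordering relations \eqref{eq: relations between constants} among the small parameters $\delta_E, \delta_\mcE, \delta_\EMax, \delta_\alpha$ to ensure that the weights written on the right-hand side of the statement are the dominant ones among all contributions being absorbed, and that no spurious growth factor survives. Once that is verified, substitution of the two estimates above into Proposition~\ref{Prop: curly g} yields the claim.
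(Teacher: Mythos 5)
Your proposal is correct and follows exactly the paper's (one-line) proof: substitute the bounds of Lemma~\ref{lem: Gamma and X by energy}, Lemma~\ref{lem: estimates on F}, and Lemma~\ref{lem: energy of geoemtry} into the Gr\"onwall estimate of Proposition~\ref{Prop: curly g} and track the exponential weights using \eqref{eq: relations between constants}. The only point worth stating more precisely is the disposal of the $e^{-s}\Hnorm{\rho}{4}$ contribution in the exponent: it is not pointwise dominated by $e^{-(1+\delta_E)s/2}\sqrt{\Etot}$, but since $\int_{T_0}^{\infty}\delta e^{-s}\,ds\leq\delta$ it only contributes a harmless multiplicative constant to the exponential factor, which is absorbed as usual.
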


\subsection{Estimate of the total energy}

We introduce a positive constant $\overline{C}$ which bounds all previous constants $C$ via
\begin{equation}
A C^3 \leq \overline{C}
\,,
\end{equation}
where $A > 1 $ is chosen suitably.

In this section we refer to the conditions \eqref{eq: assumption on tau G}--\eqref{eq: assumption on EMax6} below as bootstrap assumptions.
\begin{prop}\label{Prop: total energy}
\red{For $\delta$-small data \eqref{eq: smallness assumption} with $\delta$ sufficiently small} and the conditions
\begin{align}
 \overline{C} |\tau| \mcG 
 &\leq 
 \red{
 \epsilontot/6
 }
 \,, \label{eq: assumption on tau G}
 \\
 \overline{C} \bmrho_4 
 &\leq 
 \red{
 \epsilontot/6
 }
 \,, \label{eq: assumption on bmrho}
 \\
 \overline{C} \sqrt{\EMax_6}
 &\leq
  \red{
 \epsilontot/6
 }
 \,, \label{eq: assumption on EMax6}
\end{align}
and 
\begin{equation}\label{eq: assumption on the exp functions}
 \overline{C} e^{  (\delta_E - \delta_\mcE ) T }
 \leq
 \epsilontot / 4
 \,,
\end{equation}
for a positive small constant $\epsilontot$ and $T$ sufficiently large, there exists a  constant $0 < \bm{\epsilon} \ll 1$ such that 
\begin{equation}
 \p_T \Etot 
 \leq
 - 
 \left(
  1 - \bm{\epsilon}
 \right)
 \Etot  
 +
 \overline{C} \Etot^{3/2}  
 \,.
\end{equation}
\begin{proof}
We differentiate the total energy with respect to $T$, and use  Lemma~\ref{lem: energy of geoemtry},  Proposition~\ref{prop: T derivative of L2-energy of f}, and  Lemma~\ref{lem: T derivative of the EMax} for the energy of the perturbation of the geometry, the $L^2$-Sobolev energy of the distribution function, and the energy of the gauged vector potential, respectively. Then, we find
\begin{equation}\label{eq: T derivative of the total Energy 1}
\begin{aligned}
 \p_T\Etot
 &\leq
 - 
 (2\alpha - 1 - \delta_E) e^{(1 + \delta_E) T} E_6
 -
( 1-\delta_\mcE ) e^{ - ( 1-\delta_\mcE ) T} \mcE^2_{5, 4}
 -
 (1 - \delta_\EMax) e^{ - (1 - \delta_\EMax) T} \EMax_6 
 \\
 &
 \quad
 +
 \overline{C} 
 \left[
  e^{(1 + \delta_E) T} E_6
 \right]^{1/2}
  e^{  (\delta_E - \delta_\mcE ) T }
  \left[ 
  e^{ - ( 1-\delta_\mcE ) T / 2}
  \energy{5}{4}
  \right]
 \\
 &
 \quad
 +
 \overline{C} e^{- T } \Etot
 \\
 &
 \quad
 +
 \overline{C}
 \left(
  \bmrho_4
  +
  |\tau| \mcG
  +
  \red{
  \sqrt{\EMax_6}
  }
 \right)
 e^{-  ( 1-\delta_\mcE )  T} \mcE^2_{5, 4}
 \\
 &
 \quad
 +
 \overline{C}   \Etot^{3/2}
 \,,
\end{aligned}
\end{equation}
where the first line comes from the $T$-derivatives of the exponentional functions which appear as factors of individual energies in the total energy and the first term of \eqref{eq: T derivative of E geometry}; the second line results from the second term in the bracket in the last line of \eqref{eq: T derivative of E geometry}; the third line results from the last term in \eqref{eq: T derivative of E Maxwell}; the fourth line comes from the terms in \eqref{eq: T derivative of L2-energy of f} which have less decay rates than the other ones.  In particular, the first term in the bracket comes from the estimate of $|\tau|^{-1} \Hnorm{N^{-1} \Gamma^\ast}{5}$ and the last term in the bracket comes from $\Fnorm{F}{5}$ using \eqref{eq: estimate of E plus Psi} and the estimates of  Lemma~\ref{lem: J estimates} which gives $\bmrho_4 + \sqrt{\EMax_6}$ and a higher order term which is absorbed in the fifth line.

Finally, the fifth line results from all other terms which are of higher-order in total energy, in particular the last term in the bracket in \eqref{eq: T derivative of L2-energy of f}.

The second, the third, and the fourth lines of \eqref{eq: T derivative of the total Energy 1} can be estimated using the conditions \eqref{eq: assumption on tau G}--\eqref{eq: assumption on the exp functions}. Hence, in total we have
\begin{align}
 \p_T\Etot
 &\leq
 - 
 (2\alpha - 1 - \delta_E) e^{(1 + \delta_E)} E_6
 -
  ( 1-\delta_\mcE )  e^{ -  ( 1-\delta_\mcE )  T} \mcE^2_{5, 4}
 -
 (1 - \delta_\EMax) e^{ - (1 - \delta_\EMax) T} \EMax_6 
 \nn
 \\
 &
 \quad
 +
 \frac{1}{2} \epsilontot \Etot
 +
 \frac{1}{2} \epsilontot  e^{-  ( 1-\delta_\mcE ) T} \mcE^2_{5, 4}
 +
 \overline{C}   \Etot^{3/2}
 \,.
\end{align}
Introducing a decay inducing positive constant $\sigma$, we rewrite the previous inequality as
\begin{equation}
\begin{aligned}
\p_T\Etot
 &\leq
 - (\sigma - \tfrac{1}{2} \epsilontot) \Etot
 + 
 ( \sigma - 2\alpha + 1 + \delta_E  ) e^{(1 + \delta_E)} E_6
 \\
 &
 \quad
 +
 ( \sigma - 1 + \delta_\mcE + \tfrac{1}{2} \epsilontot ) e^{ -  ( 1-\delta_\mcE ) T} \mcE^2_{5, 4}
 +
 ( \sigma - 1 + \delta_\EMax ) e^{ - (1 - \delta_\EMax) T} \EMax_6 
\\
&
\quad
 +
 \overline{C}   \Etot^{3/2}
 \,.
\end{aligned}
\end{equation}
Now, we choose $\sigma$ such that
$$
 \sigma = 1 -  \delta_\mcE  - \tfrac{1}{2} \epsilontot
 \,.
$$
Moreover, we require that
\begin{equation}\label{inequprev}
 4 \delta_\alpha \leq \epsilontot
 \,,
\end{equation}
which can always be achieved since $\delta_\alpha$ can be made sufficiently small such that \eqref{inequprev} holds, and in the case that $\lambda_0 = 1/9$ the inequality is trivial. These together with the condition \eqref{eq: relations between constants} imply
\begin{equation}
 \p_T\Etot
 \leq
 - ( 1 - \bm{\epsilon} ) \Etot 
 +
 \overline{C}   \Etot^{3/2}
 \,,
\end{equation}
with
\begin{equation}
 \bm{\epsilon} := \delta_\mcE + \epsilontot
 \,.
\end{equation}
\end{proof}
\end{prop}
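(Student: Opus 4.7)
The plan is to compute $\p_T \Etot$ by differentiating each of the three weighted contributions, apply the individual energy estimates already established, and use the bootstrap assumptions \eqref{eq: assumption on tau G}--\eqref{eq: assumption on the exp functions} to absorb the various cross terms into either the diagonal decay or the higher-order remainder $\overline{C}\Etot^{3/2}$. First I would apply $\p_T$ to each exponentially weighted energy: this produces two contributions per term, namely the derivative of the exponential prefactor (a linear multiple of the underlying energy) and the derivative of the energy itself, to which I would apply Lemma~\ref{lem: energy of geoemtry} (giving the $-2\alpha E_6$ decay), Proposition~\ref{prop: T derivative of L2-energy of f}, and Lemma~\ref{lem: T derivative of the EMax}, respectively.

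Collecting the diagonal decay contributions yields the coefficients $-(2\alpha-1-\delta_E)$ on $e^{(1+\delta_E)T}E_6$, $-(1-\delta_\mcE)$ on $e^{-(1-\delta_\mcE)T}\mcE^2_{5,4}$, and $-(1-\delta_\EMax)$ on $e^{-(1-\delta_\EMax)T}\EMax_6$. All three are negative thanks to $\alpha = 1-\delta_\alpha$, the relation $\delta_E = \delta_\EMax - 2\delta_\alpha$, and the smallness of $\delta_\alpha,\delta_E,\delta_\mcE,\delta_\EMax$. The remaining step is to estimate the off-diagonal source terms arising in each energy equation. The borderline cross term arises from the $|q|\Fnorm{F}{5}\cdot\mcE^2_{5,4}$ contribution in \eqref{eq: T derivative of L2-energy of f}: invoking Lemma~\ref{lem: estimates on F} to write $\Fnorm{F}{5}\lesssim e^{(1-\delta_E)T/2}\sqrt{\Etot}$, this term, after weighting by $e^{-(1-\delta_\mcE)T}$, becomes of the form $e^{(\delta_E-\delta_\mcE)T}\sqrt{\Etot}\cdot e^{-(1-\delta_\mcE)T}\mcE^2_{5,4}$, which by \eqref{eq: assumption on the exp functions} is bounded by $\tfrac{1}{2}\epsilontot\cdot e^{-(1-\delta_\mcE)T}\mcE^2_{5,4}$. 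The remaining slowly-decaying contributions, which involve the factors $|\tau|\mcG$, $\bmrho_4$, and $\sqrt{\EMax_6}$ in front of the Vlasov energy, are each controlled by $\tfrac{1}{6}\epsilontot$ via the bootstrap assumptions \eqref{eq: assumption on tau G}--\eqref{eq: assumption on EMax6}, and the cubic contributions from Lemma~\ref{lem: T derivative of the EMax} are absorbed directly into $\overline{C}\Etot^{3/2}$.

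After these manipulations I would rewrite the resulting inequality by introducing a decay-inducing constant $\sigma$ and regrouping:
\begin{equation*}
\p_T\Etot \leq -(\sigma - \tfrac{1}{2}\epsilontot)\Etot + (\sigma-2\alpha+1+\delta_E)\,e^{(1+\delta_E)T}E_6 + (\sigma-1+\delta_\mcE+\tfrac{1}{2}\epsilontot)\,e^{-(1-\delta_\mcE)T}\mcE^2_{5,4} + (\sigma-1+\delta_\EMax)\,e^{-(1-\delta_\EMax)T}\EMax_6 + \overline{C}\Etot^{3/2}.
\end{equation*}
Choosing $\sigma := 1-\delta_\mcE-\tfrac{1}{2}\epsilontot$ and requiring $4\delta_\alpha\leq \epsilontot$ (trivial when $\lambda_0>1/9$ and otherwise achievable by making $\delta_\alpha$ small) makes all three bracketed correction coefficients non-positive in view of the ordering $\delta_E<\delta_\mcE$ and $\delta_E = \delta_\EMax - 2\delta_\alpha$. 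This yields the claim with $\bm{\epsilon}:= \delta_\mcE + \epsilontot$.

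The main obstacle is the borderline cross term in (ii) above: the Maxwell energy $\EMax_6$ controls a factor that multiplies the Vlasov energy, and the Vlasov energy in turn is allowed a small $\varepsilon$-growth; if this growth fed back into $\EMax_6$ it would destroy the bootstrap. What makes the argument close is precisely the hierarchy of the estimates developed in Lemmas~\ref{lem: J estimates} and~\ref{lem: estimates on F}, which reroute the Maxwell source through the uniformly bounded energy density $\bmrho_4$ rather than through $\mcE_{5,4}$, and the sharp balance between the constants $\delta_E,\delta_\mcE,\delta_\EMax,\delta_\alpha$ encoded in \eqref{eq: relations between constants}.
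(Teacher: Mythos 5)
Your overall architecture (differentiate the three weighted energies, collect the diagonal decay, absorb the off-diagonal terms via the bootstrap assumptions, then optimize with $\sigma$ and conclude with $\bm{\epsilon}=\delta_\mcE+\epsilontot$) matches the paper's proof, and the final bookkeeping with $\sigma=1-\delta_\mcE-\tfrac12\epsilontot$ and $4\delta_\alpha\le\epsilontot$ is correct. However, your treatment of the borderline Maxwell--Vlasov coupling term is not. If you bound $\Fnorm{F}{5}\lesssim e^{(1-\delta_E)T/2}\sqrt{\Etot}$ via Lemma~\ref{lem: estimates on F}, then the term $|q|\Fnorm{F}{5}\cdot e^{-(1-\delta_\mcE)T}\mcE^2_{5,4}$ is bounded only by $Ce^{(1-\delta_E)T/2}\,\Etot^{3/2}$, with an exponentially \emph{growing} prefactor; the claimed reduction to $e^{(\delta_E-\delta_\mcE)T}\sqrt{\Etot}\cdot e^{-(1-\delta_\mcE)T}\mcE^2_{5,4}$ does not follow from any redistribution of the exponentials, and \eqref{eq: assumption on the exp functions} cannot rescue it. The correct route --- which you describe in your closing paragraph but do not actually execute in the computation --- is to bound $\Fnorm{F}{5}$ \emph{uniformly} via \eqref{eq: equivalence of F and E}, \eqref{eq: estimate of E plus Psi} and Lemma~\ref{lem: J estimates} by $C(\sqrt{\EMax_6}+\bmrho_4)$ plus higher-order terms, and then invoke the bootstrap assumptions \eqref{eq: assumption on bmrho} and \eqref{eq: assumption on EMax6} to get the factor $\epsilontot$ in front of $e^{-(1-\delta_\mcE)T}\mcE^2_{5,4}$; this is exactly the fourth line of \eqref{eq: T derivative of the total Energy 1}.

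Relatedly, you never treat the source terms of the geometry energy estimate \eqref{eq: T derivative of E geometry}, in particular $C E_6^{1/2}\,|\tau|\,\Hnorm{\rho}{5}$. After estimating $\Hnorm{\rho}{5}$ by the Vlasov energy and inserting the weights, this produces the mixed term
\begin{equation*}
\overline{C}\left[e^{(1+\delta_E)T}E_6\right]^{1/2} e^{(\delta_E-\delta_\mcE)T}\left[e^{-(1-\delta_\mcE)T/2}\energy{5}{4}\right],
\end{equation*}
which is the term for which hypothesis \eqref{eq: assumption on the exp functions} is actually needed (together with $\delta_E<\delta_\mcE$ so that the prefactor decays). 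By assigning \eqref{eq: assumption on the exp functions} to the wrong term you both misuse that hypothesis and leave this genuinely borderline contribution unaccounted for. With these two corrections your argument coincides with the paper's.
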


\section{Global existence}\label{Sec: global existence}
We provide the complete global existence and stability argument based on the foregoing sections in the following.

\subsection{Launching a solution from small initial data}
We consider a not necessarily CMC initial data set for the EVMS close to initial data of the Milne spacetime. According the local existence theory in \cite{BCB73} there exists a local-in-time solution close to the Milne spacetime on a short time interval. As shown in \cite{FK15} then there exists a CMC surface in this spacetime, this is independent of the matter variables and hence applies similarly in the present setting. We then consider initial data $(g_0,\Pi_0,f_0,\omega_0,\dot\omega_0)$ on this CMC surface, which by continuity can be made arbitrarily small. We consider the unique solution launched from this initial data set in the following.

\subsection{Guaranteeing the smallness conditions \eqref{eq: assumption on tau G}--\eqref{eq: assumption on the exp functions} on an open interval }

Assuming that the solution exists up to $T_0$ and smallness at $T_0$ so that \eqref{eq: assumption on the exp functions} holds,
we choose new initial data at $T_0$ such that
\begin{equation}
 \Etot\Big|_{T_0}
 +
 \bmrho_4 \Big|_{T_0}
 +
 \mcG \Big|_{T_0}
 \leq
 \varepsilon_0
 \,,
\end{equation}
where $0 < \varepsilon_0 < C^{-1} \delta^2$ with $C$ being the maximum of all constants $C$ appeared so far.  Once $\varepsilon_0$ is chosen sufficiently small, one can decrease $\varepsilon_0$ further. The same is true for increasing $T_0$. The reason is that all estimates 	are uniform in the sense that they do not depend on the smallness of initial data.
By choosing $\varepsilon_0$ sufficiently small we make sure that conditions \eqref{eq: assumption on tau G} and \eqref{eq: assumption on bmrho} hold at $T_0$ and the condition \eqref{eq: assumption on the exp functions} holds at large $T$ automatically.

We define 
\begin{equation*}
\begin{aligned}
 T^\ast
 &:=
 \sup 
 \Big\{
  T > T_0 \Big| \text{the solution exists, is $\delta$-small,} 
  \\
 & \qquad\quad \qquad\qquad
 \text{and conditions   \eqref{eq: assumption on tau G} and \eqref{eq: assumption on bmrho} hold on $[T_0, T)$}
 \Big\}
 \,.
\end{aligned}
\end{equation*}

\subsection{Improving the bootstrap assumptions}
In the following we prove that if $\varepsilon_0$ is sufficiently small then $T^\ast = \infty$. This, on the one hand, proves the global existence and therefore the first part of Theorem~\ref{Theorem}, and on the other hand improves the bootstrap assumptions \eqref{eq: assumption on tau G} and \eqref{eq: assumption on bmrho} on $(T_0, T^\ast)$.

\begin{prop}
For any solution with initial data with sufficiently small $\varepsilon_0 > 0$ the conditions  \eqref{eq: assumption on bmrho} and \eqref{eq: assumption on tau G} hold on $(T_0, T^\ast )$. Moreover, we have $T^\ast = \infty$.

\begin{proof}
By the construction in the previous subsection, Proposition~\ref{Prop: total energy} holds on $(T_0, T^\ast)$ and implies
\begin{equation}
\frac{d \sqrt{\Etot }}{dT}
\leq 
 - \frac{1}{2} (1 - \bm{\epsilon}) \sqrt{\Etot}
 +
 \overline{C} \Etot
 \,.
\end{equation}
Assuming that
$$ 
\varepsilon_0 < \frac{1 - \bm{\epsilon}}{4 \overline{C}}
\,.  
$$
we find
\begin{equation}\label{eq: etot}
\begin{aligned}
 \sqrt{\Etot \Big|_T } 
 & \leq 
  \frac{1 - \bm{\epsilon} }{2 \overline{C} + \exp \left[ \tfrac{1}{2} (1 - \bm{\epsilon} ) (T - T_0) \right]
  \left[
    (1 - \bm{\epsilon} ) \Etot^{-1/2} \Big|_{T_0}
   - 
   2 \overline{C}
  \right] }
  \\
  &\leq
  (1 - \bm{\epsilon} )   \sqrt{\Etot \Big|_{T_0} } \,  e^{ - (1 - \bm{\epsilon} )(T - T_0) / 2 }
  \,.
\end{aligned}
\end{equation}

Combining this estimate with  Lemma~\ref{lem: bmrho estimate} yields
\begin{equation}
 \bmrho_4 \Big|_T
 \leq
 \left(
  \bmrho_4 \Big|_{T_0}
  +
  C \sqrt{\Etot \Big|_{T_0}} 
 \right)
 \exp 
  \left(
   C \sqrt{\Etot \Big|_{T_0}}
  \right)
 \leq  
 \left(
  \varepsilon_0 + C \sqrt{\varepsilon_0}
 \right)
  e^{C \varepsilon_0}
  \,.
\end{equation}
This implies \eqref{eq: assumption on bmrho} with a strict inequality by choosing $\varepsilon_0$ small enough.

Consequently, from the previous result, Lemmas~\ref{lem: Gamma and X by energy} and \ref{lem: esitmate of G with Etot},  and estimate of $\sqrt{\Etot\Big|_{T}}$ it follows
\begin{equation*}
\begin{aligned} 
 \mcG\Big|_T
 &\leq
 \left\{
  \mcG\Big|_{T_0}
  +
  C \int_{T_0}^T
  \left[
  e^{-  \delta_E s}  \Etot\Big|_s 
   +
  e^{- (1 + \delta_\mcE ) s / 2} \sqrt{\Etot\Big|_s}
  +
   \bmrho_4\Big|_s
 \right]
 ds
 \right\} 
 \\
 & \qquad \times
 \exp 
  \left\{
   C \int_{T_0}^T
   \left[
    e^{-  \delta_E s}  \Etot\Big|_s 
    +
    e^{- (1 + \delta_\mcE ) s / 2} \sqrt{\Etot\Big|_s} 
  \right]
  ds
  \right\}
  \\
 &\leq
 C \left[
  \varepsilon_0 
  +
   \sqrt{\varepsilon_0} (T - T_0)
  +
  \red{
  \sqrt{\varepsilon_0} \, e^{(\bm{\epsilon} - \delta_E)T/2}
  }
 \right]
 e^{C \sqrt{\varepsilon_0}}
 \,.
\end{aligned} 
\end{equation*}
Hence,
\begin{equation}
 \overline{C}|\tau| \mcG
 \leq
 \overline{C} C \sqrt{\varepsilon_0}
 \left[
    (T - T_0)
  +
  \red{
   e^{(\bm{\epsilon} - \delta_E)T/2}
  }
 \right]
  e^{-T}
 \,,
\end{equation}
which by choosing $\varepsilon_0$ small enough and for $T$ sufficiently large satisfies \eqref{eq: assumption on tau G} with a strict inequality.

Finally, we need to show that condition \eqref{eq: assumption on EMax6} holds with a strict inequality. Indeed, from  Lemma~\ref{lem: T derivative of the EMax} after inserting the estimate of $\Etot$ and integration on $(T_0, T^\ast)$ we have
\begin{equation}
 \EMax_6 \leq C \varepsilon_0
 \,,
\end{equation}
which again by choosing $\varepsilon_0$ sufficiently small satisfies the condition \eqref{eq: assumption on EMax6} with a strict inequality. 

Therefore,  conditions \eqref{eq: assumption on tau G}--\eqref{eq: assumption on EMax6} hold with strict inequality on $(T_0, T^\ast)$ by choosing $\varepsilon_0$ sufficiently small, and 
\begin{equation}
 \Etot\Big|_T 
 \leq
 (1 - \bm{\epsilon})^2 \, \Etot\Big|_{T_0} 
 \exp
  \left[
   - (1 - \bm{\epsilon}) \left( T - T_0 \right)
  \right]
  \,,
\end{equation}
on $(T_0, T^\ast)$. In particular, by the continuation criterion the solution can be extended to $ (T_0, T_0 + \epsilon) $ for a small positive constant $\epsilon$. The conditions \eqref{eq: assumption on tau G} and \eqref{eq: assumption on bmrho} hold on the extended interval by construction. Finally, a standard continuity argument shows that $T^\ast = \infty$. 
\end{proof}
\end{prop}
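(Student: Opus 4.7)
The plan is to show that once we have the differential inequality of Proposition~\ref{Prop: total energy} available on $(T_0, T^\ast)$, it can be integrated to yield exponential decay of $\Etot$, and then to feed this decay back into the auxiliary estimates of Lemmas~\ref{lem: bmrho estimate}, \ref{lem: esitmate of G with Etot}, and \ref{lem: T derivative of the EMax} to strictly improve each bootstrap condition \eqref{eq: assumption on tau G}--\eqref{eq: assumption on EMax6}. Once every condition is improved, a standard open-closed argument combined with the continuation criterion will force $T^\ast = \infty$.

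First I would rewrite the differential inequality from Proposition~\ref{Prop: total energy} in terms of $\sqrt{\Etot}$, obtaining $\partial_T \sqrt{\Etot} \leq -\tfrac{1}{2}(1-\bm{\epsilon}) \sqrt{\Etot} + \tfrac{1}{2}\overline{C}\, \Etot$. Provided $\varepsilon_0$ is small enough that the data satisfies $\sqrt{\Etot|_{T_0}} \leq \varepsilon_0 < (1-\bm{\epsilon})/(4\overline{C})$, the nonlinear term remains a fraction of the linear damping term, so an ODE comparison argument (or equivalently solving the Bernoulli-type ODE explicitly as in \eqref{eq: etot}) yields the exponential decay
\begin{equation*}
\sqrt{\Etot|_T} \leq (1-\bm{\epsilon})\sqrt{\Etot|_{T_0}}\, e^{-(1-\bm{\epsilon})(T-T_0)/2}.
\end{equation*}
This bound is valid on all of $(T_0, T^\ast)$ by definition of $T^\ast$.

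Next I would plug this decay into Lemma~\ref{lem: bmrho estimate}. Since the integrands decay exponentially and are integrable on $(T_0, \infty)$, the resulting bound is of the form $\bmrho_4|_T \leq (\varepsilon_0 + C\sqrt{\varepsilon_0}) e^{C\varepsilon_0}$, which can be made strictly smaller than $\epsilontot/(6\overline{C})$ by shrinking $\varepsilon_0$, improving \eqref{eq: assumption on bmrho}. For Lemma~\ref{lem: esitmate of G with Etot}, the same decay plus the already-improved bound on $\bmrho_4$ controls $\mcG|_T$ by a polynomial in $T$ times $\sqrt{\varepsilon_0}$; multiplication by $|\tau| = 3 e^{-T}$ crushes this polynomial factor, so for $T$ large enough and $\varepsilon_0$ small enough $\overline{C}|\tau|\mcG < \epsilontot/6$, improving \eqref{eq: assumption on tau G}. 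Finally, integrating Lemma~\ref{lem: T derivative of the EMax} over $(T_0, T^\ast)$ against the exponential decay of $\Etot$ gives $\EMax_6 \leq C\varepsilon_0$, which improves \eqref{eq: assumption on EMax6} for small $\varepsilon_0$.

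With every bootstrap condition strictly improved on $(T_0, T^\ast)$, the continuation criterion from the local existence proposition allows the solution to be extended slightly past $T^\ast$ while remaining $\delta$-small and satisfying conditions \eqref{eq: assumption on tau G} and \eqref{eq: assumption on bmrho}. This contradicts the supremum definition unless $T^\ast = \infty$. The principal obstacle is ordering the improvements correctly: the decay of $\Etot$ presupposes the bootstrap hypotheses through Proposition~\ref{Prop: total energy}, while the improvements of \eqref{eq: assumption on tau G} and \eqref{eq: assumption on bmrho} in turn use the decay of $\Etot$. This circularity is harmless precisely because the bootstrap hypotheses enter only to validate Proposition~\ref{Prop: total energy}, whereas the improved bounds are obtained with strict inequality for $\varepsilon_0$ sufficiently small and $T_0$ sufficiently large; the smallness parameters can be chosen once at the outset using the uniformity of all constants $C, \overline{C}$.
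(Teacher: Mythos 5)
Your proposal is correct and follows essentially the same route as the paper: integrate the Bernoulli-type inequality for $\sqrt{\Etot}$ to get exponential decay, feed that decay into Lemmas~\ref{lem: bmrho estimate}, \ref{lem: esitmate of G with Etot}, and \ref{lem: T derivative of the EMax} to strictly improve each bootstrap condition, and close with the continuation criterion and a continuity argument. The only cosmetic imprecision is calling the $\mcG$ bound ``polynomial in $T$'' when it also contains a slowly growing exponential $e^{(\bm{\epsilon}-\delta_E)T/2}$, but this is still dominated by the factor $|\tau|\sim e^{-T}$, so the argument goes through unchanged.
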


\subsection{Decay rates}

As a direct result of the decaying total energy we conclude the following decay rates
\begin{equation}\label{eq: decay rates}
\begin{aligned}
 \Hnorm{g - \gamma}{6}
 &\less
 \sqrt{\varepsilon_0}
 \exp
  \left[
   \left(
   - 1 + \frac{\bm{\epsilon} - \delta_E }{2}
   \right)
   T
  \right]
 \,,
 \\
 \Hnorm{\Sigma}{5}
 &\less
 \sqrt{\varepsilon_0}
 \exp
  \left[
   \left(
   - 1 + \frac{\bm{\epsilon} - \delta_E }{2}
   \right)
   T
  \right]
 \,,
 \\
 \Hnorm{N-3}{6}
 &\less
 \sqrt{\varepsilon_0}  \exp (- T)
 \,,
 \\
 \Hnorm{X}{6}
 &\less
 \sqrt{\varepsilon_0} \exp (- T)
 \,,
 \\
 \mcE_{5,4} (f)
 &\less
  \sqrt{\varepsilon_0} 
  \exp
  \left(
    \frac{\bm{\epsilon} - \delta_\mcE}{2} T
  \right)
  =
  \blue{
  \sqrt{\varepsilon_0} 
  \exp
  \left(
    \frac{\epsilontot}{2} T
  \right)
  }
 \,,
 \\
 \bmrho_4 
 &\less
\sqrt{\varepsilon_0}
 \,,
 \\
 \EMax_6 
 &\less \varepsilon_0
 \,.
\end{aligned}
\end{equation}
\red{Recall that $\varepsilon_0 \leq C^{-1} \delta^2$ which improves the bootstrap assumption on $\Fnorm{F}{5} < \delta$}. Thus, as a result of the obtained estimates the rescaled metric converges against the fixed Einstein metric $\gamma$, i.e., the geometry converges against the Milne universe towards future. The decay rates for the components of the spacetimes metric imply future completenes as in the foregoing works \cite{AF17,BFK19}.

\appendix

\section{Relations between two different connections and their associated geometric objects} \label{App: Diff Tensor}
\begin{defn}[Difference Tensor]\label{Def: Diff Tensor}
Let $\nabla$ and $\overline{\nabla}$ be the associated covariant derivatives of the metrics $g$ and $\bar{g}$ with Christoffel symbols $\Gamma$ and $\overline{\Gamma}$, respectively. Then, schematically, we have  
\begin{equation*}
 \nabla
 \equiv
 \overline{\nabla}
 +
 \Upsilon
 \,,
\end{equation*}
where the \emph{difference tensor} $\Upsilon$ is a $(1,2)$-tensor equal to $\Upsilon :=\Gamma -  \overline{\Gamma} $.
\end{defn}
\begin{rem}\label{rem: diff tensor component}
Applying $\nabla_i$ on the component of a vector field $v^j$ yields
$$
 \nabla_i v^j
 =
 \overline{\nabla}_i v^j
 +
 \difften{j}{k}{i} v^k
 \,,
$$
whereas for a component of a one-form $w_j$ it yields
$$
 \nabla_i w_j
 =
 \overline{\nabla}_i w_j
 -
 \difften{k}{j}{i} w_k
 \,,
$$
where in a local chart the components of the difference tensor defined in Definition~\eqref{Def: Diff Tensor} read
$$
 \difften{i}{j}{k}
 =
 \frac{1}{2}  g^{i \ell}
 \left(
   \overline{\nabla}_j g_{k \ell}
   +
  \overline{\nabla}_k g_{\ell j}
   -
   \overline{\nabla}_\ell g_{j k} 
 \right)
 \,.
$$
\end{rem}
\begin{rem}\label{rem: diff tensor conf}
 If there is a conformal relation between $g$ and $\bar{g}$, i.e., we have
 $$
  \bar{g} = e^{2 \Phi} g
  \,,
 $$
then, the difference tensor takes the following form
\begin{equation}
 \difften{\alpha}{\beta}{\gamma}
 =
 \delta^\alpha_\gamma \partial_\beta \Phi
 +
 \delta^\alpha_\beta \partial_\gamma \Phi
 -
 g_{\beta \gamma} g^{\alpha \lambda} \p_\lambda \Phi
 \,.
\end{equation}
Note that 
\begin{equation}
 \difften{\alpha}{\beta}{\alpha}
 =
 d\cdot
 \p_\beta \Phi
\,,
\end{equation}
where $d$ is the dimension of the manifold.
\end{rem}

\begin{rem}
Two Riemann curvature tensors of the two metrics $g$ and $\bar{g}$ are related to each other by the following equation
$$
 \tensor{\text{Riem}[g]}{^i_{j k \ell}}
 =
 \tensor{\text{Riem}[\bar{g}]}{^i_{j k \ell}}
 +
 \overline{\nabla}_k  \difften{i}{j}{\ell}
 -
 \overline{\nabla}_\ell  \difften{i}{j}{\ell}
 +
 \difften{i}{k}{m} \difften{m}{j}{\ell}
 -
 \difften{i}{\ell}{m} \difften{m}{j}{k}
 \,.
$$
Schematically, we have
$$
 \text{Riem}[g]
 \equiv
 \text{Riem}[\bar{g}]
 +
 \overline{\nabla}\Upsilon
 +
 \Upsilon^2
 \,.
$$
\end{rem}

\section{Time Derivative of some geometric objects}\label{App: time derivative of geom}

In this appendix we list evolution equations for some geometric objects  with respect to time variable T:
\begin{align}
\p_T g^{i j}
&=
 - 2 N \Sigma^{i j} 
 -
 2 \left( \tfrac{N}{3} - 1 \right) g^{i j}
 -
 \mcL_X g^{ i j}
 \,,
 \\
 \p_T \Gamma^i_{j k}
 &=
 D_j (N \Sigma^i_k)
 +
 D_k (N \Sigma^i_j)
 -
 D^i (N \Sigma_{j k} )
 \nn
 \\
 &
 \quad
 +
 D_j \nhat \delta^i_k
 +
 D_k \nhat \delta^i_j
 -
  D^i \nhat g_{j k}
 \red{
  -
  D_j D_k X^i  + \tensor{R}{^i_{ k j \ell}} X^\ell
  }
\,,
\\
\p_T \absolg{\Sigma}^2
&=
 2
 \big[
 - 3 \left( N -\tfrac{1}{3} \right) \absolg{\Sigma}^2
 +
 2 \scalpr{\nabla X, \Sigma}{\Sigma}
 -
 N \scalpr{\Sigma}{\tfrac{1}{2} \mathcal{L}_{g, \gamma} (g - \gamma) + \mathbb{J} }
 \nn
 \\
 &
 \quad
 +
 \scalpr{\Sigma}{D^2 N}
 -
 \scalpr{\Sigma}{\mcL_X \Sigma}
 +
 N \tau \scalpr{\Sigma}{S}
 \big]
 \,,
 \\
 \left[ \p_T, \Delta \right] X^i
 &=
 \p_T g^{a b} D_a D_b X^i
 +
 g^{a b}
 \left[
 D_a \left( \p_T \Gamma^i_{ j b} X^j \right)
 -
 \p_T \Gamma^j_{ a b} D_j X^i
 +
 \p_T \Gamma^i_{ j a} D_b X^j
 \right]
 \,.
\end{align}
And we summarize some results in terms of $ \widehat{\p}_T$ using the fact that \red{$\tensor{(\mcL_X \Gamma)}{^i_{ j k}} =   D_j D_k X^i  - \tensor{R}{^i_{ k j \ell}} X^\ell$};
\begin{subequations}
\begin{align}
  \widehat{\p}_T g^{i j}
  &=
  - 2 N \Sigma^{i j} 
  -
 2 \left( \tfrac{N}{3} - 1 \right) g^{i j}
 \,,
 \\
 \widehat{\p}_T  \Gamma^i_{j k}
 &=
 D_j (N k^i_k)
 +
 D_k (N k^i_j)
 -
 D^i (N k_{j k} )
 \,,
\end{align}
\end{subequations}

\section{Connection coefficients of the Sasaki metric}\label{App: connection coefficients of Sasaki}

The connection coefficients of the Sasaki metric $\mathbf{g}$ with respect to the frame $\{ \theta_{\mathbf{a}} \}_{\mathbf{a} \leq 6}$ read
\begin{equation}\label{eq: connection coefficient wrt h}
\begin{aligned}
 \mathbf{\Gamma}^a_{b c}
 &=
 \Gamma^a_{b c}
 \,,
  \\
  \mathbf{\Gamma}^a_{J b}
 &=
 \tfrac{1}{2} p^k \tensor{\text{Riem}[g]}{^a_{b k (J-3)}} 
 =
 \mathbf{\Gamma}^a_{ b J} 
 \,,
 \\
  \mathbf{\Gamma}^J_{b c}
 &=
  \tfrac{1}{2} p^k \tensor{\text{Riem}[g]}{^{J-3}_{k bc }} 
 \,,
\end{aligned}
\quad
\begin{aligned}
 \mathbf{\Gamma}^I_{J a}
 &=
  \Gamma^{I - 3}_{(J - 3) a}
 \,,
  \\
 \mathbf{\Gamma}^I_{a J}
 &=
 0
 =
  \mathbf{\Gamma}^a_{I J}
  \,,
  \\
   \mathbf{\Gamma}^I_{J K}
   &=
   0
   \,,
\end{aligned}
\end{equation}
where small letters stand for the numbers $\{1, 2, 3\}$ and capital letters take the values $\{4, 5, 6\}$.

\section{Commutation Relations}\label{App: Commutation}
The commutation relations between the horizontal and vertical derivatives on $TM$   are heavily used in the present paper. We gather them in the following:

\begin{equation}
\begin{aligned}
 [\A_a, \A_b ]
 &=
 p^\ell \, \tensor{\riem}{^k_{\ell b a}} \, \B_k
 \,,
 \\
 \left[ \A_a, \B_b \right] 
 &=
 \Gamma^k_{ba} \B_k
 \,,
 \\
 \left[ \B_a, \B_b \right]
 &=
 0
 \,,
 \\
 \left[ \B_j , p^i \B_i \right] f
 &=
 \B_j f
 \,,
 \\
 \left[ \A_j , p^i \B_i \right] f
 &=
 0
 \,,
 \\
 \left[ \mathbf{\Gamma}^\mathbf{a}_{\mathbf{b} \mathbf{c}}, p^i \B_i \right] f
 &=
 \begin{cases}
  - \mathbf{\Gamma}^\mathbf{a}_{\mathbf{b} \mathbf{c}}
  \quad
  &\text{if} \quad 
  \{ \mathbf{a} \geq 4 ; \mathbf{b}, \mathbf{c} \leq 3 \}
  \,\, \text{or} \,\,
   \{ \mathbf{a} \leq 3 ; 4 \leq \mathbf{b} + \mathbf{c} \leq 9 \}
   \,,
  \\
  0
   \quad 
  & \text{else}
  \,.
 \end{cases}
\end{aligned}
\end{equation}

\section{Derivatives of Momentum  functions}\label{App: momentum derivatives}

We summarize the horizontal and vertical derivatives on $TM$ applied to some momentum functions in the following:  
\begin{subequations}
\begin{align}
\A_a \absolg{p}
 &=
 0
 \,,
 \\ 
 \A_a
\scalpr{\xhat}{p}
 &=
 p^i D_a \xhat_i
 \,,
 \\
 \A_a \phat
 &=
 \frac{1}{\phat}
 \left[
  \tau^2 \scalpr{\xhat}{p} \, p^i D_a \xhat_i
  -
 \frac{1}{2}  
   \left(
    1 + \tau^2 \absolg{p}^2
   \right)  
   D_a |\xhat|^2
 \right]
 \,,
 \\
 \A_a \pbar
 &=
 -
 \frac{\A_a \phat - \tau \A_a \scalpr{\xhat}{p}}{\phat - \tau  \scalpr{\xhat}{p}}
 \cdot \pbar
 \,,
 \\
 \B_a p^0
 &=
 \frac{1}{\phat}
 \left(
  \tau \xhat_a p^0
  +
  \tau^2 N^{-1} \red{g_{a b} p^b} 
 \right)
 \,,
 \\
 \B_a \phat
 &=
 \frac{\tau^2}{\phat}
  \left[
   \scalpr{\xhat}{p} \xhat_a
   +
   \left( 1 - \absolg{\xhat}^2 \right)  \red{g_{a b} p^b} 
  \right]
 \\
 \B_a \mathcal{P}_k
 &=
 \frac{1}{\phat}
 \left(
  \tau \xhat_k \mathcal{P}_a - N^{-1} g_{k a} +   \mathcal{P}_k \,  \B_a \phat 
 \right)
 \,.\label{eq: B p0}
\end{align}
\end{subequations}
\begin{rem}
$
 \B_a p^0
 =
 -  \tau^2 \mathcal{P}_a
 \,.
$
\end{rem}
Moreover, one finds
\begin{align}
 \B_a 
  \Bigg(
    \frac{\absolgal{ p + \tau^{-1} p^0 X }^2}{\phat}
  \Bigg)
 &=
 \frac{2}{\phat} 
  \left(
    \red{g_{a b} p^b}  + \tau^{-1} p^0 X_a
  \right) 
  \left(
    1 + \frac{\tau}{N \phat} \scalpr{X}{p} + \frac{p^0}{N \red{\phat} } \absolg{X}^2
  \right)
  \nn
  \\
  &
  \quad
  -
  \tau^2
   \frac{\absolgal{ p + \tau^{-1} p^0 X }^2}{\phat^3}
  \left[
    \scalpr{\xhat}{p} \xhat_a + ( 1 - \absolg{X}^2 )  \red{g_{a b} p^b} 
  \right] 
  \,.
\end{align}

\section{Evolution equations for momentum functions}\label{app: T derivatives of momentum functions}
For the momentum functions defined in Subsection~\ref{subsub: mass-shell relations} we can compute their time derivatives
\begin{align}
 \p_T \phat
 &=
 \frac{1}{2 \phat}
 \Big[
  2 \tau^2 \scalpr{\xhat}{p}^2
  +
  2 \tau^2 \scalpr{\xhat}{p}
   \Big(
    \langle \xhat, p \rangle_{\dot{g}}
    +
    \tfrac{1}{N}
    \langle p, \p_T X - \xhat \p_T N \rangle_g
   \Big)
\nn
\\
&
\quad
 -
 \left( 1 + \tau^2 \absolg{p}^2 \right)
  \Big(
   \absol{\xhat}_{\dot{g}}^2
   +
   \tfrac{2}{N}
   \langle \xhat, \p_T X - \xhat \p_T N \rangle_g
  \Big)
  +
  \tau^2 ( 1 - \absolg{\xhat}^2 )
  \left(
   2 \absolg{p}^2 + \absol{p}^2_{\dot{g}}
  \right)
 \Big]
 \,,
\end{align}
and
\begin{align}
 \p_Tp^0
 &=
 2 p^0
 +
 \frac{1}{2 N \phat}
 \Big[
   4 (p^0)^2  (- N^2 + \absolg{X}^2)
   +
   6 p^0 \tau \scalpr{p}{X}
   +
   2\tau^2 \absolg{p}^2
 \nn
 \\
 &
 \quad
 +   
 (p^0)^2 \p_T(- N^2 + \absolg{X}^2)
 +
 2 \tau p^0 \langle p, X  \rangle_{\dot{g}} 
 +
 2 \tau p^0 \scalpr{p}{\p_T X} 
 +
 \tau^2 \absol{p}_{\dot{g}}^2
 \Big]
 \,.
\end{align}
Moreover, we have
\begin{equation}\label{eq: T derivative of mathcal P}
\begin{aligned}
 \p_T \mathcal{P}_k
 &=
 -\frac{1}{N \phat}
  \left[
    \tau^{-1}
    \left(
    X_k 
    +
    \p_T X_k 
    \right) 
    p^0
    +
    \tau^{-1} X_k \,  \p_T p^0
    +
    \left(
       \p_T g_{k \ell} 
       +
      2 g_{k \ell} 
    \right)   
     p^\ell
  \right]
 \\
 &
 \quad
 -
  \left(
   \frac{\p_T N}{N} + \frac{\p_T \phat}{\phat}
  \right) 
  \mathcal{P}_k
  \,,
\end{aligned}
\end{equation}
and
\begin{equation}
\begin{aligned}
 \p_T \absolgal{ p + \tau^{-1} p^0 X }^2
 &=
 \absolgdot{p + \tau^{-1} p^0 X }^2
 +
 4 \absolg{p}^2
 +
 4 \tau^{-1} \scalpr{p^0 X}{p}
 \\
 &
 \quad
 +
 2 \tau^{-1} \scalpr{p + \tau^{-1} p^0 X }{p^0 X}
 \\
 &
 \quad
 +
 2 \tau^{-1} \scalpr{ p + \tau^{-1} p^0 X }{ (\p_T p^0) X + p^0 \p_T X}
 \,.
\end{aligned}
\end{equation}
\section{Divergence Identity}
In terms of the unrescaled variables the continuity equations read (cf.~ \cite{Rendall-Book})
\begin{subequations}
\begin{align}
 \p_\tau \tilde{\rho}
 -
 \widetilde{X}^a D_a \tilde{\rho}
 -
 \widetilde{N} \tau \tilde{\rho}
 +
 \widetilde{N}^{-1} D_a (\widetilde{N}^2 \tilde{j}^a )
 -
 \widetilde{N} \tilde{k}_{a b} \mathbf{T}^{a b}
 &=
 0
 \,,
 \\ 
 \p_\tau \tilde{j}^a
 -
 \widetilde{X}^b D_b \tilde{j}^a
 -
 \tilde{j}^b \widetilde{D}^a \widetilde{X}_b 
 -
 \widetilde{N} \tau \tilde{j}^a
 +
 D_b ( \widetilde{N} \mathbf{T}^{a b} )
 -
 2 \widetilde{N} \tensor{\tilde{k}}{^a_b} \, \tilde{j}^b
 +
 \tilde{\rho} \, \widetilde{D}^a \tilde{N}
 &=
 0
 \,,
\end{align}  
\end{subequations}
where we used the notation $\widetilde{D}^a = \tilde{g}^{a b} D_b$.
With the definitions $\rho = 4 \pi \tilde{\rho} \cdot \tau^{-3}$, $j^a = 8 \pi \tilde{j}^a \cdot \tau^{-5}$, and $T^{a b } = \red{ 8 \pi } \mathbf{T}^{a b} \cdot \tau^{-7}$ the continuity equations take the following form
\begin{subequations}\label{eq: divergence identity}
\begin{align}
 \p_T \rho
 &=
 (3 - N) \rho
 -
 X^a D_a \rho
 +
 \red{\frac{1}{2} } \tau N^{-1} D_a (N^2 j^a)
 -
\red{ \frac{1}{6} } \tau^2 N g_{a b} T^{a b}
 -
\red{ \frac{1}{2} } \tau^2 N \Sigma_{a b} T^{ab}
 \,,\label{eq: continuity for rho}
 \\
 \p_T j^a
 &=
 \frac{5}{3}
 \left( 3 - N \right) j^a
 -
 X^b D_b j^a
 -
 j^b D^a X_b
 +
 \tau D_a (N T^{a b})
 -
 2 N \Sigma^a_b \, j^b
 +
 \red{2} \tau^{-1} \rho D^a N
 \,.\label{eq: continuity eq for j}
\end{align}  
\end{subequations}

\section{Time derivative of $\protect\underline{\eta}$ }

From the definition of $\underline{\eta}$ in \eqref{eq: etabar} we compute its time derivative by decomposing it into Vlasov and Maxwell parts $\p_T \underline{\eta} = \p_T\underline{\eta}^\text{V} + \p_T\underline{\eta}^\text{M}$, where
\begin{equation}\label{eq: T derivative of eta Vlasov}
\begin{aligned}
 \red{(4 \pi)^{-1}} \, \p_T \underline{\eta}^\text{V}
 &=
 \red{(4 \pi)^{-1}
 \left(
   3 \nhat   - D_i X^i + 6
 \right)
 \underline{\eta}^\text{V}
 }
 +
 \tau  \int
  \frac{p^a}{p^0} \A_a f \frac{\absolgal{ p + \tau^{-1} p^0 X }^2}{\phat}
  \, d\mu_p
 \\
 &
 \quad
 + 
 \p_T X^a 
 \Bigg\{
   2 \tau^{-1} 
   \int 
      f \frac{p^0}{\phat} 
      \left(
        \red{ g_{a b} p^b } + \tau^{-1} p^0 X_a 
      \right) 
    \, d\mu_p
 \\
 &
 \quad
 +
 4 \tau^{-1}
  \int f
     \frac{p^0}{2 N \phat^2} \scalpr{p + \tau^{-1} p^0 X}{X}
     \left(
       p^0 X_a + \tau \red{g_{a b} p^b}
     \right)
     d\mu_p
  \\
  &
  \quad
  -
  \int f 
  \frac{\absolgal{ p + \tau^{-1} p^0 X }^2}{N \phat^3}
  \left[
    \tau^2 \scalpr{\xhat}{p} \red{ g_{a b} p^b }
    -
    \left( 1 + \tau^2 \absolg{p}^2 \right) \xhat_a
  \right]
    d\mu_p
  \\
  &
  \quad
  \red{
  -
  \tau^{-1}  \int f  \,
  \B_a 
   \left(
      \frac{p^0}{ \phat}  \absolgal{ p + \tau^{-1} p^0 X }^2
   \right)
    d\mu_p
  }
 \Bigg\}\\
& \quad
 +
 \p_T N
 \Bigg\{
   \tau^{-1} \xhat^a \int f \,
    \B_a 
   \left(
      \frac{p^0}{ \phat}  \absolgal{ p + \tau^{-1} p^0 X }^2
   \right)
    d\mu_p
  \\
  &
  \quad
  -
  2 \tau^{-1} \int f
   \left(   \frac{p^0}{\phat} \right)^2
   \scalpr{p + \tau^{-1} p^0 X}{X} 
   \, d\mu_p
  \\
  &
  \quad
  +
  \int f
   \frac{\absolgal{ p + \tau^{-1} p^0 X }^2}{N \phat^3} 
   \left[
    \tau^2 \scalpr{\xhat}{p}^2 - \left( 1 + \tau^2 \absolg{p}^2  \right) \absolg{\xhat}^2
   \right]
    d\mu_p
 \Bigg\}
 \\
 &
 \quad
 +
 2 \tau^{-1} \int f
 \frac{\scalpr{p + \tau^{-1} p^0 X}{X}}{N \phat^2}
  \Big[
      2 (p^0)^2 \left( -N^2 + \absolg{X}^2  \right)
      +
      \tau p^0 \scalpr{p}{X}
      \\
      &
      \qquad\qquad
      -
      \tau^2 \absolg{p}^2
      +
       \frac{1}{2}  |X|^2_{\dot{g}}
      +
      \tau p^0 \langle p, X \rangle_{\dot{g}}
      +
      \frac{1}{2} \tau^2 |p|_{\dot{g}}^2
    \Big]
    d\mu_p
 \\
 &
 \quad
 +
 \int 
  \frac{f}{\phat}
   \left(
     \left|p + \tau^{-1} p^0 \red{X} \right|_{\dot{g}}^2
     +
     6 \tau^{-1} p^0 \scalpr{p + \tau^{-1} p^0 X}{X}
   \right)
    d\mu_p
 \\
 &
 \quad
 +
 2 \Gamma^i_j \int f p^j
  \B_i
   \left(
      \frac{ \absolgal{ p + \tau^{-1} p^0 X }^2 }{ \phat}
   \right)
    d\mu_p
 \\
 &
 \quad
  +
  \tau
  \left(
    \Sigma_{j k} + \tfrac{1}{3} g_{j k}
  \right)
  X^i \int f
   \B_i
   \left(
      \frac{p^j p^k}{ \pbar \phat}  \absolgal{ p + \tau^{-1} p^0 X }^2
   \right)
   d\mu_p
  \\
  &
  \quad
  \blue{
  +
  \tau q
 \int f \, 
  \left[
    \left(
    g^{i j} - \xhat^i \xhat^j
    \right)
    F_{a j}
    +
    \frac{\tau}{N} F_{a 0} \xhat^i
  \right]
  \B_i 
  \left(   
    \frac{p^a}{p^0}
  \right)  
  \frac{\absolgal{ p + \tau^{-1} p^0 X }^2}{\phat}
    \, d\mu_p
  }   
 \\
 &
 \quad
 \blue{
 +
  \tau q \int f \F^i
 \B_i
   \left(
    \frac{\absolgal{ p + \tau^{-1} p^0 X }^2}{\phat}
   \right)
    d\mu_p 
  } 
 \\
 &
 \quad
 +
 \int f
  \frac{\absolgal{ p + \tau^{-1} p^0 X }^2}{2 \phat^3}
  \Big[
    2 \tau^2 \scalpr{\xhat}{p}^2
    -
    2 \tau^2 \scalpr{\xhat}{p} \langle \xhat, p \rangle_{\dot{g}}
  \\
  &
  \qquad\qquad  
    +
    \left( 1 + \tau^2 \absolg{p}^2  \right) \absolgdot{X}^2
    +
    \tau^2 \left( 1 - \absolg{\xhat}^2  \right)
   \left( 2 \absolg{p}^2 - \absolgdot{p}^2 \right)
  \Big]
    d\mu_p   
 \\
 &
 \quad
 \red{
 +
 \tau^{-1} \left( \Gamma^a + \p_T X^a - \xhat^a \p_T N \right) \int f
  \B_a  
  \left(
      \frac{p^0}{ \phat}  \absolgal{ p + \tau^{-1} p^0 X }^2
   \right)
    d\mu_p  
   }
   \,,
\end{aligned}
\end{equation}
and
\begin{align}
 \blue{(4 \pi)^{-1}} \p_T \underline{\eta}^{\text{M}}
 &=
 \p_T g^{i j}
  \left[
    - \frac{1}{2} \tau N^{-2} F_{0 i} F_{0 j}
    -
    \frac{3}{2} N^{-2} X^k F_{0 i} F_{j k}
    +
    \frac{1}{4} \tau^{-1}
     \left(
      3 g^{k \ell} - 2 N^{-2} X^k X^\ell
     \right)
     F_{i k} F_{j \ell}
  \right]
 \nn
 \\
 &
 \quad
 +
 g^{i j}
  \Big[
    \tau N^{-3} 
     \left( \tfrac{1}{2} N + \p_T N \right) F_{0 i} F_{0 j}
    -
    \tau N^{-2} F_{0 i} \p_T F_{0 j}
    \nn
  \\
  &
  \quad
    +
    N^{-3} 
    \left( 3 \p_T N X^k - \tfrac{3}{2} N \p_T X^k \right) F_{0 i} F_{j k}
  -
  \frac{3}{2}  N^{-2} X^k
   \left( 
     \p_T F_{0 i} F_{j k} + F_{0 i} \p_T F_{j k}
   \right)
  \nn
  \\
  &
  \quad
  +
  \frac{1}{4} \tau^{-1}
  \left(
   3 g^{k \ell} 
   -
   2 N^{-2} X^k X^\ell
   +
   3 \p_T g^{k \ell}
   +
   4 N^{-3} \p_N X^k X^\ell
   -
   4 N^{-2} X^k \p_T X^\ell
  \right)
  F_{i k} F_{j \ell}
  \nn
  \\
  &
  \quad
  +
  \frac{1}{2} \tau^{-1}
  \left(
   3 g^{k \ell} - 2N^{-2} X^k X^\ell
  \right)
  F_{i k} \p_T F_{j \ell}
  \Big] 
  \,,\label{eq: T derivative of eta Maxwell}
\end{align}
where
\begin{align*}
 \p_T F_{0 i}
 &=
 - \tau^{-1} 
  \left(
    \p_T^2 \omega_i  - \p_i \p_T A_T 
  \right)
  \,,
\\
\p_T F_{i j}
 &=
 \p_i \p_T \omega_j - \p_j \p_T \omega_i
 \,,
\end{align*}
with
\begin{align*}
 \p_T A_T
 &=
  \red{N}^2 \p_{e_0} \Psi
  -
  N X^i \p_i \Psi
  +
  \p_T N \cdot  \Psi
  -
  \p_T X^i  \omega_i
  -
  X^i \p_T \omega_i
  \,,
  \\
  \p_T \omega_i
  &=
  N \mcL_{e_0} \omega_i
  -
  \mcL_X \omega_i
  \,,
  \\
  \p_T^2 \omega_i
  &=
  \red{ N^2 ( \mcL_{e_0} ( \mcL_{e_0} \omega))_i
  +
  \p_T N
  ( \mcL_{e_0}  \omega)_\ell 
  }
 \nn
 \\
 & \quad
  \red{
  -
  N 
 \left[
  X^i \p_i (\mcL_{e_0} \omega)_\ell
  +
  (\mcL_{e_0} \omega)_i  \p_\ell X^i
 \right]
 -
 \p_T
  \left(
    \omega_i \p_\ell X^i
    +
    X^i \p_i \omega_\ell
  \right)
  }
 \,.
\end{align*}
For the latter equations we used the relations between $A$, $\Psi$, $\omega_i$, and $e_0$.

\bibliographystyle{abbrv}

\bibliography{%
Milne-EVM-Final.bib%
}

\end{document}